\ifdefined\XeTeXversion\else\pdfoutput=1\fi
\documentclass[11pt]{article}

\usepackage[utf8]{inputenc}
\usepackage[T1]{fontenc}
\usepackage{lmodern}
\usepackage[margin=1in]{geometry}
\usepackage{amsmath,amssymb,amsthm}
\usepackage{booktabs}
\usepackage{enumitem}
\usepackage{listings}
\usepackage{xcolor}
\usepackage[colorlinks=true,linkcolor=blue!50!black,citecolor=green!40!black,urlcolor=blue!60!black]{hyperref}

\newcommand{\poly}{\operatorname{poly}}

\newcommand{\doi}[1]{DOI: \href{https://doi.org/#1}{\nolinkurl{#1}}}

\theoremstyle{plain}
\newtheorem{theorem}{Theorem}[section]
\newtheorem{lemma}[theorem]{Lemma}
\newtheorem{corollary}[theorem]{Corollary}
\newtheorem{proposition}[theorem]{Proposition}
\newtheorem{observation}[theorem]{Observation}
\newtheorem*{projthm}{Projection Theorem}
\theoremstyle{remark}

\theoremstyle{definition}
\newtheorem{example}{Example}[section]
\newtheorem*{syncprop}{Boundary synchronization (SI)}

\title{Exact Algorithms and Complexity of the Smallest Grammar Problem}
\author{Chen Xiangle\\[2pt]\small Tsinghua University\\\small\texttt{chenxiangle25@mails.tsinghua.edu.cn}}
\date{27 October 2026}

\begin{document}
\maketitle

\begin{abstract}
The Smallest Grammar Problem (SGP) asks for a smallest straight-line grammar generating a given string, where the size of a grammar is the total length of its right-hand sides. Our algorithms rest on a structural fact: every string has an optimal grammar in which one can choose an occurrence of the expansion of each auxiliary nonterminal so that the chosen occurrences are pairwise disjoint. Hence the dictionary of an optimal grammar can be read off from a partition of the input, and once the dictionary is known, the rules can be computed by dynamic programming.

Our main result is a deterministic exact algorithm for strings of length $n\ge2$ over at most $\sigma$ symbols, where $2\le\sigma\le n$. Writing $R=\log_\sigma n$, it runs in polynomial space and time
\[
 n^{O(1)}2^{O\left(\frac{n\log\log n}{R\sqrt{\log n}}\right)}
 =n^{O(1)}2^{O\left(\frac{n\log\sigma\,\log\log n}{(\log n)^{3/2}}\right)}.
\]
For fixed alphabets this is $2^{O(n\log\log n/(\log n)^{3/2})}$, and for arbitrary alphabets $2^{O(n\log\log n/\sqrt{\log n})}$, up to polynomial factors. The algorithm encodes disjoint-occurrence certificates as a sparse tensor network and applies a Boolean tensor contraction theorem from a recent OpenAI preprint.

Without this external theorem, the same structural fact yields self-contained linear-space algorithms running in $O^*(1.872258^n)$ time, improving the $O^*(3^n)$ bound of Casel et al., and in $2^{O((n/R)\log(R+1))}$ time. On the complexity side, we prove that SGP is NP-complete over a binary alphabet, answering a question of Casel et al., and give reductions from Vertex Cover with tunable alphabet size. These reductions turn any faster SGP algorithm into an explicit SAT algorithm; we state these transfers unconditionally, including the loss from sparsifying general formulas.
\end{abstract}

\noindent\textbf{Keywords.} Smallest grammar problem; grammar-based compression; exact algorithms; alphabet encoding; satisfiability; subexponential algorithms.

\section{Introduction}

When a string contains many repeated fragments, a straight-line grammar can define each fragment once and refer to it thereafter through a nonterminal. Since rules may refer to other rules, repetitions nested several levels deep can be represented as well. Such grammars are the basic representation in grammar-based compression and also support indexing directly on the compressed text~\cite{CNP21}. A smallest grammar for a string $T$ marks the limit of this form of compression and serves as the benchmark for approximation algorithms. Finding it is the \emph{Smallest Grammar Problem} (SGP); throughout, the size of a grammar is the total number of symbols on the right-hand sides of its rules.

It is precisely the nesting of rules that makes exact algorithms difficult. Two useful fragments may contain one another, their occurrences in the input may overlap, and once a fragment is selected, one must still decide which shorter fragments define it. A direct enumeration of rules must therefore handle two issues at once: which dictionary to use, and how the rules depend on one another. Charikar et al.~\cite{CLL05} studied approximation algorithms and the complexity of SGP. Casel et al.~\cite{CFGGS21} showed that the problem remains NP-complete even over fixed alphabets of $24$ or $17$ symbols, and gave an $O^*(3^n)$-time exact algorithm for general grammars, where $n=|T|$.

\subsection{The structural lemma}

All our algorithms are based on the following property of optimal grammars. The \emph{dictionary} of a grammar is the set of strings generated by its auxiliary nonterminals, that is, by all nonterminals other than the start symbol.

\begin{lemma}[Disjoint representatives; Theorem~\ref{thm:disjoint}]\label{lem:intro-disjoint}
Every string $T$ has an optimal grammar $G$ with the following property: for each auxiliary nonterminal $A$ of $G$ one can choose an occurrence of its expansion in $T$ such that the chosen occurrence intervals are pairwise disjoint.
\end{lemma}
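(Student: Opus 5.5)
We take, among all optimal (smallest) grammars for $T$, one that is \emph{extremal} for a secondary potential. Then we show that if no disjoint system of representative occurrences exists, the grammar can be modified into an optimal grammar with strictly smaller potential. We may assume that every auxiliary nonterminal is used at least twice on right-hand sides. A nonterminal used once can be inlined without increasing the size: inlining removes the rule of length $\ell$ and replaces one symbol by $\ell$ symbols, a net change of $-1$. So such a nonterminal cannot occur in an optimal grammar. In particular, each auxiliary $A$ occurs at least twice in the parse tree of $T$, and every expansion $\mathrm{exp}(A)$ has length at least $2$.

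The natural candidate for the representatives comes from the parse tree. Each auxiliary $A$ labels some nodes of the derivation tree of the start symbol, and each node spans an interval of $T$. We choose representatives greedily, processing nonterminals in decreasing order of expansion length (a topological order of the grammar DAG, reversed). For each $A$ we pick a parse-tree occurrence whose interval is disjoint from all intervals already chosen.

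Parse-tree intervals form a laminar family: two such intervals are either disjoint or nested. So a conflict only arises when every available occurrence of $A$ lies inside an interval already chosen for some longer $B$. To rule this out we need a counting invariant. Under a suitable extremal choice, for every set $S$ of auxiliaries closed under ``is contained in the chosen representative,'' the number of parse-tree occurrences is large enough to pass Hall's condition. Concretely, I would aim for a Hall-type statement: for any set $\mathcal{A}$ of auxiliaries, the union of their parse-tree occurrence intervals has total length at least $\sum_{A\in\mathcal{A}}|\mathrm{exp}(A)|$, in a form that allows a system of disjoint representatives. Rather than matching intervals to positions, I would phrase it as an inductive selection on the laminar forest. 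Process the forest top-down. Each node labeled $B$ that is \emph{not} chosen as a representative leaves all of its descendants free, because its subtree is a copy of $B$'s subtree. Since $B$ is used at least twice, there is a non-chosen copy of $B$'s whole subtree. Inside that copy we find fresh occurrences of every nonterminal hanging below $B$. By induction on depth, every nonterminal reachable from the start symbol gets a representative disjoint from the others. Roughly: assign $B$ one occurrence, and recursively place $B$'s descendants inside a \emph{different} occurrence of $B$, or inside portions of $T$ outside all chosen intervals.

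The main obstacle is that a nonterminal $C$ may appear only inside $B$ (its occurrences are all within copies of $B$). Likewise, $B$ may have exactly two occurrences, one of which is itself nested inside another chosen representative. Then the free copies can run out. This is where optimality plus an extremal choice must enter. First I would try a potential such as the sum over auxiliaries of $|\mathrm{exp}(A)|$, taken lexicographically, or the number of rules, among optimal grammars. If the greedy fails, some $C$ has all its occurrences inside chosen representatives of few nonterminals. One then shows that a rule rewriting can reduce the potential without increasing size. The rewriting either inlines $C$ into its unique parent context, or restructures $B$'s rule to expose $C$ elsewhere. Inlining costs $(\text{uses}(C)-1)(|\mathrm{rhs}(C)|-1)-1$ extra symbols, so it helps only when $C$ has few uses. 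The delicate counting step is to show that if $C$ occurs only inside a single copy-tree of $B$, then its uses are effectively uses of $B$ and a charging argument applies. I expect this exchange argument to be the crux. If the pure parse-tree approach fails, I would allow representatives that are arbitrary occurrences of $\mathrm{exp}(A)$ in $T$, not just parse-tree nodes. The stated lemma permits this, and it gives more freedom in the exchange step.
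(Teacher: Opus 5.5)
Your proposal has a genuine gap. It never shows that a disjoint system of representatives exists. The step you call the crux, an extremal choice among optimal grammars plus an exchange argument for when ``the free copies run out,'' is only sketched. It is also aimed at an obstacle that never occurs. The missing idea is to count the occurrences of $A$ on right-hand sides, where $r_G(A)\ge2$, rather than its occurrences in the parse tree. Your worrying scenario is in fact impossible. Suppose $B$ has only two parse-tree occurrences. An auxiliary parent already has at least two parse-tree copies, and $r_G(B)\ge2$ would then force a third occurrence, so no auxiliary rule references $B$. Hence both references lie in the start rule, and both occurrences are children of the root, not nested inside any other representative.

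The paper's proof needs nothing beyond $r_G(A)\ge2$ for every auxiliary $A$ (Observation~\ref{obs:twice}). Traverse the derivation tree depth-first, expand each nonterminal only at its first encounter, and keep every later encounter as an unexpanded leaf. Every reachable rule is then expanded exactly once. So each of the $r_G(A)$ right-hand-side occurrences of $A$ appears exactly once in the pruned tree: one as the expanded node and $r_G(A)-1\ge1$ as leaves. The leaves spell $T$ from left to right, so they are pairwise disjoint, and choosing one leaf per auxiliary nonterminal gives the representatives. This is your idea of assigning $B$ one occurrence and placing its descendants inside a different one, made globally consistent. The descendants live in the unique expanded copy, which never lies inside a leaf.

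Your greedy also works as stated, and you could close it directly. When $B$ is processed, let $f(B)$ be the number of $B$-nodes with no chosen ancestor. Then $f(B)=\sum_X r_X(B)\,g(X)$, where $r_X(B)$ counts occurrences of $B$ in $\alpha_X$ and $g(X)$ counts $X$-nodes that are neither chosen nor below a chosen node. We have $g(S)=1$. For an auxiliary $X$ referencing $B$, $g(X)=f(X)-1$, because nodes chosen after $X$ have expansions no longer than $\operatorname{val}(X)$ and so cannot be proper ancestors of $X$-nodes. Induction in processing order then gives $f(B)\ge r_G(B)\ge2$, so the greedy never gets stuck, whatever choices it makes.
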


The proof is short. Traverse the derivation tree and expand each nonterminal only at its first encounter. In an optimal grammar every auxiliary nonterminal is referenced at least twice, so it also appears as an unexpanded leaf, and the leaves of the pruned tree occupy disjoint intervals of $T$. The lemma decouples the two issues above, and it has three consequences that we use repeatedly.
\begin{enumerate}[label=(\roman*),leftmargin=*]
\item \emph{The dictionary can be read off from a partition of $T$.} Take the representatives as blocks and every other position as a block of its own. For a fixed dictionary, a shortest parsing of every rule can be computed by dynamic programming (Lemma~\ref{lem:parse}), so enumerating all $2^{n-1}$ partitions already solves SGP in $O^*(2^n)$ time (Theorem~\ref{thm:2n}).
\item \emph{The dictionary has total length at most $n$.} In particular, there are at most $n/L$ dictionary strings of length at least $L$, while over an alphabet of size $\sigma$ there are fewer than $\sigma^L$ distinct strings shorter than $L$.
\item \emph{An optimal grammar has a certificate made of input intervals.} Placing every rule on its representative, each rule definition and each reference becomes an interval of $T$. A grammar of size $K$ thus selects only $O(K)$ intervals among polynomially many candidates.
\end{enumerate}
Consequence~(i) leads to the $O^*(1.872258^n)$ algorithm, (ii) to the alphabet-sensitive enumeration, and (iii) to our main algorithm.

\subsection{Our results}

Let $g(T)$ denote the size of a smallest grammar for $T$, and let $n$ be the input length. All our algorithms handle fully nested rules and output an optimal grammar.

\paragraph{The main algorithm.}
Our main result uses consequence~(iii) together with the exact-counting and tensor-contraction results of a recent OpenAI preprint~\cite{sat}. We encode the certificate conditions as local tensors, index only the sparse set of selected wires, and apply the contraction theorem with explicit bounds on degree and arithmetic precision. A simple construction bounds the optimum by $K=O(n/\log_\sigma n)$ (Lemma~\ref{lem:sparse-size-bound}), and this budget gives an algorithm that is uniform over the full alphabet range.

\begin{theorem}[Uniform alphabet-sensitive algorithm]\label{thm:subexp-intro}
Let $n\ge2$, $2\le\sigma\le n$, and $R=\log_\sigma n$.
Using the Boolean tensor contraction theorem of OpenAI~\cite{sat}, SGP on
strings over at most $\sigma$ symbols can be solved deterministically in
polynomial space and time
\[
 n^{O(1)}2^{O\left(
 \frac{n\log\log(n+2)}{R\sqrt{\log(n+2)}}
 \right)}.
\]
All constants are independent of $\sigma$. Unary inputs are covered by
taking $\sigma=2$.
\end{theorem}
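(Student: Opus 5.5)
We prove Theorem~\ref{thm:subexp-intro} by combining consequence~(iii) of Lemma~\ref{lem:intro-disjoint} with the size bound of Lemma~\ref{lem:sparse-size-bound}, and then handing a sparse Boolean tensor network to the contraction theorem of~\cite{sat}. Fix a target size $k\le K$, where $K=O(n/R)$ comes from Lemma~\ref{lem:sparse-size-bound}. For each such $k$ we decide whether $T$ has a grammar of size at most $k$ admitting disjoint representatives; by Theorem~\ref{thm:disjoint} such an optimal grammar exists. An optimal grammar is then recovered by self-reduction: we fix interval choices one at a time and re-query, which costs only polynomially many calls.

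\textbf{The certificate.} We place every rule on its chosen representative interval. The grammar becomes a set of $O(k)$ intervals of $T$, chosen among the $O(n^2)$ candidates, together with the following local data:
\begin{itemize}[leftmargin=*]
\item a flag marking each selected interval as a definition or a reference;
\item for each rule body, its parse into consecutive child intervals or terminals;
\item for each reference, a pointer to a definition interval with equal content.
\end{itemize}
The constraints are local. Representatives are pairwise disjoint, each body is a contiguous tiling of its interval, each reference matches the content of its definition (equality of substrings is precomputed in $n^{O(1)}$), and the total right-hand-side length equals $k$. We also need acyclicity, which follows because every child is a proper subinterval or else refers to a strictly shorter string. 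We encode each constraint as a small Boolean tensor over wires indexed by the selected intervals. Only $O(k)$ wires are active, and the rest are forced to a ``null'' value. The grammar exists iff the contraction is nonzero.

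\textbf{Applying the theorem.} We instantiate the contraction theorem with the following parameters:
\begin{itemize}[leftmargin=*]
\item the number of active wires, $m=O(k)=O(n/R)$;
\item the wire alphabet, of size $n^{O(1)}$, i.e.\ $O(\log n)$ bits per wire;
\item the tensor degree, which is bounded;
\item the arithmetic precision, which is polynomial.
\end{itemize}
The theorem should yield deterministic polynomial-space contraction in time $n^{O(1)}2^{O(m\log\log n/\sqrt{\log n})}$. Substituting $m=O(n/R)$ gives the claimed bound, and every constant is independent of $\sigma$. For unary inputs we take $\sigma=2$; then $R=\log_2 n$, which is consistent because the true optimum is $O(\log n)$ anyway.

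\textbf{Main obstacle.} The hard step is making the network genuinely sparse, so that the exponent depends on $m=O(n/R)$ and not on the $\Theta(n^2)$ candidate intervals. This matters most for the pointer and tiling constraints, which couple distant intervals. The content-equality check needs care as well: it must be done against the representative copy so that no global consistency tensor is required.

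My first attempt would order the selected intervals left to right and index wires by rank, $1,\dots,O(k)$. Each wire carries its interval endpoints, so a pointer is a rank value. The check that a reference equals its definition is then a binary tensor between two ranked wires, whose entries are read off from precomputed substring equality. Tiling becomes a chain of binary tensors between consecutive children. This keeps degree $O(1)$ and width $n^{O(1)}$, as the theorem requires.
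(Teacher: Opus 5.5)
Your certificate (every rule placed on its disjoint representative, with definitions, start-rule tokens and body tokens as intervals) is the paper's. The step that fails is the application of the contraction theorem. That theorem is stated for \emph{Boolean} wires: a bipartite network with at most $CQ$ Boolean wires, degree at most $\lceil Q^{9/10}\rceil$, and entries in $(\mathbb Z/2^P\mathbb Z)[z]/(z^\ell)$ with $P,\ell\le CQ$ is contracted in time $2^{O(Q/\sqrt{\log Q})}$. It has no $\log\log n$ factor and gives no discount for wires over a large alphabet. Your wires carry interval endpoints, i.e.\ $\Theta(\log n)$ bits each. After binarizing, $Q=\Theta(k\log n)=\Theta(n\log\sigma)$, so the time is $2^{O(n\log\sigma/\sqrt{\log n})}$. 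That exponent is a factor $\log n/\log\log n$ above the claimed one, and for $\sigma=n^{\alpha}$ it is worse than the trivial $2^n$. The bound $2^{O(m\log\log n/\sqrt{\log n})}$ you expect is not something the theorem provides; it is the target restated.

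In the paper the $\log\log n$ comes from indexing a sparse set inside a candidate set that is only polylogarithmically larger. Intervals longer than $L=\lceil R\sqrt{\log_2(n+2)}\rceil$ are enumerated outright at cost $2^{O(n\log(L+1)/L)}$, so only $M=O(nL)$ short candidate choices become wires. A certificate within budget $K=\Theta(n/R)$ sets at most $H=8K$ of them to one, summary wires included (Lemma~\ref{lem:sparse-activity}). After grouping the tensors into $J=\lceil n^{1/4}\rceil$ groups per side and enumerating the counts $a_{gh}$, the active wires are replaced by subset indices with at most $J^2+\log_2\binom{M+H}{H}=O(K\log(\mathrm e nL/K))$ Boolean wires. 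Since $nL/K=O(R^2\sqrt{\log n})$ is polylogarithmic, this is $O(K\log\log n)$. Without the length threshold, indexing $O(K)$ active intervals among $\Theta(n^2)$ candidates would again cost $\Theta(K\log n)$ bits.

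Independently of the accounting, your rank-indexed network is not a fixed network. A contraction sums over assignments to a network whose wiring is fixed in advance. In your encoding, however, the partner of a wire depends on the assignment:
\begin{itemize}
\item the definition that the reference of rank $i$ points to;
\item the representative containing a given body token;
\item how the sorted lists of start-rule tokens, body tokens and representatives interleave.
\end{itemize}
All of these are part of the unknown certificate. So ``a binary tensor between two ranked wires'' cannot be placed in advance. Wiring all potential pairs costs $\Theta(k^2)$ wires or degree $\Theta(k)>Q^{9/10}$, and enumerating the pointer map costs $2^{\Theta(k\log k)}$.

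The paper avoids data-dependent wiring by indexing Boolean choices by \emph{candidate position and role}. The content $T[I]$ of each candidate is then known in advance. All candidates with content $w$ therefore feed one content tensor enforcing $d_w\le1$ and $r_w=1\Rightarrow d_w=1$; it is split into blocks of at most $\lceil\sqrt n\rceil$ ports with two summary bits each to bound the degree. Position tensors on blocks of width $\lceil\sqrt n\rceil$ check disjointness, coverage, and containment of each body token in its parent, all locally. Sparsity is restored only afterwards, by the subset re-encoding.

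Finally, ``polynomial precision'' does not meet the hypothesis $P\le CQ$. The paper takes $P=1+\lceil\log_2\binom{M+H}{H}\rceil=O(Q)$, which exceeds the logarithm of the number of budget-feasible certificates, so a positive count is never read as zero. It enforces the budget by working modulo $z^{b'+1}$.
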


The exponent is $o(n/R)$ uniformly across the alphabet range. The proof in
Section~\ref{sec:sparse-interval-tensor} gives a more detailed bound in terms of a size budget $K$ and a
length threshold $L$,
\[
 n^{O(1)}2^{O\left(
 \frac{n\log(L+1)}L+
 \frac{K\log(\mathrm e nL/K)}{\sqrt{\log n}}
 \right)},
\]
where $g(T)\le K\le n$, $K\ge n/\log_2(n+2)$, and
$2\le L\le(\log_2(n+2))^2$ with $L<n$.
The uniform result follows with $K=\Theta(n/R)$ and
$L=\Theta(R\sqrt{\log n})$. Table~\ref{tab:regimes} lists the resulting exponents for several alphabet sizes.

\begin{table}[ht]
\centering\small
\begin{tabular}{@{}lll@{}}
\toprule
Alphabet bound & Theorem~\ref{thm:subexp-intro} & Theorem~\ref{thm:main}, part~2 \\
\midrule
fixed $\sigma\ge2$ & $nu/t^{3/2}$ & $nu/t$ \\
$\sigma=\Theta(t^a)$ & $nu^2/t^{3/2}$ & $nu^2/t$ \\
$\log\sigma=\Theta(t^\beta)$ & $nu/t^{3/2-\beta}$ & $nu/t^{1-\beta}$ \\
$\sigma=\Theta(n^\alpha)$ & $nu/\sqrt t$ & $n$ \\
\bottomrule
\end{tabular}
\caption{Times have the form $n^{O(1)}2^{O(E)}$, with the indicated exponent
$E$, $t=\log n$, and $u=\log\log n$. The parameters $a>0$, $0<\beta<1$,
and $0<\alpha\le1$ are fixed. The first column uses polynomial space, the second linear space.
The bounds apply to strings using at most the indicated number of symbols.}
\label{tab:regimes}
\end{table}

Smaller alphabets give smaller exponents, and even the largest alphabet
bound gives subexponential time. These are worst-case guarantees; for
example, a string whose $n$ symbols are all distinct is easy to recognize,
and the literal grammar is optimal for it.

Theorem~\ref{thm:subexp-intro} depends on the external contraction theorem, which appears in a preprint that has not been peer reviewed. Section~\ref{sec:sparse-interval-tensor} states all of its hypotheses that we use; we do not reprove it. Exact computation of grammar-related compression measures via SAT solvers was studied earlier~\cite{bannai}; our algorithm instead encodes interval certificates directly as a tensor network and exploits their sparsity.

\paragraph{Self-contained linear-space algorithms.}
The remaining algorithmic results do not use the external theorem. Their running times are larger than in Theorem~\ref{thm:subexp-intro}, but their proofs are self-contained and they use only linear space.

\begin{theorem}[Linear-space algorithms]\label{thm:main}\leavevmode
\begin{enumerate}[label=\arabic*.,leftmargin=*]
\item SGP can be solved exactly in $O^*(\rho^n)$ time and $O(n)$ machine words of space, and an optimal grammar can be output within the same asymptotic bounds, where
\[
\rho=1.8722572196\ldots<2.
\]
\item Let $n\ge2$, $2\le\sigma\le n$, and
\[
R=\log_\sigma n=\frac{\log n}{\log\sigma}.
\]
For every input of length $n$ over at most $\sigma$ symbols, SGP can be solved exactly in
\[
2^{O\!\left(\frac{n}{R}\log(R+1)\right)}
\]
time and $O(n)$ machine words of space.
\end{enumerate}
\end{theorem}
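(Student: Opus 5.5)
Both parts will rest on Lemma~\ref{lem:intro-disjoint} together with the parsing dynamic program of Lemma~\ref{lem:parse}. Once a dictionary $D$ (a set of distinct factors of $T$) is fixed, that program computes in polynomial time and space a smallest grammar whose auxiliary nonterminals generate exactly $D$. We process the strings of $D$ in order of increasing length. For each one we find a shortest parse over the shorter strings of $D$ together with the terminals, and then we parse $T$ itself in the same way. Minimizing over every dictionary produced by an enumeration that is guaranteed to contain the dictionary of some optimal grammar solves SGP. Since we keep only the current dictionary and the best value found so far, the space stays at $O(n)$ words, and we recover an optimal grammar by rerunning the parser on the best dictionary.

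\emph{Part~1.} Enumerating all $2^{n-1}$ partitions of $T$ is wasteful, so we will restrict to \emph{canonical} representative systems. Among optimal grammars with disjoint representatives, we choose one that minimizes a potential, for example the lexicographically smallest vector of representative start positions. We then prove local exclusion rules about how representative blocks and singleton gaps can follow one another. One example: a block of length at least two whose content already occurs as a block further left can be dropped, and length-two blocks whose occurrences overlap can be shifted. These rules should turn the family of admissible partitions into a regular language recognized by a small automaton. The number of admissible partitions then grows like $\rho^n$, where $\rho$ is the Perron root of the automaton's transfer matrix, and the decimal value $1.8722572196\ldots$ would be the largest root of the characteristic polynomial. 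We will enumerate the admissible partitions by backtracking in linear space with polynomial delay. \textbf{The main obstacle} is finding exclusion rules that are both provably valid and strong enough to reach exactly this $\rho$. I would start from the simplest rule, forbidding certain short block patterns, derive the resulting recurrence, and then strengthen the rules until the characteristic root matches.

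\emph{Part~2.} Choose a threshold $L$. By consequence~(ii), the dictionary splits into two parts. The \emph{long} strings, of length at least $L$, number at most $n/L$; each is determined by a start position and a length, so there are at most $n^{2n/L}$ choices for this part. The \emph{short} strings form a subset of the fewer than $\sigma^L$ strings of length below $L$, and only those occurring in $T$ matter. Guessing every subset of them would cost $2^{\sigma^L}$, which is too much. Instead we use that an optimal grammar has size $g(T)\le K=O(n/R)$ (Lemma~\ref{lem:sparse-size-bound}). Every auxiliary nonterminal contributes at least two symbols of size, so the dictionary has at most $K$ strings. The short part is then a subset of size at most $K$ of at most $\sigma^L\le n$ candidates, giving $\binom{n}{\le K}=2^{O(K\log(n/K))}=2^{O((n/R)\log R)}$ choices. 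Here we take $L=\lfloor R/2\rfloor$ so that $\sigma^L\le\sqrt n$. With this $L$, the long part costs $2^{O((n/L)\log n)}$, which is too large. So we encode long strings more economically by their disjoint representatives: the set of block boundaries in $T$ is a set of at most $2n/L$ positions, which gives $\binom{n}{\le 2n/L}=2^{O((n/L)\log L)}=2^{O((n/R)\log(R+1))}$ choices. For $R<2$ we use $O^*(2^n)$ directly, which is within the bound. Each combined guess is evaluated in polynomial time and linear space, and together the two parts give the stated bound.
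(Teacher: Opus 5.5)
\textbf{Part~1 has a genuine gap.} As written it is a research plan rather than a proof. You postulate exclusion rules that would make the admissible partitions a regular language with Perron root $1.8722572196\ldots$. But you prove no such rule, and you say yourself that finding them is the whole difficulty.

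The sample rule you give cannot lower the base. Dropping a block whose content already occurs further left depends on the content of $T$, not on the shape of the partition, so it yields no worst-case count; on a string with pairwise distinct symbols it excludes nothing, and you still enumerate $2^{n-1}$ partitions. The constant also has a specific origin that your plan never touches. It is the root of $2^{1/3}/\rho+1/(\rho^2(\rho-1))=1$, which comes from a \emph{weighted} count combining partitions with an enumeration of subsets. Tuning rules until a characteristic root matches is not a derivation.

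The missing idea is the following.
\begin{enumerate}
\item Among optimal grammars, take one with the fewest rules. Every auxiliary rule whose expansion has length two is then referenced at least three times: inlining a two-symbol rule that is referenced only twice leaves the size unchanged and removes a rule (Lemma~\ref{lem:three}).
\item Modify the pruning so that length-two nonterminals are never expanded. Dictionary strings of length at least three still get pairwise disjoint representatives. Each length-two dictionary string then has at least three complete, pairwise disjoint occurrences among the remaining positions $R$ (Lemma~\ref{lem:longrep}).
\item Enumerate only partitions whose blocks have length $1$ or at least $3$. Choose the length-two dictionary as a subset of the filter set $\mathcal C_2(R)$: strings occurring at least three times inside $R$ without spanning a long block. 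This set has at most $|R|/3$ elements.
\item Each singleton block therefore carries weight $2^{1/3}$, and the number of candidates satisfies $A_n=2^{1/3}A_{n-1}+\sum_{\ell\ge3}A_{n-\ell}$. Induction gives $A_n\le\rho^n$ for exactly the stated $\rho$.
\item Correctness follows from Lemma~\ref{lem:parse}, and linear space follows as in Theorem~\ref{thm:2n}.
\end{enumerate}

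\textbf{Part~2 is essentially sound}, with one unnecessary detour and one detail to repair. Here $R=\log_\sigma n$.

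The detour: your claim that $2^{\sigma^L}$ is too large is false for your own choice $L=\lfloor R/2\rfloor$. There $\sigma^L\le\sqrt n=o(n/R)$, so enumerating all subsets of short strings already fits the bound; this is what the paper does. Your appeal to Lemma~\ref{lem:sparse-size-bound} is valid but not needed.

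The detail: a set of boundary positions alone does not determine the representative intervals. Adjacent representatives share a boundary, and the gaps between representatives can be long. Either add one bit per segment, at a cost of $2^{O(n/L)}$, or count endpoint sequences by stars and bars, $\binom{n+k}{2k}$, as in Lemma~\ref{lem:intervals}.
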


Part~1 improves the $O^*(3^n)$ bound of~\cite{CFGGS21} for general grammars to $O^*(1.872258^n)$. The smaller base $1.8392$ in~\cite{CFGGS21} applies only to $1$-level grammars, the restricted case in which only the start rule may refer to nonterminals; our algorithm handles fully general nested grammars.

For part~2, it is helpful to view $R$ as a number of digits: writing $n$ in base $\sigma$ takes exactly $\lfloor R\rfloor+1$ digits, which differs from $R$ by at most one. This interpretation matches the counting in the algorithm: there are $\sigma^L$ strings of length $L$ and $\sigma^R=n$, and the algorithm compares $\sigma^L$ with $n$ to decide which strings to enumerate by content and which by position.

\paragraph{Complexity.}
Small alphabets help the algorithms above, but they do not make the problem easy: SGP remains NP-complete even over two symbols.

\begin{theorem}\label{cor:binary}
The decision version of SGP over the binary alphabet $\{0,1\}$ is NP-complete.
\end{theorem}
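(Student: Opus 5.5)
Membership in NP is immediate. A grammar of size at most $k\le n$ is a certificate of polynomial size, and we can check that it generates $T$ in polynomial time, for example by computing expansion lengths and comparing symbol by symbol via the derivation. So the work is NP-hardness. I would obtain it from the Vertex Cover reductions with tunable alphabet size announced in the abstract, specialised to two symbols. The plan is to start from the known hardness over a constant alphabet (the $17$- or $24$-symbol result of Casel et al.~\cite{CFGGS21}, or our own Vertex Cover reduction over some fixed alphabet $\Sigma$ of size $s$) and then binary-encode it.

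Concretely, fix a block length $\ell=\Theta(\log s)$ plus padding. Map each symbol $a\in\Sigma$ to a codeword $h(a)\in\{0,1\}^{\ell'}$ drawn from a comma-free, synchronizing code. For instance, take $h(a)=1^{m}0\,w_a\,0$, where the $w_a$ are distinct words with no run of $m$ ones. Then every occurrence of $1^m0$ in $h(T)$ marks a block boundary; this is the boundary synchronization property. The encoded instance is $h(T)$, together with a budget $k'$ that is an affine function of $k$ and $|\Sigma|$.

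There are two directions. For completeness, given a grammar $G$ for $T$ of size $\le k$, add one rule $X_a\to h(a)$ per symbol; these rules can be built cheaply with shared prefixes, for a fixed overhead $C$ independent of $T$. Replacing terminals by the $X_a$ then yields a grammar for $h(T)$ of size $\le k+C$. For soundness, given a grammar $G'$ for $h(T)$ of size $\le k+C$, use Lemma~\ref{lem:intro-disjoint} to assume an optimal $G'$ with disjoint representatives. Then normalise each rule so that its expansion starts and ends at block boundaries: trim or extend partial codewords, and charge the fix-up to the $O(1)$ partial blocks at each end of the rule. The aim is to show that the normalisation does not increase size by more than the slack, and then to read off a grammar for $T$ of size $\le k$.

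The main obstacle is soundness, because size bookkeeping under re-alignment is lossy. A rule whose expansion straddles codeword boundaries may be strictly cheaper than any aligned rule, so an additive-constant slack per rule would destroy the exact threshold. To handle this, I would not encode an arbitrary constant-alphabet instance. Instead, I would go back to the Vertex Cover gadget and design it directly over $\{0,1\}$. Each vertex and separator gets a long distinctive binary block, with blocks separated by synchronizing markers that occur exactly once or with controlled multiplicity. The optimal-grammar cost can then be computed exactly as $\alpha\cdot|E|+\beta\cdot|V|+\gamma\cdot\tau$, where $\tau$ is the minimum vertex cover size, as Casel et al.\ do for 17 symbols. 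I would first try to prove, using the synchronization property and disjoint representatives, that some optimal grammar uses only rules whose expansions are unions of whole gadget blocks. Once that holds, the counting argument from the constant-alphabet proof transfers essentially verbatim.
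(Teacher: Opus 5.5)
Your NP-membership argument is fine, but the hardness part is not yet a proof. You correctly see that encoding with an affine budget $k'=k+C$ fails. Nothing forces a binary grammar to pay the full code-table overhead $C$, so from $g_2(h(T))\le k+C$ you could at best recover a source grammar of size $k+C$, not $k$. Your remedy, however, rests entirely on the unproved claim that some optimal binary grammar uses only rules whose expansions are unions of whole gadget blocks. That claim is the whole difficulty, and taken literally it is false. Over $\{0,1\}$ each block must itself be assembled from bits, and an optimal grammar will define sub-block rules shared across blocks, for instance one for $1^m$, which occurs in every one of your codewords. Even if you granted aligned multi-block rules, an exact formula $\alpha|E|+\beta|V|+\gamma\tau$ would need the exact smallest-grammar cost of the binary code table, and your plan gives no way to compute it.

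Two ideas are missing. First, do not normalise rules to block boundaries; project them onto the source alphabet instead. Put an anchor in the middle of each codeword, labelled by its source symbol. Use a code in which the $m/4$ bits on either side of every codeword boundary determine both the boundary position and the two neighbouring symbols; this is property (SI). Your marker $1^m0$ locates boundaries but does not identify the symbols from partial codewords. With (SI), any factor having an occurrence that covers at least two anchors has the same label string at all its occurrences, so such nonterminals translate consistently. Each terminal, and each nonterminal all of whose occurrences cover at most one anchor, is replaced by the label (or empty string) it covers in a fixed occurrence of the parent rule. No right-hand side grows, so $g_\Sigma(W)\le g_2(C(W))$ holds exactly, and rules straddling codeword boundaries cost nothing extra.

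Second, the remaining slack is only the constant $D=mq$ in $g_2(C(W))\le g_\Sigma(W)+D$. It is absorbed by amplification rather than computed. Apply the $24$-letter reduction of Casel et al.\ to $K=D+1$ disjoint copies of the graph. A unit change in $\tau(G)$ then moves the source optimum by $K>D$, and the threshold $299KN+3KM+5+K\gamma+D$ separates yes- from no-instances.
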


This answers the question on binary alphabets posed by Casel et al.~\cite[Section~6.1]{CFGGS21}: computing a smallest grammar is NP-hard even for inputs over two symbols. Our proof starts from the $24$-letter reduction of~\cite{CFGGS21}, establishes a constant additive bound between the optima before and after encoding, and then takes a constant number of copies of the source graph to separate the decision threshold from this constant error. Masierak's preprint~\cite{Mas26} also reports NP-completeness over a binary alphabet under the standard measure of total right-hand side length, via a different route that starts from the assembly index and then converts between size measures.

To compare algorithms at different alphabet sizes, we also construct reductions whose alphabet size can be chosen freely.

\begin{theorem}[Reductions with tunable alphabet size]\label{thm:tunable}
For every polynomial-time computable integer function $b(N)\in[2,2N]$, there is a polynomial-time reduction from Vertex Cover on graphs of maximum degree at most three to SGP. For a source graph with $N$ vertices and at least one edge, the target string satisfies
\[
\sigma=|\operatorname{alph}(T)|=2b(N)+26,\qquad
n=|T|=\Theta\!\bigl(N(1+\log_{b(N)}N)\bigr).
\]
The constants in the length bound are independent of $b$. Moreover,
$g(T)=379N+3M+5+\tau(G)$, where $M$ is the edge count. Thus $n/R=\Theta(N)$ uniformly in $b$.
\end{theorem}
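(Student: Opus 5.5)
We start from the Vertex Cover reduction of Casel et al.~\cite{CFGGS21} for subcubic graphs, which produces a string over a constant alphabet: one gadget per vertex and one per edge, separated by distinct separator symbols. The optimum is an explicit affine function of $N$, $M$ and $\tau(G)$. The only source of a large alphabet in such reductions is the need for $\Theta(N)$ pairwise distinct separators or vertex identifiers. Our plan is therefore to keep a constant set of ``structural'' letters, $26$ of them, and to replace each vertex identifier by a codeword of length $\ell=\lceil\log_{b}(N)\rceil+O(1)$ over a fresh alphabet of $2b$ letters. The fresh alphabet is split into two halves of $b$ letters each. One half is used for the first letter of a codeword and the other half for the remaining letters, or alternatively the halves are used for opening and closing roles. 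This split makes codewords self-delimiting: no codeword occurs as a factor straddling a codeword boundary, and no proper suffix of one codeword is a prefix of another. This gives $\sigma=2b+26$ and, since each of the $O(N)$ gadgets (recall $M\le 3N/2$) has length $O(1)+O(\ell)$, we get $n=\Theta(N(1+\log_b N))$. The constants are independent of $b$, because $b\ge2$ only improves the codeword length. The identity $n/R=\Theta(N)$ then follows from $R=\log_\sigma n=\Theta(\log n/\log b)$, where we use that $\log n=\Theta(\log N)$ for $b\le 2N$.

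\textbf{Upper bound.} We exhibit a grammar of size $379N+3M+5+\tau(G)$. It takes one nonterminal per vertex codeword together with its neighbouring structural letters, and it reuses the Casel et al.\ gadget rules. A minimum vertex cover decides which of the two parsings of each vertex gadget is chosen, so every edge gadget can be covered with $3$ symbols. The constant $379$ collects the per-vertex cost: the codeword rule and the gadget rules. We must make sure this cost does not depend on $\ell$. To arrange this, every vertex codeword appears only inside a single vertex rule, which is then referenced, so the codeword length is paid once. The length $\ell$ then has to be absorbed in a way that leaves $379N$ fixed. We would pad codewords to a common length and check that the accounting gives exactly $379$.

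\textbf{Lower bound (main obstacle).} We show that every grammar has size at least $379N+3M+5+\tau(G)$. For this we invoke Theorem~\ref{thm:disjoint}: an optimal grammar has pairwise disjoint representatives. We then argue gadget by gadget, charging right-hand-side symbols to gadgets. The synchronization property of the codewords is what makes this work. Any dictionary string that crosses a separator or codeword boundary either occurs only once, and so is useless in an optimal grammar, or is contained in a fixed set of ``aligned'' factors. Hence the dictionary can be normalized, without increasing the size, to one consisting of aligned strings. Once aligned, the analysis reduces to that of Casel et al., and the vertices whose gadgets choose the cheap parsing must form a vertex cover, which produces the $+\tau(G)$ term. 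The delicate part is the exchange argument for non-aligned nonterminals when $b$ is small and codewords are long. A nonterminal could capture a common codeword suffix shared by many vertices and save symbols that way. We would rule this out by giving each codeword a distinct first letter position pattern, or by bounding any such saving by its own defining cost. Finally, polynomial-time computability follows directly from computing $b(N)$ and writing down the codewords.
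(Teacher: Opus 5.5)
There is a genuine gap, and it sits exactly where you defer to ``check that the accounting gives exactly $379$'': the cost of the codeword table. In your upper-bound grammar each vertex codeword is written out once inside its vertex rule. That costs $\Theta(1+\log_b N)$ symbols per codeword and $\Theta(N(1+\log_b N))$ in total, which is $\Theta(N\log N)$ for constant $b$. So the grammar you describe cannot have size $379N+3M+5+\tau(G)$, and padding codewords to a common length only raises the cost. In the other direction, an optimal grammar is free to compress the codebook itself by sharing pieces among the $\Theta(N)$ codewords. The ``common suffix'' effect you mention is unavoidable for any dense codebook, not a corner case. Hence $g(T)$ contains the smallest-grammar cost of your codebook as an additive term. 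Synchronization and alignment show that such pieces do not interfere with the gadget structure, but they do not determine this term. Nothing in your plan computes it or shows it equals a fixed multiple of $N$ independent of $b$, so neither inequality of $g(T)=379N+3M+5+\tau(G)$ is established.

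The paper closes this gap with three ingredients your plan lacks.
\begin{enumerate}
\item \emph{A codebook costing exactly three symbols per codeword.} Inner codewords are palindromes $f_b(i)\star f_b(i)^R$ in bijective base $b$, least significant digit first, for all $1\le i\le 2N$. Stripping the outer digit pair gives a codeword of smaller index (or $\star$), and the labelled codewords $a_cV_i^0a_c$, $e_cD_i^0e_c$ strip to inner ones. The whole table of $2Q=36N$ codewords therefore costs $6Q=108N$, independent of $b$.
\item \emph{A forcing prefix.} Every codeword occurs seven times in the prefix, with pairwise distinct neighbours. This forces every reduced optimal grammar to define all codewords (Lemma~\ref{lem:allcodes}, via $3s<\lceil 7s/2\rceil$). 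Lemma~\ref{lem:codecost} then shows these rules cost at least $3$ each, counting the single-$\star$ rules charged to any two-symbol codeword rule.
\item \emph{Projection instead of alignment.} Rather than normalizing dictionaries to aligned strings, the anchor-based Projection Theorem maps any target grammar to a grammar for a purpose-built source string $W_G$. It subtracts the lengths of all single-anchor rules along the way, which yields $g(T_{G,b})=g(W_G)+108N$ exactly. The string $W_G$ is built from scratch, with two auxiliary colourings, so that its only repeated factors are $\#w_i$, $w_i\#$ and $\#w_i\#$. This gives $g(W_G)=271N+3M+5+\tau(G)$, and hence $379=271+108$.
\end{enumerate}
Your alphabet count and your derivation of $n/R=\Theta(N)$ are fine, but they are the easy part of the statement.
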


The quantity $n/R$ is an upper bound on the optimum grammar size, and in
this reduction it equals the size of the source graph up to absolute
constants. It is not a lower bound on the running time: the exponent in
Theorem~\ref{thm:subexp-intro} is $o(n/R)$.

We do not phrase these reductions as lower bounds under the Exponential Time Hypothesis (ETH)~\cite{IPZ01}, since the preprint~\cite{sat} claims a refutation of ETH. Their quantitative content is unconditional: Section~\ref{sec:transfer} states how a faster SGP algorithm would yield a faster SAT algorithm, with explicit length and alphabet parameters. For example, a binary SGP algorithm with exponent $O(n(\log\log n)^a/(\log n)^{1+\delta})$ would give an algorithm for fixed-density $3$-SAT with exponent $O(N(\log\log N)^a/(\log N)^\delta)$ (Corollary~\ref{cor:log-speed-scale}). These statements hold whether or not ETH is true.

\subsection{Techniques}

\paragraph{Linear-space algorithms.}
Section~\ref{sec:disjoint} proves Lemma~\ref{lem:intro-disjoint}; the pruning used there is exactly the grammar tree of~\cite{CNP21}. Section~\ref{sec:longshort} goes one step further and selects, among all optimal grammars, one with the fewest rules; in such a grammar every dictionary string of length two is referenced at least three times. This additional condition substantially reduces the number of candidates for short dictionary strings and is what brings the base of the exponent below two.

Section~\ref{sec:alphalg} splits the dictionary by length and enumerates the two parts differently. Short strings are enumerated by content, and their number is bounded in terms of the alphabet size; long strings are enumerated by the positions of their disjoint representatives, and their number is bounded in terms of the input length. For sufficiently large $R$, with a length threshold of about $R/2$, there are $O(\sqrt n)$ short strings and $O(n/R)$ long representatives. The logarithmic overhead of enumerating these representatives accounts for the factor $\log(R+1)$ in part~2 of Theorem~\ref{thm:main}.

\paragraph{The tensor network.}
Section~\ref{sec:sparse-interval-tensor} uses three families of intervals:
dictionary representatives, tokens of the start rule, and tokens of the
auxiliary rules. Intervals longer than a threshold $L$ are enumerated
explicitly, in $2^{O(n\log(L+1)/L)}$ time. Shorter intervals become Boolean
choices in a network of position and content tensors. The position tensors enforce
coverage and rule boundaries, and the content tensors match references to
their definitions. Contents with many candidate occurrences are split into blocks with
small summary tensors, which keeps the degree low.

A certificate of cost at most $K$ sets at most $8K$ wires to one,
including the summary wires. Grouping the tensors and indexing only the
subsets of wires set to one reduces the network to
$O(K\log(\mathrm e nL/K))$ Boolean wires. Truncating the ring in a cost
variable enforces the budget, and an explicitly bounded modulus makes the
count exact, so that a positive count is never mistaken for zero. With these
bounds the tensor-contraction theorem of~\cite{sat} applies directly, and
self-reduction recovers an optimal grammar.

\paragraph{Encodings.}
The complexity results use a different tool. The main technical difficulty is translating a grammar for an encoded string back into a grammar for the source string, since the encoded grammar may reuse fragments of codewords. Section~\ref{sec:anchors} places in each codeword an anchor labeled with the corresponding source symbol and proves a key property: if some occurrence of a fragment contains at least two anchors, then every occurrence of the fragment yields the same sequence of labels. This property allows each target nonterminal to be translated consistently into a rule of the source grammar. For fixed alphabets, the encoding overhead is bounded by a constant, so Section~\ref{sec:npc} needs only a constant number of copies of the source graph to separate this error from the unit gaps caused by the vertex cover size. Growing alphabets are harder because this constant bound is no longer available. Section~\ref{sec:eth} therefore prepends a carefully designed prefix in which every codeword occurs repeatedly. This forces an optimal grammar to create rules for all codewords, so that the encoding overhead can be computed exactly rather than merely bounded.

\subsection{Organization}

Section~\ref{sec:prelim} introduces reduced grammars and computes optimal parsings for a fixed dictionary. Section~\ref{sec:disjoint} proves the structural lemma and gives the $O^*(2^n)$ algorithm, and Section~\ref{sec:longshort} proves Theorem~\ref{thm:main}. Section~\ref{sec:sparse-interval-tensor} proves Theorem~\ref{thm:subexp-intro}. Sections~\ref{sec:alphabet} and~\ref{sec:eth} prove Theorems~\ref{cor:binary} and~\ref{thm:tunable}, and Section~\ref{sec:transfer} derives the transfers to SAT. Section~\ref{sec:discussion} discusses open problems. The appendices bound reduction lengths, give an explicit binary code, and report finite computer checks of the tensor encoding.

\section{Preliminaries}\label{sec:prelim}

\subsection{Strings and grammars}

Let $T=T[1]\cdots T[n]\in\Sigma^n$ with $n\ge1$. For an interval $I=[i,j]\subseteq[1,n]$ we write $T[I]=T[i]\cdots T[j]$ and $|I|=j-i+1$. Two intervals are \emph{disjoint} if they share no position. A contiguous substring is also called a \emph{factor}, and a \emph{proper} substring is one strictly shorter than the whole string. Throughout, a dictionary entry is a string, whereas an occurrence of it in $T$ is an interval; a dictionary entry may have many occurrences, and this distinction will be important.

A straight-line grammar $G$ has a set $V$ of nonterminals and a start symbol $S\in V$; every $A\in V$ has exactly one rule $A\to\alpha_A$ with $\alpha_A\in(V\cup\Sigma)^+$. The dependency relation is acyclic, so every nonterminal expands to a unique string $\operatorname{val}(A)$. We require $\operatorname{val}(S)=T$ and define
\[
|G|=\sum_{A\in V}|\alpha_A|,\qquad
g(T)=\min_{\operatorname{val}(S)=T}|G|.
\]
We distinguish two lengths: $|\alpha_A|$ is the number of symbols on a right-hand side, and $|\operatorname{val}(A)|$ is the number of characters in the expansion. Nonterminals other than $S$ are called \emph{auxiliary}, and their expansions form the \emph{dictionary}
\[
\mathcal D(G)=\{\operatorname{val}(A):A\in V\setminus\{S\}\}.
\]

$O^*(\cdot)$ hides factors polynomial in $n$. Space is measured in machine words of $\Theta(\log n)$ bits. Symbols of a general alphabet can first be renamed to integers in $[1,n]$. Unless stated otherwise, $\log$ is to base two. For a graph $G$, $\tau(G)$ denotes the size of a minimum vertex cover.

\subsection{Reduced grammars and reference counts}

We first rule out some obvious redundancies. An optimal grammar has no unreachable rules, since deleting them would reduce its size. No auxiliary rule has a single symbol on its right-hand side: if $A\to X$, writing $X$ at every reference and deleting the rule of $A$ decreases the size by one. Similarly, if $S\to A$, we may take $A$ as the start symbol instead; a start rule $S\to T$ with a single symbol is needed only when $n=1$.

Less obviously, no two nonterminals of an optimal grammar generate the same string. Such nonterminals would have expansions of equal length, so neither could depend on the other; merging their references and deleting one of the two rules would shrink the grammar without creating a cycle. Accordingly, we call a grammar \emph{reduced} if it has no unreachable rules and no single-symbol auxiliary rules, and different nonterminals generate different strings; when $|T|\ge2$, we also require that the start rule is not a single nonterminal. In a reduced grammar, every referenced nonterminal has a strictly shorter expansion than the rule in which it occurs; we will use this property frequently.

Let $r_G(A)$ denote the total number of occurrences of $A$ on all right-hand sides. Note that this counts symbol occurrences on right-hand sides, not expansions in the full derivation tree; the two numbers can differ greatly.

\begin{observation}\label{obs:twice}
In an optimal grammar, every auxiliary nonterminal is referenced at least twice.
\end{observation}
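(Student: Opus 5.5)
We argue by contradiction: if some auxiliary nonterminal of an optimal grammar were referenced fewer than twice, we would produce a strictly smaller grammar generating the same string. Let $G$ be optimal for $T$ and let $A\neq S$ be auxiliary with $r_G(A)\le 1$. There are two cases.

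\emph{Case $r_G(A)=0$.} Then $A$ is unreachable from $S$, because every nonterminal reachable from $S$ other than $S$ itself occurs on some right-hand side. Deleting the rule $A\to\alpha_A$ leaves $\operatorname{val}(S)=T$ unchanged and lowers the size by $|\alpha_A|\ge1$. This contradicts optimality.

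\emph{Case $r_G(A)=1$.} Let $B\to\beta_1 A\beta_2$ be the unique rule that mentions $A$. Replace this rule by $B\to\beta_1\alpha_A\beta_2$ and delete the rule of $A$. The dependency relation stays acyclic, since every dependency of $B$ after the change was already a (possibly indirect) dependency of $B$ before. By induction along the acyclic order, every remaining nonterminal keeps its expansion, so in particular $\operatorname{val}(S)=T$. The right-hand side of $B$ grows by $|\alpha_A|-1$, and the rule of $A$, of size $|\alpha_A|$, disappears. The net change is therefore $-1$, which again contradicts optimality.

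The only point needing care is the expansion-preservation check in the inlining step. It is a direct induction on the dependency order, so no real obstacle arises. Note that we do not need the reducedness assumption $|\alpha_A|\ge2$ here, because the saving is exactly one symbol regardless of $|\alpha_A|$.
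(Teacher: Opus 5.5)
Your proof is correct and is essentially the paper's argument: inlining the unique reference to $A$ and deleting its rule saves exactly one symbol. The paper's one-line proof treats only the case $r_G(A)=1$ and covers $r_G(A)=0$ through its earlier remark that an optimal grammar has no unreachable rules; you make both cases, and the acyclicity check, explicit.
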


\begin{proof}
If $A$ is referenced only once, replacing this reference by $\alpha_A$ and deleting the rule of $A$ decreases the size by one, contradicting optimality.
\end{proof}

Among all grammars of minimum size, we may further choose one with the fewest rules, which we call an \emph{optimal grammar with the fewest rules}. This refinement may seem inessential, but it gives a stronger lower bound on the number of references to short rules.

\begin{lemma}\label{lem:three}
In an optimal grammar with the fewest rules, every auxiliary rule $A\to\alpha_A$ with $|\alpha_A|=2$ satisfies $r_G(A)\ge3$.
\end{lemma}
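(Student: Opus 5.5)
Let $G$ be an optimal grammar with the fewest rules, and suppose some auxiliary rule $A\to\alpha_A$ has $|\alpha_A|=2$ and $r_G(A)\le2$. By Observation~\ref{obs:twice}, $r_G(A)\ge2$, so $r_G(A)=2$ exactly. We then \emph{inline} $A$: replace both references to $A$ by $\alpha_A=XY$ and delete the rule $A\to XY$. This yields a grammar $G'$ with $\operatorname{val}(S)=T$ unchanged, and we compare its size and number of rules with those of $G$. The contradiction will be that $G'$ is optimal but has fewer rules.

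The size accounting is as follows. Each of the two references to $A$ had length one and becomes length two, adding $2\cdot(2-1)=2$ symbols in total. Deleting the rule of $A$ removes $|\alpha_A|=2$ symbols. Hence $|G'|=|G|+2-2=|G|$. The dependency relation of $G'$ stays acyclic, since we only substitute the right-hand side of $A$, whose symbols were already descendants of every rule referencing $A$. So $G'$ is a valid straight-line grammar for $T$ of minimum size with one rule fewer than $G$, contradicting the choice of $G$. In general, inlining a rule with $|\alpha_A|=k$ and $r$ references changes the size by $r(k-1)-k$, which is $0$ exactly when $k=2$ and $r=2$.

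The only point needing care is whether $G'$ must be reduced. It need not be: the lemma concerns optimal grammars with the fewest rules, and $G'$ only needs to be a valid grammar of size $|G|$. A minor check is that both references to $A$ lie on right-hand sides of rules other than $A$'s own rule. This holds because the grammar is acyclic.

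I expect no real obstacle. The one thing to state explicitly is that "fewest rules" is taken among all grammars of size $g(T)$, so $G'$ is a legitimate competitor.
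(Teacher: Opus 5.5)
Your proposal is correct and follows essentially the same route as the paper. Both arguments use Observation~\ref{obs:twice} to reduce to the case $r_G(A)=2$, then inline $A$. The size change $r(|\alpha_A|-1)-|\alpha_A|$ vanishes for $r=|\alpha_A|=2$, so the inlined grammar is still optimal but has one rule fewer, which contradicts the choice of $G$.
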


\begin{proof}
Expand all $r=r_G(A)$ references and delete the rule of $A$. The size changes by
\[
r(|\alpha_A|-1)-|\alpha_A|.
\]
For $|\alpha_A|=r=2$, this quantity is zero: the size is unchanged, while the number of rules decreases by one, contradicting the choice of the grammar. Together with Observation~\ref{obs:twice}, $r_G(A)\ge3$.
\end{proof}

In particular, an auxiliary nonterminal whose expansion has length two has two terminals on its right-hand side and is therefore covered by Lemma~\ref{lem:three}.

\subsection{Optimal parsing for a fixed dictionary}

This subsection treats the easier half of the framework, namely the case of a given dictionary. Let $\mathcal D$ be a set of distinct proper substrings of $T$, each of length at least two. For $U\in\mathcal D\cup\{T\}$, let $p_{\mathcal D}(U)$ be the minimum number of blocks needed to cut $U$ into single characters and shorter dictionary strings. Only dictionary strings $W$ with $|W|<|U|$ are allowed, so rules built from parsings depend only on shorter strings, which guarantees acyclicity. The total cost of shortest parsings is
\begin{equation}\label{eq:F}
F_T(\mathcal D)=p_{\mathcal D}(T)+\sum_{U\in\mathcal D}p_{\mathcal D}(U).
\end{equation}

For $m=|U|$, the following dynamic program computes $p_{\mathcal D}(U)=d[m]$:
\begin{equation}\label{eq:dp}
d[0]=0,\qquad
d[j]=1+\min\!\left(
d[j-1],\
\min_{\substack{W\in\mathcal D,\ |W|<m\\|W|\le j,\ U[j-|W|+1..j]=W}}
d[j-|W|]\right).
\end{equation}
If no dictionary string ends at position $j$, this reduces to $d[j]=d[j-1]+1$.

The following lemma is central to the framework: shortest parsing never underestimates the optimum, and it attains the optimum for the right dictionary.

\begin{lemma}\label{lem:parse}
For every candidate dictionary $\mathcal D$, $F_T(\mathcal D)\ge g(T)$. If $\mathcal D$ is the dictionary of some optimal grammar, then $F_T(\mathcal D)=g(T)$.
\end{lemma}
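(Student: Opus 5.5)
The inequality $F_T(\mathcal D)\ge g(T)$ is proved by turning the shortest parsings into a grammar of size exactly $F_T(\mathcal D)$. The equality for an optimal dictionary is proved by showing that every rule of an optimal grammar induces a parsing of its own expansion, so the parsings cost no more than the grammar.

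For the first part, introduce one auxiliary nonterminal $A_U$ for each $U\in\mathcal D$. Take a shortest parsing of $U$, computed by the dynamic program~\eqref{eq:dp}. Turn it into the rule $A_U\to\alpha_U$: each single-character block becomes that terminal, and each block equal to a shorter dictionary string $W$ becomes the nonterminal $A_W$. Do the same for the start rule $S\to\alpha_T$ using a shortest parsing of $T$.

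Each block used in the parsing of $U$ is a dictionary string strictly shorter than $U$. Hence references always point to nonterminals with strictly shorter expansions, and the dependency relation is acyclic.

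We then check by induction on $|U|$ that $\operatorname{val}(A_U)=U$; this holds because concatenating the blocks reproduces $U$. Consequently $\operatorname{val}(S)=T$, and the grammar has size $\sum_U |\alpha_U|+|\alpha_T| = F_T(\mathcal D)$. Unreachable rules only add to the size, so they do no harm here. This gives $g(T)\le F_T(\mathcal D)$.

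For the second part, let $G$ be an optimal grammar with $\mathcal D(G)=\mathcal D$. We may assume $G$ is reduced, as discussed before Observation~\ref{obs:twice}. Then distinct nonterminals generate distinct strings, so $U\mapsto A$ with $\operatorname{val}(A)=U$ is a bijection between $\mathcal D$ and the auxiliary nonterminals.

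We also need $\mathcal D$ to be an admissible candidate: its strings must be proper substrings of $T$ of length at least two. In a reduced grammar every auxiliary nonterminal is reachable, and it has a right-hand side of length at least two. Its expansion is strictly shorter than that of any rule containing it, so following references upward from $A$ shows $\operatorname{val}(A)$ is a proper factor of $T$. A right-hand side of two or more symbols, each expanding to at least one character, gives $|\operatorname{val}(A)|\ge2$.

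Now fix a rule $A\to\alpha_A$, where $\alpha_A=X_1\cdots X_k$. Cutting $\operatorname{val}(A)$ according to the expansions $\operatorname{val}(X_i)$ gives a parsing of $\operatorname{val}(A)$ into $k$ blocks. Each block is either a single terminal or a dictionary string $\operatorname{val}(X_i)$. By the reduced-grammar property, each such $\operatorname{val}(X_i)$ is strictly shorter than $\operatorname{val}(A)$.

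So this is an admissible parsing, and $p_{\mathcal D}(\operatorname{val}(A))\le|\alpha_A|$. The same argument applied to the start rule gives $p_{\mathcal D}(T)\le|\alpha_S|$. Summing over all rules yields $F_T(\mathcal D)\le|G|=g(T)$, and combined with the first part this is equality.

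The main point needing care is this second step. The admissibility conditions built into $p_{\mathcal D}$ (only strictly shorter dictionary strings, and strings of length at least two that are proper substrings) must be exactly matched by the properties of reduced optimal grammars. That is why the reduction to a reduced grammar is invoked before anything else. The rest, including correctness of the dynamic program, is routine.
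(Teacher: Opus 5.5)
Your proof is correct and follows the paper's argument: for the inequality, build a grammar of size exactly $F_T(\mathcal D)$ from shortest parsings, and for the equality, observe that each right-hand side of an optimal grammar is an admissible parsing of its expansion. Your extra checks fill in details the paper leaves implicit: that an optimal (hence reduced) grammar's dictionary is a valid candidate, that its strings correspond bijectively to the auxiliary nonterminals, and that every reference has a strictly shorter expansion.
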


\begin{proof}
For the inequality, create one nonterminal per dictionary string, define its rule by a shortest parsing, and define the start rule by a shortest parsing of $T$. The resulting grammar generates $T$ and has size $F_T(\mathcal D)$, so $F_T(\mathcal D)\ge g(T)$.

For equality, suppose $\mathcal D$ comes from an optimal grammar. Each right-hand side of that grammar is a feasible parsing of the corresponding string, and replacing it by a shortest parsing cannot increase its length, so $F_T(\mathcal D)\le g(T)$. Together with the inequality above, this gives equality.
\end{proof}

We add two implementation remarks. If the dictionary strings have total length $O(n)$, each can be represented by its endpoints in the input, and the dynamic program can be run rule by rule in $O(n)$ words of space. Even if matches are checked character by character, the total running time remains polynomial.

\section{Disjoint representatives of optimal grammars}\label{sec:disjoint}

The previous section settled the case of a fixed dictionary, for which optimal parsings can be computed in polynomial time. The remaining problem is the large number of candidate dictionaries. In this section we prove the structural lemma (Lemma~\ref{lem:intro-disjoint}): there is a partition of the input whose blocks of length at least two form exactly the dictionary of an optimal grammar. Enumerating candidate dictionaries thus reduces to enumerating partitions of the input.

\subsection{Existence of disjoint representatives}

Traverse the derivation tree of a reduced grammar depth-first, expanding the rule of each nonterminal at its first encounter and keeping every later encounter of the same nonterminal as a leaf. This is exactly the pruning used for grammar trees in~\cite{CNP21}. After pruning, every leaf corresponds to an interval of $T$.

\begin{lemma}\label{lem:leaves}
If every auxiliary nonterminal is referenced at least twice, then in the pruned derivation tree every auxiliary nonterminal $A$ has exactly $r_G(A)-1$ unexpanded leaves.
\end{lemma}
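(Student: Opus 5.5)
We will prove the statement by counting, for each auxiliary nonterminal $A$, the occurrences of $A$ as a child in the pruned tree, and by identifying exactly which of these are expanded. The pruned tree is built by a depth-first traversal starting from $S$. Each node of the pruned tree that is expanded has as its children precisely the symbols of the right-hand side of its nonterminal. The nonterminal $B$ labelling an expanded node is expanded exactly once, namely at its first encounter in the traversal. Hence the multiset of child labels of expanded nodes is the union, over all nonterminals $B$ reachable from $S$, of the right-hand sides $\alpha_B$, each taken once. A reduced grammar has no unreachable rules, so every nonterminal is reachable. It follows that $A$ appears as a child in the pruned tree exactly $\sum_B |\alpha_B|_A = r_G(A)$ times.

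The first key step is to justify that every nonterminal $B$ is expanded exactly once. Reachability gives that $B$ is encountered at least once. The traversal rule expands only the first encounter and leaves every later one as a leaf, so $B$ is expanded at most once. Termination of the traversal and finiteness of the tree follow from the acyclicity of the grammar.

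The second step is to sort the $r_G(A)$ child occurrences of $A$. The start symbol is never a child, because the grammar is acyclic and $S$ is referenced nowhere; $A$, being auxiliary, is not the root. So every node labelled $A$ is one of these $r_G(A)$ child occurrences. Exactly one of them is expanded, namely the first encounter. The remaining $r_G(A)-1$ are unexpanded leaves. This gives the claimed count.

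The hypothesis $r_G(A)\ge2$ is not needed for the count itself. It is what makes $r_G(A)-1\ge1$, so that $A$ really appears as a leaf, and this is the fact used later to obtain disjoint representatives.

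The only real obstacle is bookkeeping: we must make sure that the children of expanded nodes correspond bijectively to the symbol occurrences on right-hand sides. A priori a right-hand side might appear zero times or several times among the expanded nodes. The argument that each rule is expanded exactly once settles this, and that argument rests on reachability together with the first-encounter rule.
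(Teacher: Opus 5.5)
Your proposal is correct and follows the same argument as the paper: every reachable rule is expanded exactly once, so the $r_G(A)$ right-hand-side occurrences of $A$ each appear exactly once as children in the pruned tree, and exactly one of them (the first encounter) is expanded while the other $r_G(A)-1$ are leaves. Your added remarks are also accurate: $S$ never occurs as a child, and the hypothesis $r_G(A)\ge2$ is needed only to guarantee at least one leaf, not for the count itself.
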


\begin{proof}
Every reachable rule is expanded exactly once. Hence all expanded right-hand sides together contain $r_G(A)$ occurrences of $A$; exactly one of them becomes an internal node (the first encounter, which is expanded), and all others become leaves.
\end{proof}

The leaves of the pruned tree, read from left to right, spell out $T$, so their intervals are pairwise disjoint; this is the sole source of disjointness. Writing $t_G$ for the total number of terminals on all right-hand sides and adding up the lengths of the leaves, we obtain
\begin{equation}\label{eq:count}
n=t_G+\sum_{A\ne S}\bigl(r_G(A)-1\bigr)|\operatorname{val}(A)|.
\end{equation}

\begin{theorem}\label{thm:disjoint}
For every reduced grammar satisfying Observation~\ref{obs:twice}, one can choose for each dictionary string a representative occurrence in $T$ such that all representative intervals are pairwise disjoint. In particular, there is a partition $T=S_1\cdots S_k$ whose blocks of length at least two, after removing duplicates, are exactly the dictionary of an optimal grammar.
\end{theorem}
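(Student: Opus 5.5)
We work with the pruned derivation tree described above and with Lemma~\ref{lem:leaves}. Fix a reduced grammar $G$ in which every auxiliary nonterminal is referenced at least twice. By Lemma~\ref{lem:leaves}, each auxiliary nonterminal $A$ has $r_G(A)-1\ge1$ unexpanded leaves in the pruned tree. We choose one such leaf for each $A$ and take as the representative of $\operatorname{val}(A)$ the interval of $T$ spanned by that leaf.

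The key step is to check that these intervals are pairwise disjoint. The leaves of the pruned tree, read from left to right, spell out $T$. Every node of the tree corresponds to an interval of $T$: the interval of an internal node is the concatenation of its children's intervals, and the root spans $[1,n]$. By induction on the tree, the leaf intervals therefore form a partition of $[1,n]$. Since we chose distinct leaves for distinct nonterminals, the representatives are distinct blocks of this partition and hence pairwise disjoint. The interval of the leaf for $A$ spells $\operatorname{val}(A)$, so it is indeed an occurrence of the corresponding dictionary string. Because $G$ is reduced, different nonterminals generate different strings, so each dictionary string gets exactly one representative.

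For the ``in particular'' part, we take an optimal grammar and first make it reduced. Following the reductions in the preliminaries, we remove unreachable rules, inline single-symbol auxiliary rules, merge equal-valued nonterminals, and, when $n\ge2$, handle $S\to A$. None of these operations increases the size, so the result is still optimal. By Observation~\ref{obs:twice}, every auxiliary nonterminal of this grammar is referenced at least twice, so the first part applies. We form the partition whose blocks are the chosen representative intervals together with singleton blocks for all remaining positions. In a reduced grammar no auxiliary rule has a single symbol, so every auxiliary expansion has length at least two. Hence the blocks of length at least two are exactly the representatives, and they are therefore exactly the dictionary of the optimal grammar.

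I expect no serious obstacle here. The only point needing care is that Lemma~\ref{lem:leaves} requires at least two references in order to guarantee an unexpanded leaf. We must also make sure that the reductions to a reduced form preserve optimality, so that Observation~\ref{obs:twice} can be applied to the reduced grammar.
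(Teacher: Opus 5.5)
Your proof is correct and is essentially the paper's argument. For each auxiliary nonterminal you take one unexpanded leaf of the pruned derivation tree, which exists by Lemma~\ref{lem:leaves} and Observation~\ref{obs:twice}; the leaves tile $T$, so the chosen intervals are disjoint, and every remaining position becomes a singleton block. You also spell out two points the paper leaves implicit: why the leaf intervals partition $[1,n]$, and why every auxiliary expansion of a reduced grammar has length at least two.
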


\begin{proof}
By Lemma~\ref{lem:leaves}, every auxiliary nonterminal has at least one unexpanded leaf; here $r_G(A)\ge2$ is guaranteed by Observation~\ref{obs:twice}. Choosing one leaf per auxiliary nonterminal as its representative gives pairwise disjoint intervals. If every remaining position forms a single-character block, then the blocks of length at least two are exactly the strings of the dictionary.
\end{proof}

The following small example illustrates the relation between pruning, leaves, and representatives.

\begin{example}\label{ex:running}
Consider $T=\texttt{abcabcbcbc}$ and the grammar
\[
S\to A\,A\,B\,B,\qquad A\to\texttt{a}\,B,\qquad B\to\texttt{bc}.
\]
Its size is $4+2+2=8$, with $r_G(A)=2$ and $r_G(B)=3$. After pruning, the expansions of the leaves are, in order,
\[
\texttt{a}\mid\texttt{b}\mid\texttt{c}\mid
\underbrace{\texttt{abc}}_A\mid
\underbrace{\texttt{bc}}_B\mid
\underbrace{\texttt{bc}}_B.
\]
The unexpanded leaf of $A$ lies at $[4,6]$, and the two unexpanded leaves of $B$ lie at $[7,8]$ and $[9,10]$. Taking the first two intervals gives disjoint representatives for the dictionary $\{\texttt{abc},\texttt{bc}\}$. Note that $\texttt{abc}$ and $\texttt{bc}$ have overlapping occurrences in $T$, which this choice of representatives avoids.
\end{example}

\begin{corollary}\label{cor:totallen}
A reduced optimal grammar satisfies
\begin{equation}\label{eq:totallen}
\sum_{U\in\mathcal D(G)}|U|\le n.
\end{equation}
\end{corollary}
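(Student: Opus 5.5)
The plan is to read the bound off from the existence of disjoint representatives. Let $G$ be a reduced optimal grammar. By Observation~\ref{obs:twice}, every auxiliary nonterminal of $G$ is referenced at least twice, so the hypotheses of Theorem~\ref{thm:disjoint} hold. That theorem then provides, for each dictionary string $U\in\mathcal D(G)$, a representative interval $I_U\subseteq[1,n]$ with $T[I_U]=U$, and these intervals are pairwise disjoint.

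Because $G$ is reduced, distinct nonterminals generate distinct strings. The map from dictionary strings to auxiliary nonterminals is therefore a bijection, and each dictionary string receives exactly one interval. Pairwise disjoint subintervals of $[1,n]$ have total length at most $n$, so
\[
\sum_{U\in\mathcal D(G)}|U|=\sum_{U\in\mathcal D(G)}|I_U|\le n .
\]

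As a cross-check, the same inequality follows directly from equation~\eqref{eq:count}. Since $r_G(A)-1\ge1$ for every auxiliary $A$ and $t_G\ge0$, we get $n\ge\sum_{A\ne S}|\operatorname{val}(A)|$. Reducedness again identifies this sum with the sum over $\mathcal D(G)$.

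There is no real obstacle here. The only point needing care is that the dictionary is a set of strings, not a multiset of nonterminals. This is handled by the reducedness requirement that distinct nonterminals have distinct expansions, which ensures no string is counted twice or left without a representative.
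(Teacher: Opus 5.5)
Your proof is correct and takes the paper's approach: the corollary is stated without a separate proof as an immediate consequence of Theorem~\ref{thm:disjoint} (pairwise disjoint representatives inside $[1,n]$), and your cross-check via \eqref{eq:count} with $r_G(A)\ge2$ is the same bound read off from the leaf count of the pruned tree. Reducedness is needed only to make the two sums equal; the inequality $\sum_{U\in\mathcal D(G)}|U|\le\sum_{A\ne S}|\operatorname{val}(A)|$ holds without it.
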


\subsection{The partition enumeration algorithm}

By Theorem~\ref{thm:disjoint}, some block partition yields exactly the dictionary of an optimal grammar, so the algorithm is straightforward. Deciding independently for each of the $n-1$ gaps between consecutive characters of $T$ whether to cut there enumerates all $2^{n-1}$ partitions. For each partition, the distinct blocks of length at least two that are proper substrings form a candidate dictionary $\mathcal D$, and the algorithm returns the minimum of $F_T(\mathcal D)$.

\begin{theorem}\label{thm:2n}
This algorithm computes $g(T)$ in $O^*(2^n)$ time and $O(n)$ machine words of space, and it can output an optimal grammar within the same asymptotic bounds.
\end{theorem}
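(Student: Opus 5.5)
Correctness follows from Lemma~\ref{lem:parse} and Theorem~\ref{thm:disjoint}; the rest is a resource count.

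\emph{Soundness.} Every partition yields a candidate dictionary $\mathcal D$. It consists of distinct proper substrings of $T$, each of length at least two, so by Lemma~\ref{lem:parse} we have $F_T(\mathcal D)\ge g(T)$. Hence the minimum returned by the algorithm is never below $g(T)$.

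\emph{Optimality.} Take an optimal grammar and make it reduced; by Section~\ref{sec:prelim} this does not increase its size. It then satisfies Observation~\ref{obs:twice}. Theorem~\ref{thm:disjoint} gives a partition whose blocks of length at least two, after removing duplicates, are exactly its dictionary $\mathcal D(G)$. Every dictionary string is a proper substring of $T$, since in a reduced grammar referenced nonterminals have strictly shorter expansions than $T$. So the filtering step of the algorithm keeps precisely $\mathcal D(G)$. By Lemma~\ref{lem:parse}, $F_T(\mathcal D(G))=g(T)$, and the minimum equals $g(T)$.

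\emph{Time and space.} The $2^{n-1}$ partitions are enumerated by a counter over the $n-1$ gaps, stored in $O(n)$ bits, i.e.\ $O(1)$ words for fixed word size, or $O(n)$ words in any case. For each partition:
\begin{itemize}
\item Form $\mathcal D$ as a list of endpoint pairs, with at most $n/2$ blocks, so $O(n)$ words.
\item Deduplicate by pairwise character comparison in polynomial time, keeping one representative interval per distinct string.
\item Evaluate $F_T(\mathcal D)$ with the dynamic program~\eqref{eq:dp}, one string $U\in\mathcal D\cup\{T\}$ at a time. Each table $d[0..|U|]$ has at most $n+1$ entries and is discarded after use, keeping only a running sum.
\end{itemize}
By the implementation remark after Lemma~\ref{lem:parse}, this takes polynomial time and $O(n)$ words, since the blocks are disjoint and have total length at most $n$. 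The total time is $O^*(2^n)$.

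\emph{Outputting a grammar.} The only subtle point is doing this in linear space: we cannot store the best dictionary together with all parse tables. Instead, during enumeration we keep only the current best value and the gap bit-vector of the best partition, which takes $O(n)$ bits. After enumeration, we rebuild that dictionary and recompute each shortest parsing by the dynamic program with back-pointers, again one string at a time. We output each rule as soon as it is computed, so the working space stays $O(n)$ words. Nonterminals are named by the index of their representative interval. Acyclicity holds because each parsing uses only strictly shorter dictionary strings, exactly as in the proof of Lemma~\ref{lem:parse}, so the emitted grammar has size $F_T(\mathcal D)=g(T)$.
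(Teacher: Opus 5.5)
Your proof is correct and follows the paper's argument: soundness from Lemma~\ref{lem:parse}, attainment via the partition of Theorem~\ref{thm:disjoint}, linear-space evaluation of each candidate, and recomputing the best partition's parsings to output the grammar. One harmless slip: $O(n)$ bits is $O(n/\log n)$ words under the paper's $\Theta(\log n)$-bit words, not $O(1)$ words, but this does not affect your $O(n)$-word bound.
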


\begin{proof}
Correctness follows from two matching bounds: by Lemma~\ref{lem:parse}, every candidate has cost at least $g(T)$, and by Theorem~\ref{thm:disjoint}, some partition yields an optimal dictionary, whose cost is exactly $g(T)$.

Each candidate dictionary has total length at most $n$, so its cost can be computed in polynomial time and linear space, and a depth-first enumeration stores only the current partition. To output an optimal grammar, it suffices to record the endpoints of the best candidate and, after the enumeration, recompute its parsings and trace them back.
\end{proof}

\section{Linear-space exact algorithms}\label{sec:longshort}

The bound $O^*(2^n)$ already improves on $O^*(3^n)$, but base two appears to be a natural barrier for partition enumeration. The key to overcoming it is that dictionary strings of length two satisfy a constraint stronger than disjointness: in an optimal grammar with the fewest rules, each of them is referenced at least three times. This suggests treating dictionary strings differently according to their length. Long dictionary strings are enumerated by their occurrence intervals as before, whereas strings of length two are instead selected from the remaining positions by a filter. Requiring at least three occurrences makes this filter restrictive, and the resulting savings bring the base below two.

\subsection{Representative occurrences of long strings}

Fix an optimal grammar $G$ with the fewest rules and split its dictionary into
\[
\mathcal D_2=\{U\in\mathcal D(G):|U|=2\},\qquad
\mathcal D_{\ge3}=\mathcal D(G)\setminus\mathcal D_2.
\]

\begin{lemma}\label{lem:longrep}
There is a set $\mathcal I$ of pairwise disjoint intervals containing exactly one representative occurrence of each $U\in\mathcal D_{\ge3}$. Let
\[
R=[1,n]\setminus\bigcup_{I\in\mathcal I}I.
\]
Then every $U\in\mathcal D_2$ has at least three complete occurrence intervals inside $R$.
\end{lemma}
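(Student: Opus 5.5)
We use the pruned derivation tree from Section~\ref{sec:disjoint}, applied to an optimal grammar $G$ with the fewest rules. Such a $G$ is reduced, and Observation~\ref{obs:twice} holds for it. By Lemma~\ref{lem:leaves}, every auxiliary nonterminal $A$ has exactly $r_G(A)-1$ unexpanded leaves. The leaves of the pruned tree occupy pairwise disjoint intervals of $T$. For each $U\in\mathcal D_{\ge3}$ we take $\mathcal I$ to consist of one unexpanded leaf of the corresponding nonterminal. This gives the first claim directly, exactly as in Theorem~\ref{thm:disjoint}.

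For the second claim, fix $B$ with $\operatorname{val}(B)=U\in\mathcal D_2$. By Lemma~\ref{lem:three}, $r_G(B)\ge3$. Since $B$ has a two-symbol right-hand side, both symbols are terminals, and the expansion of $B$ is $U$ itself. We will exhibit at least three pairwise distinct occurrence intervals of $U$, each disjoint from every interval in $\mathcal I$. The natural source consists of the $r_G(B)$ occurrences of $B$ on right-hand sides of rules that are expanded exactly once in the pruned tree. Since every reachable rule is expanded exactly once, these occurrences become $r_G(B)$ nodes of the pruned tree. One of them is internal, namely the first encounter, whose two children are terminal leaves. The other $r_G(B)-1$ are unexpanded leaves. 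Each such node covers an interval of length two spelling $U$, and these intervals are pairwise disjoint. The reason is that distinct nodes of the pruned tree that are not ancestors of one another cover disjoint intervals, and no occurrence of $B$ can be an ancestor of another, because $B$'s only children are terminals. This yields $r_G(B)\ge3$ disjoint occurrences of $U$.

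The main obstacle is showing that these occurrences avoid the intervals of $\mathcal I$. Each chosen representative $I\in\mathcal I$ for $A$ is an unexpanded leaf, so nothing below it appears in the pruned tree. The occurrence nodes of $B$ described above are nodes of the pruned tree, and neither is an ancestor of the other, since a leaf has no descendants and a $B$-node has only terminal descendants. Hence the interval of a $B$-node is disjoint from $I$: two nodes of a tree that are not in an ancestor relation have disjoint yields. The $B$-nodes are also distinct from the $\mathcal I$-leaves, because $|U|=2<3\le|\operatorname{val}(A)|$. Therefore all $r_G(B)\ge3$ occurrence intervals lie inside $R$, which completes the argument. The one point to check carefully is that the internal (first-encounter) node of $B$ also counts. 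Its interval is the union of its two terminal leaves, which are themselves leaves of the pruned tree and hence disjoint from every leaf in $\mathcal I$.
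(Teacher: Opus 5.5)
Your proof is correct and takes essentially the same approach as the paper: choose one unexpanded leaf per long nonterminal in the pruned derivation tree, then note that the $r_G(B)\ge3$ occurrence nodes of each length-two nonterminal have pairwise disjoint yields that avoid those leaves. The only difference is cosmetic. The paper modifies the pruning so that length-two nonterminals are never expanded, which makes all $r_G(B)$ occurrences leaves. You keep the standard pruning and treat the one expanded first-encounter node of $B$ separately, and this works equally well.
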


\begin{proof}
We modify the pruning of Section~\ref{sec:disjoint}: nonterminals whose expansion has length at least three are still expanded only at their first encounter, but nonterminals with expansion length two are never expanded and always kept as leaves.

Then every long rule is expanded exactly once. By Observation~\ref{obs:twice}, the corresponding nonterminal has at least one unexpanded leaf, and choosing one leaf of each such nonterminal gives $\mathcal I$.

Now consider a nonterminal $A$ of expansion length two. Its right-hand side consists of terminals only, so every reference to $A$ lies in the start rule or in a long rule. Each of these rules is expanded exactly once, so all $r_G(A)$ references to $A$ remain as leaves. By Lemma~\ref{lem:three}, $r_G(A)\ge3$. These leaves are pairwise disjoint and disjoint from the chosen long representatives, so they all lie in $R$.
\end{proof}

We use Lemma~\ref{lem:longrep} as a filter: since the short dictionary strings of an optimal grammar occur at least three times within the remaining positions, restricting to such strings loses no optimal solution. For any set $\mathcal I$ of long representatives and the corresponding set $R$ of remaining positions, define
\begin{equation}\label{eq:C2}
\mathcal C_2(R)
=
\left\{
U\in\Sigma^2:
\#\left\{
i:
i,i+1\in R,\;
T[i..i+1]=U
\right\}
\ge3
\right\}.
\end{equation}
Occurrences are counted in the original coordinates of $T$, and characters on either side of a long interval are \emph{not} considered adjacent; otherwise, spurious ``occurrences'' spanning a long representative would be counted. With this convention, Lemma~\ref{lem:longrep} directly gives $\mathcal D_2\subseteq\mathcal C_2(R)$.

\begin{observation}\label{obs:C2}
\begin{equation}\label{eq:C2bound}
|\mathcal C_2(R)|\le\left\lfloor\frac{|R|}{3}\right\rfloor.
\end{equation}
\end{observation}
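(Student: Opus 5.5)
The bound comes from charging each string in $\mathcal C_2(R)$ to three distinct positions of $R$, namely the start positions of three of its occurrences. The point is that a start position determines the string: if $i,i+1\in R$, the position $i$ is the start of exactly one length-two occurrence, namely of $T[i..i+1]$. So occurrences of different strings $U\ne U'$ in $\mathcal C_2(R)$ have different start positions.

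Concretely, consider the map $\phi$ that sends each index $i$ with $i,i+1\in R$ to the string $T[i..i+1]\in\Sigma^2$. By definition~\eqref{eq:C2}, $U\in\mathcal C_2(R)$ exactly when $|\phi^{-1}(U)|\ge3$. The fibres $\phi^{-1}(U)$ for distinct $U$ are pairwise disjoint subsets of the domain of $\phi$, which is contained in $R$. Hence
\[
3\,|\mathcal C_2(R)|\le\sum_{U\in\mathcal C_2(R)}|\phi^{-1}(U)|\le\#\{i: i,i+1\in R\}\le|R|.
\]
Since $|\mathcal C_2(R)|$ is an integer, this gives $|\mathcal C_2(R)|\le\lfloor|R|/3\rfloor$.

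The only step that needs care is that the argument must not assume occurrences are disjoint. Occurrences of the same string may overlap (as with \texttt{aa} inside \texttt{aaaa}), and so may occurrences of different strings. Counting start positions rather than covered positions avoids this issue entirely, because each start position is used at most once overall. The adjacency convention stated after~\eqref{eq:C2} is built into the domain of $\phi$ through the requirement $i,i+1\in R$ in the original coordinates. That domain even excludes the last position of every maximal run of $R$, so the slightly stronger bound $3|\mathcal C_2(R)|\le|R|-(\text{number of runs of }R)$ also holds, although it is not needed.
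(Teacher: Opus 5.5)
Your proposal is correct and is the same argument as the paper's: each string in $\mathcal C_2(R)$ is charged to at least three valid starting positions, each starting position determines one length-two string, and there are at most $|R|$ such positions. Your observation that counting start positions makes overlapping occurrences harmless is correct. So is the refinement $3|\mathcal C_2(R)|\le|R|-(\text{number of runs of }R)$, although it is not needed.
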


\begin{proof}
Each candidate has at least three valid starting positions, each starting position determines a single string of length two, and there are at most $|R|$ valid starting positions.
\end{proof}

This factor of one third acts in the exponent; together with the exclusion of blocks of length two from the partitions, it improves the base from $2$ to $\rho<1.88$.

\subsection{The algorithm and its correctness}

The algorithm is similar to that of Section~\ref{sec:disjoint}, except that the shape of the partitions is restricted and each partition is followed by an enumeration of subsets of short strings. Specifically, we enumerate block partitions $P:T=S_1\cdots S_k$ in which every block has length one or at least three; blocks of length two are excluded because short strings are obtained by filtering. For each partition, we perform the following steps.
\begin{enumerate}[leftmargin=*]
\item Remove duplicates among the blocks of length at least three to obtain a candidate long dictionary $\mathcal L(P)$. If one of these blocks is the whole of $T$, skip the partition.
\item Let $R(P)$ be the set of positions of single-character blocks, and compute $\mathcal C_2(R(P))$.
\item For each subset $\mathcal Q\subseteq\mathcal C_2(R(P))$, compute
\begin{equation}\label{eq:candidate}
F_T\bigl(\mathcal L(P)\cup\mathcal Q\bigr).
\end{equation}
\end{enumerate}
The algorithm returns the minimum over all candidates. In the degenerate case $n\le2$ it uses the single rule $S\to T$.

\begin{lemma}\label{lem:longshort}
The algorithm above computes $g(T)$ exactly.
\end{lemma}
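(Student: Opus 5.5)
The proof establishes two matching bounds, as in Theorem~\ref{thm:2n}. First I will show that every value the algorithm computes is at least $g(T)$. Then I will show that at least one candidate attains $g(T)$ exactly.

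\emph{Lower bound.} Fix a partition $P$ and a subset $\mathcal Q\subseteq\mathcal C_2(R(P))$.
\begin{itemize}
\item Every block of length at least three is an occurrence in $T$, so it is a substring of $T$. Partitions with a block equal to $T$ are skipped, so these blocks are proper substrings.
\item Every $U\in\mathcal Q$ occurs in $T$ by the definition of $\mathcal C_2$. It is proper because $n\ge3$ in the non-degenerate case.
\end{itemize}
Hence $\mathcal L(P)\cup\mathcal Q$ is a valid candidate dictionary of distinct proper substrings of length at least two. Lemma~\ref{lem:parse} then gives $F_T(\mathcal L(P)\cup\mathcal Q)\ge g(T)$. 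The case $n\le2$ is trivial, since $g(T)=n$ there.

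\emph{Attainment.} Fix an optimal grammar $G$ with the fewest rules; it is reduced and satisfies Observation~\ref{obs:twice} and Lemma~\ref{lem:three}. Lemma~\ref{lem:longrep} supplies disjoint intervals $\mathcal I$, one representative of each $U\in\mathcal D_{\ge3}$. Let $P$ be the partition whose blocks are the intervals of $\mathcal I$ together with a singleton block for every other position. All blocks of $P$ then have length one or at least three, so the algorithm enumerates $P$.

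I must also check that no block equals $T$. In a reduced grammar every dictionary string is a proper substring, because a nonterminal with expansion $T$ could only be the start symbol. After removing duplicates, $\mathcal L(P)=\mathcal D_{\ge3}$, since $\mathcal I$ contains exactly one interval per long dictionary string.

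Next, $R(P)$ coincides with the set $R$ of Lemma~\ref{lem:longrep}. That lemma states that each $U\in\mathcal D_2$ has three complete occurrences inside $R$. By the non-adjacency convention, these are exactly occurrences $T[i..i+1]$ with $i,i+1\in R(P)$, so $\mathcal D_2\subseteq\mathcal C_2(R(P))$. The algorithm therefore tries $\mathcal Q=\mathcal D_2$. Since $\mathcal L(P)\cup\mathcal D_2=\mathcal D(G)$, the equality case of Lemma~\ref{lem:parse} gives cost exactly $g(T)$.

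\emph{Main obstacle.} The step needing the most care is identifying $\mathcal C_2(R(P))$ with the filter of Lemma~\ref{lem:longrep}. Two points must be confirmed.
\begin{itemize}
\item The leaves of length-two nonterminals are complete intervals inside $R$, so both of their positions lie in singleton blocks. This holds because those leaves are disjoint from every chosen long representative.
\item The three leaves give three distinct starting indices. They are pairwise disjoint intervals, so their starting indices differ.
\end{itemize}
Beyond this, I must check that the existence of $G$, which minimizes first size and then rule count, is well defined; it is, because the set of optimal grammars is nonempty and rule counts are natural numbers.
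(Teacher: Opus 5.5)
Your proof is correct and follows the paper's argument. The lower bound comes from Lemma~\ref{lem:parse}, and attainment uses the partition built from the long representatives of Lemma~\ref{lem:longrep}, which the algorithm enumerates and which reaches $\mathcal Q=\mathcal D_2$; your extra checks (that candidates are proper substrings, that no block equals $T$, and that $R(P)=R$) spell out details the paper leaves implicit.
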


\begin{proof}
As before, Lemma~\ref{lem:parse} implies that every candidate has value at least $g(T)$.

Conversely, we show that the algorithm encounters an optimal solution. Take the optimal grammar of Lemma~\ref{lem:longrep}, use its long representatives as blocks, and let every remaining position form a single-character block. This partition $P$ has the required shape and is therefore enumerated, with $\mathcal L(P)=\mathcal D_{\ge3}$. Since $\mathcal D_2\subseteq\mathcal C_2(R(P))$, the subset enumeration reaches $\mathcal Q=\mathcal D_2$. The full candidate dictionary is then exactly $\mathcal D(G)$, whose cost is $g(T)$.
\end{proof}

\subsection{Running time and space}

The analysis combines the number of partitions and the number of subsets per partition into a single weighted count. If a partition $P$ has $s$ single-character blocks, then $|R(P)|=s$, and by~\eqref{eq:C2bound} it produces at most $2^{s/3}$ candidates for the short dictionary. Assigning weight $a=2^{1/3}$ to each single-character block and weight one to every other block, the total weight bounds the total number of candidates. Let $A_n$ be the total weight of all admissible partitions of length $n$. Then $A_0=1$ and
\begin{equation}\label{eq:rec}
A_n
=
aA_{n-1}
+
\sum_{\ell=3}^{n}A_{n-\ell}.
\end{equation}
By the definition of the weights, the total number of candidates is at most $A_n$. Let $\rho>1$ be the solution of
\begin{equation}\label{eq:rho}
\frac a\rho
+
\sum_{\ell=3}^{\infty}\rho^{-\ell}
=
\frac a\rho
+
\frac{1}{\rho^2(\rho-1)}
=
1.
\end{equation}
The left-hand side is continuous and strictly decreasing on $(1,\infty)$, tending to infinity as $\rho\to1^+$ and to zero as $\rho\to\infty$, so the solution exists and is unique. Induction on~\eqref{eq:rec} gives
\begin{equation}\label{eq:An}
A_n
\le
a\rho^{n-1}
+
\sum_{\ell=3}^{n}\rho^{n-\ell}
\le
\rho^n.
\end{equation}
Solving numerically gives $\rho=1.8722572196\ldots<1.88$.

The space bound is unaffected. A depth-first enumeration stores only the current partition, the long dictionary, and the candidate strings of length two, all within $O(n)$ words; since candidate dictionaries have total length $O(n)$, the dynamic program of Lemma~\ref{lem:parse} still runs in linear space. Together with~\eqref{eq:An}, this proves part~1 of Theorem~\ref{thm:main}; an optimal grammar can be output as in Theorem~\ref{thm:2n}.

\subsection{An algorithm depending on the alphabet size}\label{sec:alphalg}

Section~\ref{sec:longshort} treated dictionary strings of length two separately. Over a small alphabet, this idea extends to much longer strings: there are only $O(\sigma^L)$ strings shorter than a threshold $L$, and at most $n/L$ disjoint representatives of length at least $L$. We therefore enumerate short strings by content and long strings by position, and choose $L$ to balance the two costs.

Suppose the input uses at most $\sigma$ symbols, where $2\le\sigma\le n$; for a unary input we take $\sigma=2$. Fix an integer threshold $L\ge3$. It suffices to consider short strings that occur in $T$ as proper substrings, and the number of such candidates shorter than $L$ is at most
\[
\sum_{\ell=2}^{L-1}\sigma^\ell=O(\sigma^L),
\]
so all short dictionaries can be enumerated in $2^{O(\sigma^L)}$ time. For long strings, we again use Theorem~\ref{thm:disjoint}: the strings of length at least $L$ in an optimal dictionary have disjoint representatives, of which there are at most $\lfloor n/L\rfloor$. It remains to count the families of such disjoint intervals.

\begin{lemma}\label{lem:intervals}
The number of families of at most $n/L$ pairwise disjoint nonempty intervals of integer positions in $[1,n]$ is at most
\[
\exp\!\left(O\!\left(\frac{n\log L}{L}+\log n\right)\right).
\]
\end{lemma}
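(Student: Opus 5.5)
A family of $k$ pairwise disjoint nonempty intervals in $[1,n]$ is determined by its $2k$ endpoints. Listed left to right, these form a weakly increasing sequence $l_1\le r_1<l_2\le r_2<\dots<l_k\le r_k$. So I will count each family through its set of boundary positions and then sum over $k\le n/L$.

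\textbf{Step 1 (encoding).} I encode the family as two $k$-subsets of $[1,n]$: the left endpoints $\{l_i\}$ and the right endpoints $\{r_i\}$. Disjointness makes each of these lists strictly increasing, and the two lists together determine the family. Hence, for fixed $k$, the number of families is at most $\binom{n}{k}^2$. (An alternative is to mark cut positions among the $n+1$ gaps, giving $\binom{n+1}{2k}$; either bound works.)

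\textbf{Step 2 (estimate).} For $k\le n/L$ and $L\ge2$, I use $\binom{n}{k}\le(\mathrm e n/k)^k$. The function $k\mapsto k\log(\mathrm e n/k)$ is increasing for $k\le n$, so its maximum over $k\le n/L$ is attained at $k=n/L$. This gives
\[
\log\binom{n}{k}\le \frac nL\log(\mathrm e L)=O\!\left(\frac{n\log L}{L}\right),
\]
using $\log(\mathrm e L)=O(\log L)$ for $L\ge2$. I then square this bound and multiply by the at most $n/L+1\le n+1$ choices of $k$, including the empty family $k=0$. The extra factor contributes only the additive $O(\log n)$ term. Together these give $\exp\bigl(O(n\log L/L+\log n)\bigr)$.

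\textbf{Main obstacle.} The only delicate point is the monotonicity used in Step 2. The derivative of $k\log(\mathrm e n/k)$ is $\log(n/k)\ge0$, so the function is increasing on $[1,n]$ and the extremal $k$ is $n/L$. The case where $n/L$ is not an integer is handled by taking floors, which changes nothing. Degenerate cases with $L>n$ force $k=0$, and the bound holds trivially there.
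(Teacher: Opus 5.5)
Your proof is correct and essentially the paper's: the paper counts families of $k$ intervals exactly as $\binom{n+k}{2k}$ by stars and bars on interval lengths and gaps, whereas your encoding by the sets of left and right endpoints gives the cruder $\binom{n}{k}^2$, which is equally sufficient. After that, both arguments use $O(k\log(\mathrm e n/k))$, maximize at $k=n/L$, and pay $O(\log n)$ for the sum over $k$. One correction to your parenthetical: $\binom{n+1}{2k}$ is not a valid bound, because adjacent intervals share a cut point (for $n=k=2$ the family $\{[1,1],[2,2]\}$ exists but $\binom{3}{4}=0$); counting cut points correctly gives exactly the paper's $\binom{n+k}{2k}$, so drop that remark.
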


\begin{proof}
For a family of $k$ intervals, let $\ell_1,\ldots,\ell_k\ge1$ be the interval lengths in order and $g_0,\ldots,g_k\ge0$ the gap lengths. Substituting $\ell'_i=\ell_i-1$ to remove the lower bounds gives
\[
\sum_{i=1}^k\ell'_i+\sum_{i=0}^kg_i=n-k.
\]
By stars and bars, the number of families is $\binom{n+k}{2k}$. For $1\le k\le n/L$,
\[
\log\binom{n+k}{2k}
=O\bigl(k\log(en/k)\bigr)
=O(n\log L/L).
\]
Summing over all admissible $k$ adds only an $O(\log n)$ term.
\end{proof}

The algorithm is now as follows. Enumerate families of intervals, discard intervals shorter than $L$ or covering all of $T$, and remove duplicate strings to obtain a long dictionary; combine it with every short dictionary and compute the cost by~\eqref{eq:F}. Correctness follows from Lemma~\ref{lem:parse} and Theorem~\ref{thm:disjoint}, and the cost of each candidate dictionary can be computed in polynomial time. Lemma~\ref{lem:intervals} therefore gives a total running time of
\begin{equation}\label{eq:alphtime}
\exp\!\left(
O\!\left(\sigma^L+\frac{n\log L}{L}+\log n\right)
\right).
\end{equation}

It remains to choose the threshold. Let $R=\log_\sigma n$. If $R\ge6$, take $L=\max\{3,\lfloor R/2\rfloor\}$; then $\sigma^L\le\sqrt n$ and
\[
\frac{n\log L}{L}
=O\!\left(\frac{n\log(R+1)}{R}\right).
\]
Since $R\le\log_2n$, we have $\sqrt n=o(n/R)$ for sufficiently large $n$. Substituting these bounds into~\eqref{eq:alphtime} yields
\begin{equation}\label{eq:alphbound}
2^{O\!\left(\frac{n}{R}\log(R+1)\right)}.
\end{equation}
If $R<6$, we use the algorithm of Section~\ref{sec:longshort}; since $\log(R+1)/R=\Theta(1)$ in this case, the bound~\eqref{eq:alphbound} holds as well.

The space usage is again linear. When $R\ge6$, the long representatives have total length at most $n$, and for short strings we store each string that occurs in $T$ together with one of its positions, using
\[
O(L\sigma^L)=O(R\sqrt n)=O(n)
\]
space. Hence the algorithm uses $O(n)$ machine words, which proves part~2 of Theorem~\ref{thm:main}. In particular, for a fixed alphabet the running time is $\exp(O(n\log\log n/\log n))$, and more generally it is subexponential whenever $\sigma=n^{o(1)}$.

\section{Sparse interval certificates and a uniform alphabet-sensitive algorithm}
\label{sec:sparse-interval-tensor}

The disjoint-representative theorem gives a certificate in which all rule
definitions and references are input intervals. We represent the remaining
short-interval choices by a tensor network and exploit a bound on the number
of selected intervals. The alphabet enters through an upper bound on the
optimum grammar size, rather than through a separate enumeration of short
dictionary contents. We assume $n\ge2$ throughout this section; an input of
length one has optimum size one and is handled directly.

\subsection{External counting input and the parameterized bound}

The external ingredient is the Boolean tensor contraction theorem in the
quantitative appendix of the OpenAI preprint~\cite{sat}, in the version
at commit
\texttt{42907c9e999cc00d6abe7a50bfc1023250edd1b9}. We state it in the form
needed here, with all of its hypotheses. Fix absolute positive constants $C,A,b$ and let
$Q\ge2$ be an integer. A bipartite network with Boolean wires, at most $CQ$
wires, maximum degree at most $\lceil Q^{9/10}\rceil$, and a number of vertices
bounded by a fixed polynomial in $Q$ has entries in
\[
 \mathcal B=(\mathbb Z/2^P\mathbb Z)[z]/(z^\ell),
 \qquad 1\le P,\ell\le CQ.
\]
Let $\mathcal L$ be the total length of its graph and tensor descriptions.
If each tensor entry is computable uniformly from its description in
$(Q+\mathcal L)^b2^{A Q^{9/10}}$ time and $(Q+\mathcal L)^b$ space, its contraction
can be computed deterministically in
\[
 (Q+\mathcal L)^{O(1)}2^{O(Q/\sqrt{\log Q})}
\]
time and polynomial space. All entries below are computable in polynomial
time. We use this tensor theorem directly, together with its bounds on wires,
degree, and precision; no fixed-width SAT algorithm is used as a black box.

For such a network, an assignment of a bit to every wire contributes the
product of the incident tensor entries. The contraction is the sum of these
products over all assignments. Parallel wires are distinct. A tensor with
no incident wire contributes its scalar entry.

\begin{theorem}[Budget and threshold parameters]\label{thm:sparse-budget}
Put $t=\log_2(n+2)$. Suppose a known integer $K$ satisfies
\[
 g(T)\le K\le n,\qquad K\ge n/t.
\]
For any integer $2\le L\le t^2$ with $L<n$, an optimal grammar for $T$ can
be found deterministically in polynomial space and time
\begin{equation}\label{eq:sparse-budget-time}
 n^{O(1)}2^{O\left(
 \frac{n\log(L+1)}L+
 \frac{K\log(\mathrm e nL/K)}{\sqrt{\log n}}
 \right)}.
\end{equation}
All constants are absolute. Bounded exceptional values of $n$ are handled
directly throughout this section.
\end{theorem}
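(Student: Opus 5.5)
The plan is to build an interval certificate from the disjoint-representative structure, enumerate its long intervals explicitly, and count the short part with the contraction theorem. Fix an optimal reduced grammar $G$ with the fewest rules. Place each auxiliary rule on its representative occurrence from Theorem~\ref{thm:disjoint} and the start rule on $[1,n]$. Each right-hand side then becomes a partition of its host interval into tokens. A token is a single terminal position or an occurrence of a shorter dictionary string, and each reference must carry the same content as the representative of its nonterminal. The certificate consists of three families: representatives, start-rule tokens, and tokens of auxiliary rules. Its size is $O(K)$, since there are at most $K$ tokens and at most $K/2$ representatives. Conversely, any such family of intervals with consistent contents and total token count $k$ yields a grammar of size $k$, by the argument of Lemma~\ref{lem:parse}. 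The problem thus becomes deciding whether a valid certificate of cost $\le k$ exists, for $k\le K$, with self-reduction to recover the grammar.

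Step one handles long intervals. Representatives of length $>L$ are pairwise disjoint, so there are at most $n/L$ of them. By Lemma~\ref{lem:intervals}, they can be enumerated in $2^{O(n\log(L+1)/L)}$ time. Long tokens are references to long dictionary strings. Each such reference is determined by its nonterminal and its start position, but there can be many of them. My first attempt is to keep long contents fixed by the enumeration and let tokens referencing them be Boolean wires as well. The degree issue there is handled by the same summary-block device described below. This makes the enumeration cost only $2^{O(n\log(L+1)/L)}$.

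Step two builds the network for each guess. Candidate short intervals are all intervals of length $\le L$, of which there are at most $nL$. Position tensors at each position and each rule check that the chosen tokens tile each host interval exactly. Content tensors, indexed by the distinct strings of length $\le L$ occurring in $T$, check that every chosen reference token has a chosen definition of equal content, and that each content has at most one definition. A content with many occurrences would give high degree, so its occurrences are split into blocks of size about $Q^{9/10}$. Each block gets a summary wire that records whether some reference or definition inside it is active, which keeps every degree within $\lceil Q^{9/10}\rceil$. A cost variable $z$ in $(\mathbb Z/2^P)[z]/(z^{K+1})$ multiplies by $z$ for every token, and the coefficients of $z^{\le k}$ count valid certificates. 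Taking $P=O(K\log n)$ bits exceeds the logarithm of the number of certificates, so a positive count is never zero modulo $2^P$.

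Step three is the sparsification, which I expect to be the main obstacle. There are about $nL$ raw wires, far more than the $O(K\log(enL/K))$ we may afford. A certificate sets at most $8K$ wires to one, including summary wires. The plan is to group tensors into super-vertices, say position windows. Each group gets a new multi-bit wire whose value encodes the subset of its incident raw wires that are set to one, and the total description uses $\log\binom{nL}{\le 8K}=O(K\log(enL/K))$ bits. Distributing these bits across groups without losing control of degree and wire count is the delicate part. I would assign each group a bit budget proportional to its share of ones, and guess that share profile in $2^{O(K)}$ ways. This gives $Q=O(K\log(enL/K))$. Since $K\ge n/t$ and $L\le t^2$, $\log Q=\Theta(\log n)$, and the theorem yields $2^{O(Q/\sqrt{\log n})}$ with polynomial-time entries. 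Multiplying by the long-interval enumeration gives \eqref{eq:sparse-budget-time}, and the space is polynomial.
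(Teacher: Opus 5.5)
Your architecture matches the paper's: three interval families, an explicitly enumerated long part, a short part counted in a truncated ring with a cost variable, summary blocks for popular contents, subset indexing of the at most $8K$ active wires, and self-reduction. But Step one has a genuine gap. You enumerate only $D^{>L}$ and turn the long reference tokens into Boolean wires. That destroys the raw wire bound $M=O(nL)$ on which your Step three relies (you write $\log\binom{nL}{\le 8K}$). Once the long contents are fixed, each can have $\Theta(n)$ candidate occurrences, and there can be many of them. For $T=a^n$, the pattern with representatives $a^{L+1},\ldots,a^{L+\lfloor\sqrt n\rfloor}$ is admissible within the budget and yields $\Theta(n^{3/2})$ candidate long tokens; a middle position is covered by $\Theta(n)$ of them, so the position tensors also blow up in degree. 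Then $\log_2\binom{M+8K}{8K}=\Theta(K\log n)$, so $Q=\Theta(K\log n)$ and the contraction theorem gives only $2^{O(K\sqrt{\log n})}$. The target is $2^{O(K\log(\mathrm enL/K)/\sqrt{\log n})}$, where $\log(\mathrm enL/K)=O(\log\log n)$ because $K\ge n/t$ and $L\le t^2$. The missing observation is that a single certificate has few long tokens. $R_0^{>L}$ is pairwise disjoint because $R_0$ partitions $[1,n]$, and $R_1^{>L}$ is pairwise disjoint because $R_1$ partitions the union of the disjoint representatives. Each therefore has at most $n/(L+1)$ members and can be enumerated together with $D^{>L}$ at the same $n^{O(1)}2^{O(n\log(L+1)/L)}$ cost. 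Long references then become fixed data, checked directly: each matches a long content, lies inside and is strictly shorter than a long representative, and is shorter than $T$. Their cost is subtracted from the budget, and only the $O(nL)$ short choices become wires.

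Several choices in Steps two and three would also fail as written.
\begin{itemize}
\item Attaching a candidate to a position tensor at every position it covers, as your wording suggests, breaks the $8K$ bound. A selected token of length $\ell$ then sets $\ell$ wires to one, and since $R_0$ tiles $[1,n]$, at least $n$ wires are active. The paper uses position blocks of width $\lceil\sqrt n\rceil\ge L$, so a short candidate meets at most two blocks and has degree at most three including its content port. A block then has degree $O(\sqrt n\,L)$ and also checks that each selected $R_1$ token lies inside its unique parent and is strictly shorter.
\item One summary bit recording ``some reference or definition is active'' cannot enforce $d_w\le1$ and $r_w=1\Rightarrow d_w=1$ at the root. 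You need two deterministic bits $\delta,\rho$.
\item Guessing a share profile in $2^{O(K)}$ ways is not affordable: for $K=\lceil n/t\rceil$ and $L=\lfloor t^2\rfloor$, both terms of the target exponent are $o(K)$. The paper takes only $J=\lceil n^{1/4}\rceil$ groups per side, balanced greedily so each has degree at most $M/J$ plus the maximum original degree, i.e.\ $O(n^{3/4}L)\le Q^{9/10}$. It indexes the active subset separately for each pair $(g,h)$ of a left and a right group, since an original wire has two endpoint groups and its index must sit on wires between them. The counts $a_{gh}$ are guessed in $(H+1)^{J^2}=2^{O(\sqrt n\log n)}$ ways, with $H=8K$, which is absorbed. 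The new wire count is at most $J^2+\log_2\binom{M+H}{H}=O(K\log(\mathrm enL/K))$.
\item Once $Q$ is of that order, your precision $P=O(K\log n)$ violates the contraction theorem's hypothesis $P\le CQ$. Instead, bound the number of budget-feasible configurations by $\sum_{j\le H}\binom Mj\le\binom{M+H}H$ and take $P=1+\lceil\log_2\binom{M+H}H\rceil=O(Q)$.
\end{itemize}
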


\subsection{A certificate with three interval families}

Use three families $D,R_0,R_1$ of nonempty intervals of $T$ with the following
properties. The dictionary representatives $D$ are pairwise disjoint, have
lengths between two and $n-1$, and have pairwise distinct contents. The
intervals of $R_0$ partition $[1,n]$. The intervals of $R_1$ partition the
union of $D$, and each is contained in one member of $D$ and strictly shorter
than that member. Every reference interval in $R_0\cup R_1$ of length at
least two has the content of a dictionary interval. An $R_0$ reference is
also strictly shorter than $n$.

\begin{lemma}\label{lem:sparse-certificate}
These conditions define a grammar of size $|R_0|+|R_1|$, and some certificate
has size $g(T)$.
\end{lemma}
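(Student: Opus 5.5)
The proof has two directions: a certificate yields a grammar of size $|R_0|+|R_1|$, and an optimal grammar yields a certificate of size $g(T)$.

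\emph{From a certificate to a grammar.} Given a certificate $(D,R_0,R_1)$, introduce one auxiliary nonterminal $A_I$ for each $I\in D$. This is well defined on contents, since the contents of $D$ are pairwise distinct. The start rule reads the intervals of $R_0$ from left to right. A reference interval of length one becomes its terminal, and a longer one becomes the nonterminal $A_J$ for the unique $J\in D$ with the same content. For each $I\in D$, the intervals of $R_1$ contained in $I$ partition $I$, because $R_1$ partitions $\bigcup D$ and the members of $D$ are disjoint. Reading them in order gives the right-hand side of $A_I$ in the same way. Every right-hand side is nonempty and spells the intended content, so $\operatorname{val}(A_I)=T[I]$ and $\operatorname{val}(S)=T$. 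For acyclicity, note that every reference in the rule of $A_I$ is strictly shorter than $I$. Every $R_0$ reference is strictly shorter than $n$, and the start rule is the only rule of length $n$. Hence each reference points to a strictly shorter expansion, and the dependency relation is acyclic. The size is the total number of right-hand side symbols, which is $|R_0|+|R_1|$.

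\emph{From an optimal grammar to a certificate.} Take a reduced optimal grammar $G$; it satisfies Observation~\ref{obs:twice}. Apply the pruning of Theorem~\ref{thm:disjoint} and choose, for each auxiliary $A$, one unexpanded leaf as its representative $I_A$. Let $D=\{I_A\}$. These intervals are pairwise disjoint, and their contents are pairwise distinct because $G$ is reduced. Their lengths are at least two since there are no single-symbol auxiliary rules. They are at most $n-1$ because a reduced grammar with $n\ge2$ references only strictly shorter expansions, and every auxiliary rule is referenced.

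Next, place each rule on an interval of $T$. Let $R_0$ consist of the intervals of the start rule's right-hand side, laid out on $[1,n]$ as in the top level of the derivation tree. For each $A$, lay out $\alpha_A$ on $I_A$ by shifting the expansion intervals of its symbols into $I_A$, and let $R_1$ be the union of these families. Each piece has the content of a terminal or of some $\operatorname{val}(B)$, which equals $T[I_B]$. Each piece is strictly shorter than $I_A$ because the grammar is reduced, and the pieces partition $I_A$; so $R_1$ partitions $\bigcup D$. The size is then $\sum_A|\alpha_A|=g(T)$.

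\emph{Main obstacle.} The real care is the counting. $R_0$ and $R_1$ must be multisets of intervals indexed by rule positions, and the pieces laid out in different members of $D$ must never coincide. They cannot coincide because the $I_A$ are disjoint. Within one rule, the pieces are consecutive nonempty intervals, hence distinct. This gives $|R_0|+|R_1|=|G|$ exactly.

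A subtlety in the first direction is that representatives need not be the only place a content occurs. This is harmless, because matching by content is exactly what the grammar semantics requires. The case $n=1$ is excluded by the section's standing assumption $n\ge2$.
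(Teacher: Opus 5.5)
Your proof is correct and follows the paper's argument. In one direction you read the start rule from $R_0$ and each definition from the $R_1$ intervals inside its representative, with acyclicity coming from the strict decrease in length; in the other you transfer the right-hand-side parsings of a reduced optimal grammar onto the disjoint representatives of Theorem~\ref{thm:disjoint}, and your extra checks (distinct transferred pieces, representative lengths, the partition of $\bigcup D$) only make explicit what the paper summarizes as ``unchanged cost.''
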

\begin{proof}
Create a nonterminal for each dictionary content. Read the start production
from $R_0$ and the defining production for a dictionary interval from the
$R_1$ intervals contained in it. A length-one interval contributes its
terminal letter; a longer one contributes its dictionary nonterminal.
Every reference in a definition strictly decreases expansion length, so
the dependency graph is acyclic. Induction on expansion length shows that
each rule generates its specified string. The start expansion is $T$ and
the sum of right-hand-side lengths is $|R_0|+|R_1|$.

Conversely, choose a reduced optimal grammar and its pairwise disjoint
representatives from Theorem~\ref{thm:disjoint}. Transfer each defining
right-hand-side parsing to its representative occurrence and the start
parsing to $T$. These give the required families with unchanged cost.
\end{proof}

These conditions do not require every dictionary entry to be referenced.
An unreferenced entry is harmless: its definition is still counted in
$R_1$, and the certificate still yields a grammar within the budget.

Fix $L$. Call intervals of length greater than $L$ long, and enumerate
\[
 D^{>L},\qquad R_0^{>L},\qquad R_1^{>L}.
\]
Each family is pairwise disjoint and has total length at most $n$. A family
of $q\le n/(L+1)$ long intervals has at most
$\binom{n+2q}{2q}$ possible ordered endpoint sequences. Summing over $q$
and taking three families gives at most
\begin{equation}\label{eq:sparse-patterns}
 n^{O(1)}2^{O(n\log(L+1)/L)}
\end{equation}
patterns, enumerable one at a time in polynomial space.

Reject a pattern unless its long dictionary contents are distinct proper
substrings, every long reference matches a long dictionary content, every
long $R_1$ reference is contained in a long dictionary representative and
strictly shorter than it, and every long $R_0$ reference is shorter than $T$.
A short dictionary definition cannot contain a long reference, so these
checks cover the parent relation of every long reference. The remaining
coverage conditions are enforced by the network below.

For a decision budget $b\le K$, let
\[
 c_{\rm long}=|R_0^{>L}|+|R_1^{>L}|.
\]
If $c_{\rm long}>b$, reject this pattern. Otherwise the short references
have budget $b'=b-c_{\rm long}$.

\subsection{The interval-selection network}

For each short input interval, create separate Boolean choices for its
possible roles as a dictionary representative, an $R_0$ token, or an $R_1$
token. Dictionary choices have lengths $2,\ldots,L$; token choices have
lengths $1,\ldots,L$. There are $O(nL)$ choices in total. The long pattern
is fixed data and contributes no wires.

Partition the input positions into consecutive blocks of width
$B=\lceil\sqrt n\rceil$, apart from a possible shorter last block. Since
$L\le t^2$, sufficiently large $n$ satisfy $L\le B$. Every short interval
then meets at most two position blocks.

\paragraph{Selection tensors.}
Each candidate choice has one port to each position block it meets and,
if its length is at least two, one port to its content constraint. The
degree of this tensor is at most three. Its entry is zero unless all ports
carry the same bit. When they all carry zero its entry is one. When they
all carry one its entry is $z$ for an $R_0$ or $R_1$ token, and one for a
dictionary representative. Thus every selected short right-hand-side token
contributes exactly one factor $z$. The definition of a selected dictionary
entry is charged through its $R_1$ tokens.

\paragraph{Position tensors.}
A position block receives the choice bits of every short interval meeting
it, together with the fixed long pattern. At every position of that block
it requires the dictionary coverage to be at most one, the $R_0$ coverage
to be exactly one, and the $R_1$ coverage to equal the dictionary coverage.
It also checks every selected short $R_1$ token whose first position lies
in the block: the token must be wholly contained in its unique dictionary
parent and strictly shorter than that parent.

The last check is local. Any selected short parent covering the token's
first position meets this block, so its choice bit is present; long parents
are fixed data. The endpoints and lengths of every candidate are known.
This check is necessary: equal coverage counts alone would allow a
token to cross the boundary between adjacent dictionary representatives.
The degree of a position tensor is
\[
 O(BL+L^2)=O(\sqrt n\,L),
\]
and its entry is computable in polynomial time.

\paragraph{Content tensors.}
For a distinct short string $w$ of length at least two, collect all candidate
ports with content $w$. Let $d_w$ count its selected dictionary ports and
let $r_w$ indicate whether some reference port is selected. Require
\begin{equation}\label{eq:sparse-content}
 d_w\le1,\qquad r_w=1\ \Longrightarrow\ d_w=1.
\end{equation}
This enforces distinct dictionary contents and supplies a definition for
every short reference. A short content cannot equal a long dictionary
content, so no cross-threshold case is missing.

There are at most $3n$ ports for one content. If their number is at most
$B$, use a single content tensor. Otherwise partition the ports into
blocks of at most $B$ inputs. Each content-block tensor rejects two or more
selected dictionary inputs and has two additional ports carrying
\[
 \delta=\mathbf1[\text{one dictionary input is selected}],\qquad
 \rho=\mathbf1[\text{some reference input is selected}].
\]
Both output bits are uniquely determined by the inputs. A root tensor
requires the sum of the $\delta$ bits to be at most one and to equal one
whenever some $\rho$ bit is one. In particular, an all-zero input block has
summary $(0,0)$.

The number of added summary wires is $O(nL)$ and their root degrees are
$O(n/B)$. The network is bipartite: place the selection tensors and the
content roots on one side, and position tensors, content blocks, and
unsplit content tensors on the other. Writing $M$ for its number of wires,
we have
\begin{equation}\label{eq:sparse-raw-size}
 M=O(nL),\qquad \Delta_0=O(\sqrt n\,L).
\end{equation}
Every local entry is zero, one, or $z$, and is computable in polynomial
time. Before budget truncation, nonzero configurations are precisely the interval certificates
extending the fixed long pattern, with the exponent of $z$ equal to their
number of short right-hand-side tokens.

\subsection{Sparsity of budget-feasible configurations}

\begin{lemma}\label{lem:sparse-activity}
Every nonzero configuration of short cost at most $b'\le K$ has at most
$8K$ original wires equal to one, including all content-summary wires.
\end{lemma}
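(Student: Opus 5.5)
The plan is to count the wires set to one by type and charge each type to the short cost $b'$ or to the dictionary. Original wires are of two kinds: ports of selection tensors (to position blocks and to content constraints) and content-summary wires ($\delta,\rho$) from content blocks to roots. In a nonzero configuration every selection tensor has all its ports equal, so a port wire is one exactly when its candidate choice is selected. Each selected choice has at most three ports: at most two position blocks, since $L\le B$, and at most one content port. Hence the port wires equal to one number at most $3s$, where $s$ is the number of selected short choices.

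First, bound $s$. Short $R_0$ and $R_1$ tokens each contribute a factor $z$, so their number is the short cost, at most $b'\le K$. For selected short dictionary representatives, each one is exactly partitioned by the $R_1$ intervals inside it, since $R_1$ coverage equals dictionary coverage and the position tensors check containment in the parent. No long $R_1$ token fits inside a short representative, so each short representative contains at least one short $R_1$ token, and distinct representatives contain disjoint sets of tokens. Hence the number of short representatives is at most the number of short $R_1$ tokens, at most $b'$. This gives $s\le 2b'\le 2K$, and so at most $6K$ port wires equal to one.

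Second, bound the summary wires. A content block has summary $(0,0)$ unless one of its inputs is selected. A block with a selected input sets at most two summary wires to one, and is charged to one of its selected content ports. Each selected choice has at most one content port and lies in exactly one block, so the charge is injective. This gives at most $2\cdot(\text{number of selected choices of length}\ge2)$ summary wires equal to one. To stay within $8K$ rather than $10K$, I sharpen the count: $\delta=1$ only in blocks containing a selected dictionary input, contributing at most the number of short representatives, while $\rho=1$ only in blocks containing a selected reference, contributing at most the number of short tokens. Writing $d$ for the number of short representatives and $r$ for the number of short tokens, we have $d\le r\le b'$, so the summary wires contribute at most $d+r$. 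The total is then $3(d+r)+(d+r)=4(d+r)\le 8b'\le 8K$.

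The main obstacle is the charge of short dictionary representatives to $R_1$ tokens. It must use the exact coverage equality and the containment check together, and it must also rule out a short representative being covered by a long $R_1$ token. That case is excluded because such a token would be longer than $L$ and thus longer than its parent.
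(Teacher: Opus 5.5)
Your proof is correct and follows essentially the paper's argument: bound the number of selected short choices, charge three port wires to each, and charge the summary wires to selected content ports. The difference is only in where the slack goes. The paper uses strict shortness to get at least two defining tokens per short representative ($2d\le r$, so $d+r\le\tfrac32K$) and then allows two summary ones per active block, giving $5(d+r)<8K$. You use only $d\le r$, but count $\delta$-ones by $d$ and $\rho$-ones by $r$ separately, giving $4(d+r)\le 8K$.
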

\begin{proof}
Let $r$ be its number of selected short reference tokens and $d$ its number
of selected short dictionary representatives. Each short dictionary rule
is tiled by strictly shorter tokens and hence has at least two tokens, all
short. The representatives are disjoint, so these defining tokens are
distinct. Consequently $2d\le r$ and
\[
 A:=d+r\le\tfrac32r\le\tfrac32K.
\]
The selected candidates have at most $3A$ incident wires in total. A
content block with a nonzero summary contains at least one selected
candidate. Its input blocks are disjoint, and a candidate has at most one
content port, so at most $A$ blocks have nonzero summaries. Their two
summary ports contribute at most $2A$ further ones. These are all the types of
wires, so at most $5A\le15K/2<8K$ wires equal one.
\end{proof}

\subsection{Indexing selected wires instead of all choices}

Put $H=8K$ and $J=\lceil n^{1/4}\rceil$. On each side of the bipartition,
assign vertices greedily to $J$ groups, always choosing a group of least
total incident degree. Its final load is at most
\begin{equation}\label{eq:sparse-group-load}
 M/J+\Delta_0=O(n^{3/4}L).
\end{equation}
Let $U_{gh}$ be the number of original wires joining left group $g$ to
right group $h$. Enumerate the arrays
\[
 0\le a_{gh}\le U_{gh},\qquad \sum_{g,h}a_{gh}\le H.
\]
There are at most
\begin{equation}\label{eq:sparse-arrays}
 (H+1)^{J^2}=2^{O(\sqrt n\log n)}
\end{equation}
arrays, processed sequentially.

For a fixed array, replace the $U_{gh}$ original wires by
\[
 w_{gh}=\left\lceil\log_2\binom{U_{gh}}{a_{gh}}\right\rceil
\]
Boolean wires indexing the subset of original wires that are one. Every
other original wire is zero. Index subsets in lexicographic order and set
an entry to zero on any invalid index. Ranking and unranking use binomial
coefficients and polynomial time and space. When the binomial coefficient
equals one, no wire is needed: both groups know the unique subset.

A new group tensor decodes all its incident subset indices and multiplies
the corresponding original tensor entries. This is an exact re-encoding:
every original wire assignment of weight at most $H$ determines one array
and one valid assignment of its index wires, and conversely. Each original
tensor appears in exactly one group.

The number $W$ of new wires satisfies
\begin{align}
 W&\le J^2+\log_2\prod_{g,h}\binom{U_{gh}}{a_{gh}}\nonumber\\
  &\le J^2+\log_2\binom{M+H}{H}
   =O\left(K\log\frac{\mathrm e nL}{K}\right).
   \label{eq:sparse-wire-bound}
\end{align}
Indeed, a choice of all subsets determines a global subset of at most $H$
wires, and
$\sum_{j=0}^{\min(H,M)}\binom Mj\le\binom{M+H}H$ by Vandermonde's identity.
The last estimate uses $M=O(nL)$, $H=8K$, and $J^2=O(\sqrt n)=O(K)$.

Set
\[
 Q=\left\lceil K\log_2(\mathrm e nL/K)\right\rceil.
\]
Since $w_{gh}\le U_{gh}$, the maximum group degree is at most the load
in~\eqref{eq:sparse-group-load}. Uniformly for the allowed parameters,
\[
 Q\ge K\ge n/t,\qquad L\le t^2,\qquad
 O(n^{3/4}L)\le Q^{9/10}
\]
for all sufficiently large $n$: the ratio is
$O(n^{-3/20}t^{29/10})=o(1)$. Also $\log Q=\Theta(\log n)$ uniformly.
The group tensors have polynomial-size descriptions, and evaluating an
entry only decodes subsets and computes the original local checks and
their product. Their entries are therefore computable in polynomial time
and space.

\subsection{Budget truncation and exact positive counting}

Choose the precision
\begin{equation}\label{eq:sparse-precision}
 P=1+\left\lceil\log_2\binom{M+H}{H}\right\rceil=O(Q)
\end{equation}
and work in
\[
 \mathcal B=(\mathbb Z/2^P\mathbb Z)[z]/(z^{b'+1}).
\]
Its truncation order $b'+1\le K+1$ is also $O(Q)$. A valid original
configuration of short cost $r$ contributes $z^r$, so costs above $b'$
vanish. By Lemma~\ref{lem:sparse-activity}, no budget-feasible contribution
is lost by restricting the original Hamming weight to at most $H$.

The tensor-contraction theorem applies to each fixed array. Its time is
\[
 n^{O(1)}2^{O(Q/\sqrt{\log n})}.
\]
Array enumeration is absorbed in this exponent, since
\[
 \sqrt n\log n=o\bigl((n/t)/\sqrt{\log n}\bigr)
             =o\bigl(Q/\sqrt{\log n}\bigr).
\]
After summing the contractions of all arrays in $\mathcal B$, sum the
remaining coefficients. The coefficients are summed only
after all multiplications and truncations; this final summation is not a
ring homomorphism, and we do not use it as one.

The result is the number of budget-feasible certificates modulo $2^P$.
That number is at most the number of original Boolean wire configurations
of weight at most $H$, which is bounded by
\[
 \sum_{j=0}^{\min(H,M)}\binom Mj
 \le\binom{M+H}H<2^P.
\]
Its least nonnegative residue is therefore the exact count. In particular,
a positive count cannot be mistaken for zero modulo $2^P$. Different
certificates may describe the same grammar; only existence is required.
The precision in~\eqref{eq:sparse-precision} is accounted for in the
complexity analysis; no unbounded-precision arithmetic is assumed.

Combining this per-pattern time with~\eqref{eq:sparse-patterns} proves the
decision bound in~\eqref{eq:sparse-budget-time}.

\subsection{Recovering an optimum grammar}

Binary search on $b\in[1,K]$ finds the optimum value. At that budget, retain
one long pattern whose count is positive. Fix the short candidate choices
successively: test whether the count remains positive when the next
selection tensor is restricted to bit zero; if not, restrict it to bit one.
Restrictions multiply the relevant entries by zero or one and introduce
no new wires. All degree, sparsity, precision, and description-size bounds
remain valid. Since there are $O(nL)$ choices, this adds only a polynomial
factor. Read the resulting grammar from $D,R_0,R_1$ as in
Lemma~\ref{lem:sparse-certificate}.

Every stage uses polynomial space. Patterns and arrays are processed one at
a time; indices are decoded individually; a ring element has at most $K+1$
coefficients of $O(Q)$ bits each; and the external contraction routine uses
polynomial space. This completes the proof of
Theorem~\ref{thm:sparse-budget}.

\subsection{Uniform dependence on the alphabet}

\begin{lemma}\label{lem:sparse-size-bound}
For $n\ge2$, $s=\max\{2,|\operatorname{alph}(T)|\}$ and $R=\log_s n$, one has
$g(T)\le\min\{n,10n/R\}$.
\end{lemma}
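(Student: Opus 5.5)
The bound $g(T)\le n$ is witnessed by the literal grammar $S\to T$. For the bound $10n/R$, I would build an explicit grammar by a block construction and count its size. If $R$ is small, say $R\le10$, then $10n/R\ge n$ and there is nothing to prove. So assume $R>10$; in particular $n>s^{10}\ge 2^{10}$.

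Fix a block length $k=\lfloor R/2\rfloor\ge5$, so that $s^k\le\sqrt n$. Cut $T$ into $m=\lfloor n/k\rfloor$ consecutive blocks of length $k$ plus a remainder of fewer than $k$ characters. The grammar has one nonterminal for each distinct block content. There are at most $\min\{m,s^k\}\le\sqrt n$ such contents.

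The content rules are built by a binary trie over the distinct block contents. For every prefix $x$ of length $2\le|x|\le k$ of some block content, introduce a rule $N_x\to N_{x'}\,c$, where $x=x'c$ and $N_{x'}$ is the single terminal when $|x'|=1$. Each rule has size two. The number of distinct prefixes of length at most $k$ is at most $\sum_{\ell\le k}s^\ell\le 2s^k\le2\sqrt n$. So all content rules together cost at most $4\sqrt n$.

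The start rule lists the $m\le n/k$ block nonterminals, followed by fewer than $k$ terminals. Its length is at most $n/k+k$.

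The total size is therefore at most $n/k+k+4\sqrt n$. With $k\ge R/2-1\ge 2R/5$ (valid since $R>10$), the first term is at most $5n/(2R)$. The remaining terms satisfy $k+4\sqrt n\le R+4\sqrt n$. Since $R\le\log_2 n$, these are bounded by $n/R$ once $n$ is beyond a small constant, and $n>2^{10}$ suffices: $4\sqrt n\cdot\log_2 n\le n/2$ holds for $n\ge 2^{10}$ or a slightly larger absolute constant. The sum is then comfortably below $10n/R$.

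The resulting grammar may not be reduced; for example, a prefix rule may be referenced only once. This does not matter, because $g(T)$ is a minimum over all grammars.

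The only delicate point is the finite range where $\sqrt n\log n$ is not yet small relative to $n$. There I would check the arithmetic carefully and, if needed, raise the cutoff from $R\le10$ to a slightly larger constant. For $R\le c$ the bound $g(T)\le n\le 10n/R$ requires $c\le10$, so the bound must be closed with $n\ge s^{10}\ge1024$. At $n=1024$ we have $4\sqrt n=128$ while $n/R\approx102$, which is too tight. To fix this I would use the sharper estimate $\sum_{\ell=2}^k s^\ell\le 2s^k-4$ with $s^k\le\sqrt n$ and $k=\lfloor R/2\rfloor$. I would also use that the $n/k$ term is then much smaller than $5n/(2R)$ for small $R$, and verify the inequality numerically for $R\in(10,20]$, where $n\ge2^{10}$.
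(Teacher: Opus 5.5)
Your construction is correct and is essentially the paper's argument. The paper uses the literal rule when $R$ is small. Otherwise it cuts $T$ into equal blocks, gives each distinct block a nonterminal, and lets the start rule list the block names and the tail. It takes $a=\lfloor R/4\rfloor$ with cutoff $R<8$ and defines each distinct block literally. This gives $n/a+a+as^a\le 8n/R+R/4+(R/4)n^{1/4}$, which it closes with $(\log_2 n)^2\le n^{3/4}$ for $n\ge256$. Your longer blocks make the start rule cheaper ($\le 5n/(2R)$) at the price of an $O(\sqrt n)$ dictionary. The trie is a harmless refinement but not needed: literal block definitions cost at most $ks^k\le (R/2)\sqrt n$, which also fits the budget.

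The one real flaw is your closing step, and the ``delicate point'' it creates is spurious. You try to fit $k+4\sqrt n$ under $n/R$, which does fail near $n=2^{10}$. The inequality you invoke, $4\sqrt n\log_2 n\le n/2$ for $n\ge2^{10}$, is also false there: at $n=2^{10}$ the two sides are $1280$ and $512$. But once the start rule uses $5n/(2R)$, the remaining budget is $15n/(2R)$, not $n/R$. Using $k\le R/2$ and $R\le\log_2 n$, it suffices that
\[
 \frac{(\log_2 n)^2}{2}+4\sqrt n\,\log_2 n\le\frac{15n}{2}\qquad(n\ge2^{10}).
\]
After dividing by $n$, both terms on the left are decreasing in this range, so it is enough to check $n=2^{10}$. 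There the left side is $1330\le7680$. So the bound holds for every $n>2^{10}$ directly. You need no numerical check over $R\in(10,20]$, no sharper prefix count, and no change of cutoff. Replace your last paragraph by this estimate and the proof is complete.
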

\begin{proof}
If $R<8$, use the literal rule. Otherwise let $a=\lfloor R/4\rfloor$ and
split $T$ into full length-$a$ blocks and a tail. Define each distinct
full block literally, and let the start rule list their names and the tail.
The size is at most
\[
 n/a+a+a s^a\le 8n/R+R/4+(R/4)n^{1/4}\le10n/R.
\]
Here $R\le\log_2 n$ and $n\ge256$; in this range
$(\log_2 n)^2\le n^{3/4}$. Taking the better of this construction and the
literal rule proves the claim.
\end{proof}

\begin{theorem}\label{thm:sparse-uniform}
With $s$ and $R$ as above, an optimal grammar can be found deterministically
in polynomial space and time
\begin{equation}\label{eq:sparse-uniform}
 n^{O(1)}2^{O\left(
 \frac{n\log\log(n+2)}{R\sqrt{\log(n+2)}}
 \right)}.
\end{equation}
The constants are independent of $s$.
\end{theorem}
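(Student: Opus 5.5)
The plan is to instantiate Theorem~\ref{thm:sparse-budget} with a budget coming from Lemma~\ref{lem:sparse-size-bound} and a threshold $L$ chosen to balance the two exponent terms. Put $t=\log_2(n+2)$. We take
\[
 K=\max\bigl\{\min\{n,\lceil 10n/R\rceil\},\ \lceil n/t\rceil\bigr\}.
\]
This $K$ is computable from the input. By Lemma~\ref{lem:sparse-size-bound} it satisfies $g(T)\le K\le n$, and we have $K\ge n/t$ by construction. Since $R\le\log_2 n\le t$, the maximum is attained within a constant factor of $n/R$. Hence $K=\Theta(n/R)$ uniformly in $s$, and the lower bound $n/t$ never dominates by more than a constant.

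Next we choose $L=\min\{\lceil R\sqrt{\log n}\rceil,\lfloor t^2\rfloor,n-1\}$, with $L\ge2$. The cap $t^2$ is harmless, because $R\sqrt{\log n}\le(\log n)^{3/2}\le t^2$. For large $n$ the other caps are inactive, and bounded $n$ are handled directly. We then bound the two terms of~\eqref{eq:sparse-budget-time} separately.

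For the first term, $\log(L+1)=O(\log\log(n+2))$ because $L\le t^2$, and $L\ge R\sqrt{\log n}$. This gives $n\log(L+1)/L=O\bigl(n\log\log(n+2)/(R\sqrt{\log n})\bigr)$. The case $R$ small, where $L$ might be rounded up to $2$, still satisfies this bound because $R\sqrt{\log n}\ge\sqrt{\log n}$ forces $L\ge1$ of that order.

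For the second term, $K=\Theta(n/R)$ gives $\mathrm enL/K=O(RL)=O(t^3)$. Therefore $\log(\mathrm enL/K)=O(\log\log(n+2))$, and we need $\mathrm enL/K\ge\mathrm e$ so that the logarithm is at least $1$, which holds since $K\le n$. The second term is then $O\bigl((n/R)\log\log(n+2)/\sqrt{\log n}\bigr)$, the same as the first. Replacing $\log n$ by $\log(n+2)$ changes only constants for $n\ge2$.

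The main obstacle is keeping every constant independent of $s$ at the extremes. When $R$ is close to $1$, that is $s\approx n$, the budget $K$ is close to $n$. Here we must check that $K\le n$ still holds, which is why the minimum with $n$ is taken, and that the bound degenerates correctly to $n\log\log n/\sqrt{\log n}$. When $s=2$, the fixed-alphabet side, $R=\log_2 n$ and $L\approx(\log n)^{3/2}$ must stay within $t^2$. I would verify both extremes explicitly, together with the monotonicity of $K\log(\mathrm enL/K)$ in $K$ over $K\le n$, so that using $K$ up to a constant factor off is legitimate. Finally, unary inputs fall under $s=2$, and $n=1$ or bounded $n$ is handled directly. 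This yields~\eqref{eq:sparse-uniform} and, with it, Theorem~\ref{thm:subexp-intro}.
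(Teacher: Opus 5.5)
Your proposal is correct and follows the paper's proof. Both instantiate Theorem~\ref{thm:sparse-budget} with a budget $K=\Theta(n/R)$ taken from Lemma~\ref{lem:sparse-size-bound} and a threshold $L\approx R\sqrt{\log n}$, then bound both exponent terms by $O\bigl(n\log\log(n+2)/(R\sqrt{\log(n+2)})\bigr)$. The paper simply takes $K=\min\{n,\lceil 10n/R\rceil\}$ and $L=\lceil R\sqrt t\rceil$. Your extra maximum with $\lceil n/t\rceil$ and the caps on $L$ are harmless but redundant for large $n$, since $R\le t$ already forces $K\ge n/t$.
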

\begin{proof}
Put $t=\log_2(n+2)$ and take $K=\min\{n,\lceil10n/R\rceil\}$. Then $g(T)\le K$,
$K=\Theta(n/R)$, and $K\ge n/t$. Set $L=\lceil R\sqrt t\rceil$.
For sufficiently large $n$ this satisfies $2\le L\le t^2$ and $L<n$.
The two terms in~\eqref{eq:sparse-budget-time} are bounded by
\begin{align*}
 \frac{n\log(L+1)}L
 &=O\left(\frac{n\log t}{R\sqrt t}\right),\\
 \frac{K\log(\mathrm e nL/K)}{\sqrt t}
 &=O\left(\frac{n\log(\mathrm e RL)}{R\sqrt t}\right)
  =O\left(\frac{n\log t}{R\sqrt t}\right),
\end{align*}
uniformly for $1\le R\le\log_2 n$. Apply
Theorem~\ref{thm:sparse-budget}. For an alphabet upper bound
$2\le\sigma\le n$, replace $s$ by $\sigma$; this can only weaken the bound.
\end{proof}

Theorem~\ref{thm:sparse-uniform} is Theorem~\ref{thm:subexp-intro}.

\section{Binary encodings and NP-completeness}\label{sec:alphabet}

\subsection{Binary codewords and anchors}\label{sec:anchors}

Replace each symbol of the source alphabet by a binary codeword of fixed length, and denote the resulting string by $C(W)$. One direction is easy: a grammar for $W$ becomes a grammar for $C(W)$ once a rule is added for each codeword. The converse is harder, since a grammar for $C(W)$ may refer to fragments inside codewords, or even to fragments spanning adjacent codewords, and such references correspond to no rule over $W$. To handle them, we designate in each codeword a position that represents the source symbol.

This position is called an \emph{anchor}, and its \emph{label} is the corresponding source symbol. If a reference covers at most one anchor, its labels form the empty string or a single symbol, which poses no problem for the translation. The difficulty lies in references covering several anchors: such a fragment can be represented by a single nonterminal of the source grammar only if all its occurrences yield the same label sequence. The goal of this subsection is to design an encoding that guarantees this consistency.

Let the code length $m$ be a multiple of four, and place the anchor at bit $m/2$ of each codeword. For adjacent source symbols $a,b$, define the boundary fragment
\begin{equation}\label{eq:boundary}
F(a,b)=\bigl(\text{the last }m/4\text{ bits of }C(a)\bigr)\;\Vert\;\bigl(\text{the first }m/4\text{ bits of }C(b)\bigr).
\end{equation}
We require the encoding to satisfy the following property, which makes boundary fragments unambiguous alignment markers: every occurrence of a boundary fragment uniquely determines the position of a codeword boundary and the labels on both of its sides.

\begin{syncprop}
If a window of length $m/2$ in a concatenation of codewords equals $F(a,b)$, then the window starts at offset $3m/4$ inside a codeword and spans the true boundary between $C(a)$ and $C(b)$.
\end{syncprop}

Since a boundary fragment lies between any two adjacent anchors, every substring covering two adjacent anchors contains the complete fragment between them.

\begin{lemma}\label{lem:code}
For every $q\ge2$ there is a binary fixed-length code satisfying (SI) with code length $O(\log q)$. Specifically, every multiple of four with $m\ge64$ and $m>32\log_2q$ is a valid code length. For $q\le24$, the code length $m=128$ suffices.
\end{lemma}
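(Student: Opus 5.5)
We will build the code explicitly rather than by a probabilistic argument, so that the constant $32$ and the value $m=128$ for $q\le24$ can be verified directly. Write $m=4k$. Each codeword $C(a)$ is cut into four quarters of length $k$, and the anchor sits at position $m/2$, the end of the second quarter. Every codeword begins with a fixed \emph{head} quarter $h$ and ends with a fixed \emph{tail} quarter $t'$. The two middle quarters carry the label. The boundary fragment is then $F(a,b)=t'\,h$, which would be the same for all pairs. This is not enough, because (SI) also requires the fragment to identify the labels on both sides of the boundary. We therefore modify the layout: the last quarter becomes $t'\Vert x_a$ and the first quarter becomes $y_b\Vert h$, where $t'$ and $h$ are short synchronization patterns and $x_a,y_b$ are binary encodings of the labels. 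With $m>32\log_2 q$, each quarter has length $k>8\log_2 q$, which leaves room for $\lceil\log_2 q\rceil$-bit labels together with synchronization patterns of length about $3k/4$.

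The synchronization pattern will be a marker such as $1^{s}0$ (a run of ones followed by a zero). All label-carrying fields are encoded so that they can never contain $s$ consecutive ones, for example by inserting a $0$ after every bit or by using a run-length-limited encoding that at most doubles the length. The two middle quarters are likewise filled with label bits interleaved with zeros. Then the only places in a concatenation of codewords where a run of $s$ ones occurs are the designated marker positions, and there is exactly one marker per boundary fragment, located at a fixed offset within $F(a,b)$.

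With this code we verify (SI) as follows. Take a window of length $m/2$ that equals $F(a,b)$. The window contains the marker, so by the uniqueness of long runs its position pins down the window's offset modulo $m$ to exactly $3m/4$. The window therefore spans a true codeword boundary, and the decoded label fields inside it equal $x_a$ and $y_b$, so the codewords on its two sides are $C(a)$ and $C(b)$.

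The main obstacle is arranging that runs of ones cannot arise by straddling a field boundary or a codeword boundary. For instance, the ones at the end of $x_a$ could join with ones at the start of $y_b$. To prevent this, every field starts and ends with a $0$, and we check the lengths: the encoded label takes about $2\log_2 q+O(1)$ bits and the marker takes about $\log_2 q+O(1)$ bits. These must fit into the quarter lengths, which is where the constants $32$ and $64$ come from. For $q\le24$ we have $\log_2 q<5$, so $m=128$ gives $k=32$. That is comfortably enough room, and we would confirm it by directly instantiating the layout.
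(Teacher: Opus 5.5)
Your proposal is correct in substance, but it takes a genuinely different route from the paper. The paper never designs the code. It chooses all $mq$ bits uniformly at random and shows that each of the fewer than $q^4m$ bad pairs (a fragment $F(a,b)$ against a window at $(c,d,r)\ne(a,b,3m/4)$) collides with probability at most $2^{-m/4}$. In the misaligned case it does this by viewing the bitwise equalities as a graph of paths and cycles of maximum degree two, whose rank is at least $m/4$. It then concludes with the union bound $q^4m2^{-m/4}<1$, which is where $m\ge64$ and $m>32\log_2q$ come from. The case $q\le24$, $m=128$ is a separate numerical check (note $128<32\log_2 24$), and the concrete code used in the reduction is a table verified by computer in the appendix.

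Your construction replaces randomness by a combinatorial invariant: a run $11$ that occurs in any concatenation of codewords only inside the marker, at one fixed phase per codeword. This gives an explicit code for every $q$, computable in polynomial time and checkable by hand. Its constants are much better, roughly $m\approx8\log_2q$. So the thresholds $32$ and $64$ are merely comfortable for you; your remark that they ``come from'' the fitting condition misattributes them, since they come from the union bound. The paper's argument buys brevity and a uniform treatment of all misalignments without any layout decisions. Yours buys determinism and a proof that needs no computation.

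Two points should be cleaned up.
\begin{itemize}
\item You call both $t'$ and $h$ synchronization patterns, but the alignment step relies on ``exactly one marker per boundary fragment.'' If both quarters contained a marker, a single marker occurrence would no longer fix the phase. You would then need the extra observation that the gap between consecutive markers across a boundary differs from the gap inside a codeword. The simplest fix is to make $h$ marker-free, for example all zeros.
\item Make the fitting condition explicit. Take the last quarter to be $0110\,\mathrm{enc}(a)\,0\cdots0$ and the first quarter to be $\mathrm{enc}(b)\,0\cdots0$, where $\mathrm{enc}$ places a $0$ after every bit of a $\lceil\log_2 q\rceil$-bit label. Then $k\ge2\lceil\log_2q\rceil+4$ suffices, and this follows from $k\ge16$ together with $k>8\log_2q$. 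For $q\le24$ you have $k=32\ge14$.
\end{itemize}
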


\begin{proof}
We use the probabilistic method. Choose all $mq$ codeword bits independently and uniformly at random, and fix a boundary fragment $F(a,b)$ and a window
\[
Y=(C(c)C(d))[r+1..r+m/2],\qquad
0\le r<m,\qquad
(c,d,r)\ne(a,b,3m/4).
\]
We show that
\begin{equation}\label{eq:collide}
\Pr[F(a,b)=Y]\le 2^{-m/4}.
\end{equation}

There are two cases. If $r=3m/4$, then $Y=F(c,d)$ with $(a,b)\ne(c,d)$. If $a\ne c$, equality of the left halves requires the last $m/4$ bits of two different codewords to coincide, which happens with probability exactly $2^{-m/4}$; if $a=c$, then $b\ne d$ and the same computation applies to the right halves.

If $r\ne3m/4$, the window is misaligned. Consider the graph whose vertices are the random bits involved in the comparison and whose edges are the bitwise equality constraints. Since the phases differ, there are no self-loops, and since each bit occurs at most once in a single window, the graph has maximum degree at most two; hence each connected component is a path or a cycle. A path with $e$ edges yields $e$ independent constraints, and a cycle with $e\ge2$ edges yields $e-1\ge e/2$. Hence the $m/2$ edges provide at least $m/4$ independent constraints, and~\eqref{eq:collide} holds as well.

There are fewer than $q^4m$ bad pairs of a boundary fragment and a window, so the union bound gives
\begin{equation}\label{eq:union}
\Pr[\text{(SI) fails}]<q^4\,m\,2^{-m/4}.
\end{equation}
For $m\ge64$ we have $\log_2m\le m/8$, so $m>32\log_2q$ suffices to make the right-hand side of~\eqref{eq:union} less than one, which proves existence with $m=O(\log q)$. For the concrete case $q\le24$ and $m=128$, direct computation gives
\[
24^4\cdot128\cdot2^{-32}=0.0098876953125<1.\qedhere
\]
\end{proof}

For a fixed alphabet, a suitable code table can thus be hard-wired into the reduction, making it deterministic; Appendix~\ref{app:code} gives an explicit table for $q=24$ and $m=128$, together with a verification program.

Now fix a code satisfying (SI) and let
\[
W=w_1\cdots w_n,\qquad T=C(W).
\]
The $i$-th anchor is at position $p_i=m(i-1)+m/2$ and has label $w_i$. For an interval $I$ of the target string, let $\pi(I)$ be the string of labels of the anchors in $I$, read in order. For adjacent intervals,
\begin{equation}\label{eq:additive}
\pi(IJ)=\pi(I)\pi(J).
\end{equation}
The boundary fragment between adjacent anchors $p_i$ and $p_{i+1}$ occupies the interval
\begin{equation}\label{eq:fragpos}
\Bigl[\,m(i-1)+\tfrac{3m}4+1,\;\; mi+\tfrac m4\,\Bigr],
\end{equation}
whose endpoints both lie strictly between the two anchors. The following lemma, which is the key result of this subsection, establishes the consistency discussed above.

\begin{lemma}\label{lem:consistent}
Let a factor $X$ occur in $T=C(W)$ at interval $I$. If $I$ contains at least two anchors, then every other occurrence $I'$ of $X$ satisfies $\pi(I')=\pi(I)$. Moreover, the anchors have the same relative positions and labels in both occurrences.
\end{lemma}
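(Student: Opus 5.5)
The plan is to use the boundary fragments as alignment markers. First we show that the second occurrence is aligned with the codeword grid in the same way as $I$. Then we read off the anchor labels from the boundary fragments themselves.

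\emph{Alignment.} Since $I$ contains two anchors, say $p_i<p_{i+1}$ (consecutive anchors inside $I$), by~\eqref{eq:fragpos} $I$ contains the whole boundary fragment interval between them, whose content is $F(w_i,w_{i+1})$. Let $\delta$ be its offset relative to the start of $I$. In $I'$ the same offset gives a window of length $m/2$ of $T$, itself a concatenation of codewords, equal to $F(w_i,w_{i+1})$. By (SI) this window starts at offset $3m/4$ inside a codeword, so the start of $I'$ is congruent to the start of $I$ modulo $m$. Hence the anchor positions within $I$ and within $I'$ occur at identical relative offsets. The one subtlety is anchors near the ends of the interval, but congruence modulo $m$ settles these: an offset inside the occurrence is an anchor in $I$ iff it is an anchor in $I'$. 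So the anchors have the same relative positions, and $|\pi(I)|=|\pi(I')|$.

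\emph{Labels.} (SI) also tells us which boundary is spanned: the window in $I'$ spans the true boundary between $C(w_i)$ and $C(w_{i+1})$ in the sense that the codewords on either side in $I'$ are $C(w_i)$ and $C(w_{i+1})$. More precisely, (SI) says that the occurrence of $F(a,b)$ lies exactly at a boundary between codewords $C(a)$ and $C(b)$. This reading of (SI) as determining the labels is what I would state explicitly. Some care is needed because literally $F(a,b)=F(c,d)$ at an aligned position might be worried about. However, the proof of Lemma~\ref{lem:code} excludes $(c,d)\neq(a,b)$ at $r=3m/4$, so the labels on both sides agree.

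Applying this to every pair of consecutive anchors inside $I$, each anchor of $I'$ adjacent to a fragment carries the same label as the corresponding anchor in $I$. Since there are at least two anchors, every anchor of $I$ is an endpoint of some such consecutive pair. Therefore all labels match, giving $\pi(I')=\pi(I)$.

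The main obstacle is the bookkeeping in the alignment step: confirming that the windows used lie entirely inside $I$ and $I'$, which holds because the fragment endpoints lie strictly between anchors contained in $I$. A second point is making sure (SI) is invoked in the form that also pins down the labels, not merely the offset.
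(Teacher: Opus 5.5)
Your proposal is correct and follows essentially the same route as the paper's proof. Both use the boundary fragment between two adjacent anchors of $I$ together with (SI) to align $I'$ with the codeword grid modulo $m$, and then apply (SI) to each adjacent pair of anchors to determine every label. Your extra bookkeeping only spells out what the paper states briefly: the congruence of start positions, and the fact that each anchor is an endpoint of some adjacent pair.
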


\begin{proof}
Let the anchors in $I$ be $p_i,\ldots,p_j$ with $j\ge i+1$. The boundary fragment $F(w_i,w_{i+1})$ between the first two anchors lies entirely in $I$. Since $T[I']=T[I]$, the same boundary fragment appears at the corresponding offset in $I'$.

By (SI), this window in $I'$ also spans a true codeword boundary at the same phase, with labels $w_i$ and $w_{i+1}$ on its two sides. Since all codewords have the same length, all boundaries and anchors of the two occurrences are aligned. Applying (SI) to each pair of adjacent anchors in turn determines all labels, so $\pi(I')=\pi(I)$.
\end{proof}

\begin{lemma}\label{lem:project}
For every binary grammar $H$ generating $T=C(W)$, there is a grammar $H^\downarrow$ over the source alphabet generating $W$ with
\begin{equation}\label{eq:project}
|H^\downarrow|\le |H|.
\end{equation}
\end{lemma}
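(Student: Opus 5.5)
The plan is to project each right-hand side of $H$ symbol by symbol through $\pi$, and to keep as source nonterminals exactly those target nonterminals whose expansion covers at least two anchors.

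First I assign to each nonterminal $A$ of $H$ a label string $\pi(A)$. For this I choose any occurrence interval $I$ of $\operatorname{val}(A)$ in $T$ and set $\pi(A)=\pi(I)$. This choice needs justification in three cases.
\begin{enumerate}
\item If some occurrence of $\operatorname{val}(A)$ contains at least two anchors, Lemma~\ref{lem:consistent} says all occurrences give the same labels with aligned anchors. So $\pi(A)$ is well defined, and so is the relative anchor layout.
\item If every occurrence contains at most one anchor, the count is still consistent. Anchors sit at a fixed phase modulo $m$, and Lemma~\ref{lem:consistent} alignment is not needed just to count them. However, the single label could in principle differ between occurrences.
\item To avoid the issue in case~2, I do not make such $A$ a source nonterminal. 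Instead, every occurrence of a one-anchor or zero-anchor symbol inside a right-hand side is replaced by the label it actually carries in the context where that right-hand side is used, or by nothing.
\end{enumerate}

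The key point is that the context-dependent label is still determined consistently. Take a right-hand side $\alpha_B$ with $\operatorname{val}(B)$ containing at least two anchors. The relative anchor positions and labels inside $\operatorname{val}(B)$ are the same in every occurrence. Hence the sub-intervals corresponding to the symbols of $\alpha_B$ carry fixed label strings, and by~\eqref{eq:additive} these concatenate to $\pi(B)$. The source rule for $B$ is then obtained from $\alpha_B$ as follows:
\begin{enumerate}
\item a symbol whose sub-interval carries at least two anchors is replaced by the corresponding source nonterminal;
\item a symbol whose sub-interval carries exactly one anchor is replaced by that label;
\item a symbol whose sub-interval carries no anchor is deleted.
\end{enumerate}
For the start rule the context is $T$ itself, so the labels are fixed trivially. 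The resulting right-hand sides are never longer than the originals. Induction on expansion length shows that each kept nonterminal generates $\pi(B)$ and the start rule generates $\pi([1,n])=W$. Acyclicity is inherited from $H$. This gives~\eqref{eq:project}.

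The expected obstacle is the well-definedness step: showing that a symbol deep inside $\alpha_B$ sees the same anchor labels in every occurrence of $B$. I would handle it through the second statement of Lemma~\ref{lem:consistent}, which gives identical relative anchor positions and labels, since that is exactly what the per-symbol decomposition needs.

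A few degenerate cases remain:
\begin{enumerate}
\item A start rule that projects to a single source nonterminal can be fixed by renaming, which does not increase the size.
\item A projected right-hand side may become empty; this only happens for rules I discard.
\item Any resulting single-symbol auxiliary rules can be inlined, which only decreases the size.
\end{enumerate}
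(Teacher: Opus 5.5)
Your proposal is correct and follows essentially the paper's proof: keep exactly the multi-anchor nonterminals (well defined by Lemma~\ref{lem:consistent}), and on each kept right-hand side replace every local nonterminal or terminal by the label, if any, that its sub-interval carries in a chosen occurrence. The argument then closes by~\eqref{eq:additive} and induction, and the size bound holds because each symbol is replaced by at most one symbol.

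Two small remarks. First, your claim in case~2 that a local nonterminal has the same anchor count in all occurrences is false: a short factor can occur at different phases, covering an anchor in one occurrence and none in another. You never use this claim, though, because local symbols are handled per context. Second, you should delete unreachable rules first, as the paper does, so that every nonterminal you treat actually has an occurrence in $T$.
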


\begin{proof}
If $|W|=1$, a single terminal rule suffices. Assume $|W|\ge2$ and delete the unreachable rules of $H$.

We divide the nonterminals into two classes. A nonterminal $A$ is \emph{multi-anchor} if some occurrence of it contains at least two anchors; by Lemma~\ref{lem:consistent}, all its occurrences have the same projection, which we denote by $D(A)$. All other nonterminals are \emph{local}: each of their occurrences has a projection of length at most one, but different occurrences may have different projections, so local nonterminals cannot be translated consistently and are deleted instead. The start symbol $S$ is multi-anchor, with $D(S)=W$.

For each multi-anchor nonterminal $A$, we create a nonterminal $A^\downarrow$ as follows. Pick any occurrence $I_A$ of $A$. If the rule is $A\to X_1\cdots X_t$, the right-hand side divides this occurrence into adjacent intervals $I_1,\ldots,I_t$. We build the new right-hand side symbol by symbol:
\begin{itemize}
\item if $X_j=B$ is a multi-anchor nonterminal, write $B^\downarrow$;
\item if $X_j$ is a local nonterminal or a terminal, write $\pi(I_j)$, which is a single source symbol or empty.
\end{itemize}
Let $\beta_A$ be the resulting right-hand side and define $A^\downarrow\to\beta_A$. By induction along the dependencies of the original grammar, multi-anchor children generate $D(B)=\pi(I_j)$ and the other symbols generate $\pi(I_j)$ directly. By the additivity~\eqref{eq:additive},
\[
\operatorname{val}(A^\downarrow)
=\pi(I_1)\cdots\pi(I_t)
=\pi(I_A)=D(A).
\]
All new dependencies are inherited from the original grammar, so the new grammar is acyclic, and its start symbol generates $W$. Each symbol on a right-hand side is replaced by at most one symbol, so $|\beta_A|\le|\alpha_A|$, and the rules of local nonterminals are deleted. Hence $|H^\downarrow|\le|H|$.
\end{proof}

The proof uses only one property of the encoding: if some occurrence of a fragment contains at least two anchors, then all occurrences of that fragment yield the same labels. Isolating this property gives the form needed in Section~\ref{sec:eth}, which in addition quantifies the savings from deleting local rules.

\begin{projthm}
Designate some labeled anchors in a target string $X$, and let $W$ be the string of their labels read in order, with $|W|\ge2$. Suppose every factor satisfies the following: if one occurrence $I$ of it contains at least two anchors, then every other occurrence $I'$ satisfies $\pi(I')=\pi(I)$. Then for every straight-line grammar $H$ generating $X$ there is a grammar $H^\downarrow$ generating $W$ with
\begin{equation}\label{eq:projthm}
|H^\downarrow|\le |H| .
\end{equation}
If $\mathcal L$ is the set of local nonterminals, each of whose occurrences contains at most one anchor, then moreover
\begin{equation}\label{eq:projthm2}
|H^\downarrow|
\le |H|-\sum_{A\in\mathcal L}|\alpha_A| .
\end{equation}
\end{projthm}

\begin{proof}
Apply the construction of Lemma~\ref{lem:project}. Uniqueness of projections makes the translation of multi-anchor nonterminals consistent, and since local rules are deleted, their right-hand side lengths can be subtracted from the size of the original grammar, which gives~\eqref{eq:projthm2}.
\end{proof}

\begin{theorem}\label{thm:encode}
Let $\Sigma$ be a fixed alphabet with $q=|\Sigma|\ge2$, and use the synchronizing code of Lemma~\ref{lem:code} with code length $m=O(\log q)$. For every nonempty source string $W$,
\begin{equation}\label{eq:encode}
g_\Sigma(W)
\le
g_2(C(W))
\le
g_\Sigma(W)+mq .
\end{equation}
In particular, for $q\le24$ we may take $m=128$, and the additive error is at most $3072$.
\end{theorem}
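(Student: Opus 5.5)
The two inequalities in \eqref{eq:encode} are proved separately: the lower bound by projection, the upper bound by direct construction.

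\emph{Lower bound.} Let $H$ be an optimal binary grammar for $C(W)$. If $|W|\ge2$, Lemma~\ref{lem:project} gives a source grammar $H^\downarrow$ for $W$ with $|H^\downarrow|\le|H|$, so $g_\Sigma(W)\le g_2(C(W))$. That lemma relies on (SI) through Lemma~\ref{lem:consistent}, which holds for the code of Lemma~\ref{lem:code}. If $|W|=1$, then $g_\Sigma(W)=1\le g_2(C(W))$ trivially.

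\emph{Upper bound.} Start from an optimal grammar $G$ for $W$ of size $g_\Sigma(W)$. For each source symbol $a$ that occurs in $W$, add one auxiliary nonterminal $N_a$ with rule $N_a\to C(a)$, written out as $m$ literal bits. In every right-hand side of $G$, replace each terminal occurrence of $a$ by $N_a$. Each replacement is symbol-for-symbol, so the original rules keep their total length $|G|$.

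By induction along the dependency order, each nonterminal $A$ of $G$ now generates $C(\operatorname{val}(A))$, because the code is a homomorphism. In particular, the start symbol generates $C(W)$. The added rules have total length at most $mq$, so $g_2(C(W))\le g_\Sigma(W)+mq$.

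Two small points need checking. The modified grammar may not be reduced: some $N_a$ may be referenced only once, or the start rule may be a single $N_a$ when $|W|=1$. This does not matter, because $g_2$ is a minimum over all grammars and any valid grammar gives an upper bound. For the final sentence of the theorem, Lemma~\ref{lem:code} permits $m=128$ when $q\le24$, which gives an additive error of at most $128\cdot24=3072$.

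No step here is a real obstacle. All the difficulty sits in the projection direction, and Lemma~\ref{lem:project} already handles it. The only thing to watch is that the lemma's hypothesis $|W|\ge2$ is covered, which the separate case for $|W|=1$ does.
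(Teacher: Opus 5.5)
Your proposal is correct and follows the paper's proof essentially step for step. The lower bound applies Lemma~\ref{lem:project} to an optimal binary grammar, and the upper bound adds one rule per occurring source symbol that spells out its codeword, substitutes these nonterminals symbol for symbol, and pays at most $mq$, which is $3072$ when $q\le24$ and $m=128$. Your separate treatment of $|W|=1$ is harmless but redundant, since Lemma~\ref{lem:project} already covers that case.
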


\begin{proof}
The left inequality follows by applying Lemma~\ref{lem:project} to an optimal target grammar. For the right inequality, start from an optimal grammar for $W$, add a rule $Q_a\to C(a)$ for every source symbol $a$ used, and replace $a$ by $Q_a$ on the existing right-hand sides. The old rules keep their lengths, the new rules have total length at most $mq$, and the resulting grammar generates $C(W)$.
\end{proof}

For the fixed-alphabet reduction, a constant additive error suffices: taking a constant number of copies of the source graph makes the decision gap exceed it.

\subsection{NP-completeness over a binary alphabet}\label{sec:npc}

We now combine Theorem~\ref{thm:encode} with the $24$-letter reduction of Casel et al.~\cite[Section~3]{CFGGS21}, which maps a graph $G$ of maximum degree at most three to a string $w_G$. If $G$ has $N$ vertices and $M$ edges, then~\cite[Lemma~8]{CFGGS21} gives
\begin{equation}\label{eq:casel}
g(w_G)=299N+3M+5+\tau(G).
\end{equation}
The size measure of~\cite{CFGGS21} counts the axiom string together with the auxiliary rules; viewing the axiom as the right-hand side of the start rule leaves the size unchanged, so the result applies directly. The output has length $\Theta(N\log N)$; see Appendix~\ref{app:fixed}.

One obstacle remains: the vertex cover term in~\eqref{eq:casel} has coefficient one, whereas the additive error of Theorem~\ref{thm:encode} is $D=3072$, which would obscure the decision threshold. We therefore first take $K=D+1$ copies of the source graph, which amplifies the unit gap in the vertex cover size by a factor of $K$, so that it exceeds the error. Using the $24$-letter code of Theorem~\ref{thm:encode}, let
\[
D=24\cdot128=3072,\qquad K=D+1=3073.
\]
Given a Vertex Cover instance $(G,\gamma)$, let $H$ be the disjoint union of $K$ copies of $G$. Its maximum degree is still at most three, and
\[
|V(H)|=KN,\qquad |E(H)|=KM,\qquad \tau(H)=K\tau(G).
\]
Apply the reduction above to $H$ and then the binary encoding, obtaining $T_G=C(w_H)$. Denote the part of~\eqref{eq:casel} other than the vertex cover term by
\[
B(G)=299KN+3KM+5.
\]
By~\eqref{eq:casel} and Theorem~\ref{thm:encode},
\begin{equation}\label{eq:sandwich}
B(G)+K\,\tau(G)
\ \le\ g_2(T_G)
\ \le\ B(G)+K\,\tau(G)+D.
\end{equation}
Let $k_G=B(G)+K\gamma+D$. If $\tau(G)\le\gamma$, the right-hand inequality directly gives $g_2(T_G)\le k_G$. Conversely, if $\tau(G)>\gamma$, integrality gives $\tau(G)\ge\gamma+1$, and so
\[
g_2(T_G)\ge B(G)+K(\gamma+1)=k_G+(K-D)=k_G+1.
\]
Since $K-D=1$, the two cases are separated. In summary,
\begin{equation}\label{eq:iff}
\tau(G)\le\gamma
\quad\Longleftrightarrow\quad
g_2(T_G)\le k_G.
\end{equation}

\begin{theorem}[Theorem~\ref{cor:binary} restated]\label{thm:npc}
The decision version of SGP over a binary alphabet is NP-complete.
\end{theorem}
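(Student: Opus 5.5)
Membership in NP and NP-hardness are handled separately, with the hardness side resting on the reduction already set up just before the statement.

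For membership, the certificate is a grammar $G$ with $|G|\le k$ over nonterminals, and the verifier checks that it generates $T$. The only subtlety is that $\operatorname{val}(S)$ may have exponential length in general. Here we need not expand it: first reject if $k\ge n$ is irrelevant (the literal grammar has size $n$, so we may assume $k<n$), so $G$ has polynomial size. Then compute the lengths $|\operatorname{val}(A)|$ bottom-up along a topological order, in polynomial-size integers, and reject if $|\operatorname{val}(S)|\ne n$. Once lengths are bounded by $n$ (every nonterminal is reachable or can be dropped, and reachable ones expand to factors of a string of length $n$), the grammar can be expanded explicitly in polynomial time and compared with $T$. Alternatively, the certificate can be the interval families of Lemma~\ref{lem:sparse-certificate}, which are checkable directly in polynomial time.

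For hardness, I reduce from Vertex Cover on graphs of maximum degree at most three, which is NP-complete. Given $(G,\gamma)$, I form $H$ as $K=3073$ disjoint copies of $G$, apply the Casel et al.\ construction to obtain $w_H$ over $24$ letters, and output $T_G=C(w_H)$ together with the threshold $k_G=B(G)+K\gamma+D$. The equivalence~\eqref{eq:iff} has already been derived from~\eqref{eq:casel} and Theorem~\ref{thm:encode}, so correctness is immediate. One point to state explicitly is that~\eqref{eq:casel} is applied to $H$ with $N,M,\tau$ replaced by $KN,KM,K\tau(G)$, which gives exactly $B(G)+K\tau(G)$.

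The remaining work is polynomiality. $H$ has size $O(N+M)$ since $K$ is a constant. The Casel et al.\ string has length $\Theta(KN\log(KN))$ by Appendix~\ref{app:fixed}, and it is computable in polynomial time. The binary code with $m=128$ is a fixed table (Appendix~\ref{app:code}), so encoding multiplies the length by $128$. The number $k_G$ has $O(\log N)$ bits.

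I expect the only step requiring care to be the NP-membership argument for compressed certificates, and I would settle it via the length bound $k<n$ as above. The hardness direction is fully carried by~\eqref{eq:sandwich} and~\eqref{eq:iff}.
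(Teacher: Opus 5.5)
Your proposal is correct and matches the paper's proof essentially step for step. Hardness goes from Vertex Cover on graphs of maximum degree three via $K=3073$ disjoint copies, the $24$-letter construction of Casel et al., the fixed $m=128$ code, and the threshold $k_G=B(G)+K\gamma+D$ justified by~\eqref{eq:iff}. Membership goes by assuming $k<n$ (accepting trivially otherwise), computing expansion lengths in topological order, and expanding explicitly once all reachable lengths are at most $n$. Your alternative certificate via the interval families of Lemma~\ref{lem:sparse-certificate} also works, but it is not needed.
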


\begin{proof}
For NP-hardness, note that $K$ and the code length are constants, so $(T_G,k_G)$ can be constructed in polynomial time, and~\eqref{eq:iff} shows that this is a reduction from Vertex Cover on graphs of maximum degree at most three to binary SGP.

Membership in NP requires some care, since the expansions of a grammar can be exponentially long. We may assume that $k\le n$, since otherwise the single rule $S\to T$ suffices. A grammar of size at most $n$ has at most $n$ nonterminals and can be encoded in $O(n\log n)$ bits. The verifier checks the format of the rules and the acyclicity of the dependencies, and then computes all expansion lengths in topological order; after deleting unreachable rules, it rejects as soon as some expansion length exceeds $n$. Otherwise, all expansions have length at most $n$, so they can be computed explicitly in polynomial time, and the expansion of the start symbol is compared with the input.
\end{proof}

Since the number $K$ of copies and the code length $m$ are constants, $|T_G|=\Theta(N\log N)$. This gives the following unconditional transfer.

\begin{corollary}\label{cor:ethbinary}
If binary SGP is decidable within time $\mathcal T(n,2)$, with $\mathcal T$ nondecreasing in its length argument, then Vertex Cover on an $N$-vertex graph of maximum degree three is decidable within $\operatorname{poly}(N)+\mathcal T(CN\log(N+2),2)$ for an absolute constant $C$. The same follows from any algorithm that accepts all strings over an alphabet containing two symbols.
\end{corollary}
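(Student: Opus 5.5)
The plan is to compose the reduction of Theorem~\ref{thm:npc} with the hypothetical binary SGP algorithm and to bound the length of the resulting string. Given an $N$-vertex graph $G$ of maximum degree three together with a threshold $\gamma$, we construct the disjoint union $H$ of $K=3073$ copies of $G$. We then apply the $24$-letter reduction of~\cite{CFGGS21} to obtain $w_H$, and encode it with the fixed code table of length $m=128$ to obtain $T_G=C(w_H)$ and the threshold $k_G=B(G)+K\gamma+D$. By~\eqref{eq:iff}, $\tau(G)\le\gamma$ holds exactly when $g_2(T_G)\le k_G$. Hence a single call to the binary SGP decision procedure on $(T_G,k_G)$ decides the Vertex Cover instance.

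The construction runs in $\operatorname{poly}(N)$ time, since $K$ and $m$ are constants and the code table is hard-wired. Edge cases, for example graphs without edges, can be answered directly within $\operatorname{poly}(N)$ time. The main quantitative step is the length bound $|T_G|\le CN\log(N+2)$. The source string $w_H$ for a graph with $KN$ vertices has length $O(KN\log(KN))$, by the $\Theta(N\log N)$ bound referenced in Appendix~\ref{app:fixed}. Multiplying by $m=128$ and absorbing the constants $K$ and $m$ gives $|T_G|\le CN\log(N+2)$ for an absolute constant $C$. The form $\log(N+2)$ also covers small $N$.

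Because $\mathcal T$ is nondecreasing in its length argument, the call costs at most $\mathcal T(CN\log(N+2),2)$. We should also check the alphabet of $T_G$. It is contained in $\{0,1\}$, and it uses both symbols unless the code degenerates, which the fixed table rules out. So an algorithm that accepts strings over any alphabet containing two symbols applies just as well. If the algorithm's time bound refers to exact length while we only have an upper bound, monotonicity handles this.

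The only step that needs care is the length estimate. It requires the linear-in-$N$ up to $\log N$ bound for the Casel et al.\ string, which we take from the appendix. All other steps are direct bookkeeping.
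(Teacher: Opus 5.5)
Your proposal is correct and follows the paper's argument. Compose the reduction of Theorem~\ref{thm:npc} ($K=3073$ copies, the $24$-letter reduction, the fixed $128$-bit code) with the decision procedure, bound $|T_G|=O(N\log(N+2))$ via Lemma~\ref{lem:caselLen} because $K$ and $m$ are constants, and invoke monotonicity of $\mathcal T$ in the length argument. Your extra remarks on edgeless graphs and on the alphabet actually used are harmless bookkeeping that the paper leaves implicit.
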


Section~\ref{sec:transfer} composes this reduction with a sparse-SAT reduction and states explicit logarithmic speed scales. Binary NP-completeness does not depend on these quantitative statements.

\section{Reductions with tunable alphabet size}\label{sec:eth}

The algorithms of Theorems~\ref{thm:subexp-intro} and~\ref{thm:main} depend continuously on the alphabet size. To relate improvements at different alphabet sizes to graph and SAT algorithms, this section proves Theorem~\ref{thm:tunable}, which gives reductions with a tunable alphabet size.

We first explain why the strategy of Section~\ref{sec:npc} fails here. There, copies of the source graph amplified the unit gap in the vertex cover size, which worked because the cost $mq$ of the code table was constant. If $q$ grows with the input, so does the error bound $mq$; the required number of copies is then no longer constant, and the resulting increase in the output length destroys the scale $n/R=\Theta(N)$ needed for quantitative transfers.

We therefore take a different approach: instead of dominating the error, we compute it exactly. To this end, we prepend to the source string a prefix in which every codeword occurs exactly seven times, which forces an optimal grammar to create a rule for every codeword. The argument has three steps. We first determine all fragments of the source string that can repeat, then determine which repetitions may cross codeword boundaries after encoding, and finally prove that an optimal grammar defines all codewords. The cost of the code table can then be subtracted exactly, rather than merely bounded, when translating back to the source grammar.

\subsection{Overview of the reduction}

Let $G$ be a simple graph with a nonempty edge set and maximum degree at most three, with $N$ vertices and $M$ edges. Fix an integer $2\le b\le2N$. We construct $T_{G,b}$ with
\begin{equation}\label{eq:TGb}
|\operatorname{alph}(T_{G,b})|=2b+26,
\qquad
|T_{G,b}|=\Theta\!\bigl(N(1+\log_bN)\bigr),
\end{equation}
whose optimal grammar size is
\begin{equation}\label{eq:gTGb}
\boxed{g(T_{G,b})=379N+3M+5+\tau(G).}
\end{equation}
The constants hidden in the length bound do not depend on $b$. This is Theorem~\ref{thm:tunable}; its proof occupies the rest of this section. Crucially,~\eqref{eq:gTGb} is an equality: since $379N+3M+5$ can be computed directly, $\tau(G)$ can be read off from $g(T_{G,b})$.

\subsection{The cost of codeword rules}

We first bound the cost that an optimal grammar must pay for defining these codewords. The answer is three symbols per codeword, even if the grammar uses rules with two symbols. Suppose a reduced grammar contains $r$ distinct auxiliary nonterminals $P_i$ with
\[
\operatorname{val}(P_i)=u_i\star u_i^R,\qquad u_i\ne\varepsilon,
\]
where $u_i$ does not contain $\star$. We call these strings \emph{complete codewords}.

\begin{lemma}\label{lem:codecost}
The rules of the complete codewords, together with certain other auxiliary rules whose expansions contain exactly one $\star$, have total right-hand side length at least $3r$.
\end{lemma}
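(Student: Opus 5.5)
Charge a fixed set of right-hand sides to the $r$ complete codewords, using the single $\star$ in each codeword as a marker. For each $P_i$, follow the $\star$ downward through the derivation. Let $Y_i$ be the unique deepest nonterminal whose expansion contains that $\star$ and which is still a factor of $u_i\star u_i^R$; in other words, descend from $P_i$ along the child containing $\star$ for as long as the child is a nonterminal. The rules encountered on this path, from $P_i$ down to the rule whose right-hand side contains $\star$ as a terminal, all have expansions with exactly one $\star$. These are the ``certain other auxiliary rules'' of the statement. Call the set of rules on all these paths $\mathcal S$. The plan is to show that $\sum_{A\in\mathcal S}|\alpha_A|\ge 3r$.

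The key structural point is that $\star$ occurs as a terminal on the right-hand side of exactly one rule on each path, and distinct codewords give distinct bottom rules. Each rule $A$ in $\mathcal S$ with one-$\star$ expansion $x\star y$ is determined by the pair $(x,y)$. The descent from $P_i$ ends at a rule $Z_i\to\beta_i$ with $\star\in\beta_i$. I would argue that the bottom rules $Z_i$ are pairwise distinct, or more carefully, count incidences. Each rule $A\in\mathcal S$ lies on the paths of some set of codewords. If $A$ lies on the paths of several codewords $P_i$, then $\operatorname{val}(A)$ is a common factor around $\star$ of several strings $u_i\star u_i^R$.

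The counting would go as follows. Put $\mathcal S$ into a forest ordered by the child relation along the $\star$ paths; its leaves are the rules containing a terminal $\star$, and its roots are the $P_i$. Each rule has length at least two because the grammar is reduced. A rule of length exactly two has the form $A\to X\,Y$ with one of $X,Y$ on the path. I would give each rule $A$ a weight of $|\alpha_A|$ and distribute it: one unit for the $\star$-child or the terminal $\star$, and at least one unit for a non-path symbol. The goal is to show that each root $P_i$ receives at least three units. Since $u_i\ne\varepsilon$ and the expansion is a palindrome around $\star$ with nonempty sides, a single rule of length two cannot produce $u_i\star u_i^R$ unless one child already contains $\star$ together with a nonempty side. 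So every path from $P_i$ to the terminal $\star$ must add material on both the left and the right. A path whose rules all have length two adds material only on one side per rule, so it needs at least two rules. Together with the terminal $\star$ at the bottom, that gives at least three symbols. A rule of length at least three adds on both sides at once and already contributes three.

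The main obstacle is sharing. A lower-level rule such as $B\to a\star$ may be shared among several codewords with the same last letter of $u_i$, and then its two symbols cannot be charged fully to each codeword. To handle this I would charge per edge of the forest rather than per path. For each root $P_i$, the symbols of $\alpha_{P_i}$ other than the path child give at least one unit charged to $P_i$ alone. Then I would show that the number of non-root rules in $\mathcal S$, together with their extra symbols, covers the remaining $2r$ units. The argument here is that distinct $P_i$ with the same path child $B$ must differ in their outer symbols, and a shared child $B$ whose expansion is not itself complete must still add material that each parent completes. If this charging fails for shared lower rules, the fallback is to use that $u_i$ is the reverse of the right side. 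This implies that any shared one-$\star$ child $B$ with expansion $x\star y$ where $|x|\ne|y|$ forces each parent to add on the shorter side. One can then do induction on the forest, proving that a subtree with $k$ roots has total weight at least $3k$, with one-$\star$ non-complete internal nodes contributing at least one extra symbol per branching.
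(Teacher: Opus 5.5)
Your target inequality $\sum_{A\in\mathcal S}|\alpha_A|\ge 3r$ is true, and $\mathcal S$ would be an admissible choice for the ``other'' rules. The proposal, however, does not prove it. The only real difficulty is sharing. There you offer a charging scheme you do not carry out, a fallback in case it fails, and an unproved induction (``a subtree with $k$ roots has weight at least $3k$'').

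Several structural claims the scheme relies on are false. Distinct codewords need not have distinct bottom rules: in the grammar of Section~\ref{sec:codewords}, all $18N$ vertex codewords descend to one of the $b$ rules $V_j^0\to x_j\star x_j$. Nor is $\mathcal S$ a forest with the $P_i$ as roots. A complete codeword can lie on another codeword's $\star$-path, as in $V_i^c\to a_cV_i^0a_c$, and $V_i^0$ is the $\star$-child of all of $V_i^1,\ldots,V_i^8$ and possibly of longer inner codewords. The only clean structure is that each rule has a unique $\star$-child. In that grammar $\mathcal S$ consists of exactly $r$ rules of length three, so the bound is tight. A charging along whole paths down to the terminal $\star$ therefore has no slack to absorb the double counting you are worried about.

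The missing idea is that you never need to descend more than one step, and only below codeword rules of length two. Rules of length at least three already pay three. If $P_i\to A_iB_i$, the child containing $\star$ has expansion $u_i\star y$ with $y$ a proper prefix of $u_i^R$, or $x\star u_i^R$ with $x$ a proper suffix of $u_i$. This child has three properties:
\begin{itemize}
\item Because $u_i\neq\varepsilon$, it is an auxiliary nonterminal rather than a terminal.
\item Because its two sides have different lengths, it is not a complete codeword.
\item Its longer side recovers $u_i$, so distinct two-symbol codeword rules get distinct children.
\end{itemize}
Your remark about children $x\star y$ with $|x|\neq|y|$ is heading toward the third property. What you need is that such a child is the direct child of at most one two-symbol codeword rule, because that parent's expansion is forced to be $x\star x^R$ or $y^R\star y$. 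With $t$ two-symbol codeword rules, this yields $t$ further auxiliary rules of length at least two, disjoint from the codeword rules and from each other. The total is then at least $3(r-t)+2t+2t\ge 3r$, and the shared lower rules never need to be charged at all.
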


\begin{proof}
Suppose $t$ of the complete-codeword rules have right-hand sides of length two, and the remaining $r-t$ have length at least three. We show that each of these $t$ rules corresponds to an additional auxiliary rule whose length offsets the saving.

For a two-symbol rule $P_i\to A_iB_i$, exactly one of the two children has an expansion containing $\star$. This expansion contains the complete half-codeword on one side of the star and a strictly shorter piece on the other side; hence it is neither a single terminal (so the child is an auxiliary nonterminal) nor a complete codeword. Associate $P_i$ with this child.

This association is injective: from the longer side of the star in the child's expansion one recovers $u_i$ (reversing it if necessary), and the position of the longer side is determined by the expansion itself, so no child is associated with two different complete codewords. We thus obtain at least $t$ additional auxiliary rules, each of length at least two. The total cost is at least
\[
3(r-t)+2t+2t=3r+t\ge3r.\qedhere
\]
\end{proof}

Hence, an optimal grammar that defines these complete codewords pays at least $3r$ for their rules, regardless of how the rules are arranged. In the lower bound proof below, this cost is subtracted from the size of the target grammar.

\subsection{Construction of the codewords}\label{sec:codewords}

The target alphabet consists of three pairwise disjoint groups of symbols:
\begin{itemize}
\item vertex digits $x_1,\ldots,x_b$ and separator digits $d_1,\ldots,d_b$;
\item vertex labels $a_1,\ldots,a_8$ and separator labels $e_1,\ldots,e_8$;
\item special symbols $\star,\#,\mathsf c_1,\mathsf c_2,\$_1,\ldots,\$_6$.
\end{itemize}
This gives $2b+26$ symbols in total, which is the value of $\sigma$ in~\eqref{eq:TGb}; the parameter $b$ controls the alphabet size.

Codewords use the bijective base-$b$ representation with the least significant digit first, in which every positive integer $i$ is written uniquely as
\[
i=\sum_{j=0}^{\ell-1}r_jb^j,\qquad 1\le r_j\le b.
\]
Let $f_b(i)=x_{r_0}\cdots x_{r_{\ell-1}}$ and $h_b(i)=d_{r_0}\cdots d_{r_{\ell-1}}$. For $1\le i\le2N$ define the inner codewords
\[
V_i^0=f_b(i)\star f_b(i)^R,\qquad
D_i^0=h_b(i)\star h_b(i)^R,
\]
and for $1\le c\le8$ define
\begin{equation}\label{eq:labeled}
V_i^c=a_cV_i^0a_c,
\qquad
D_i^c=e_cD_i^0e_c.
\end{equation}
Each class has $Q=9\cdot2N=18N$ codewords, for a total of $2Q=36N$.

The codewords are nested, so each can be defined by a short rule: removing the outermost pair of digits from an inner codeword leaves an inner codeword of smaller index or a single $\star$, and removing the outer labels from a labeled codeword leaves the corresponding inner codeword. Hence every codeword can be defined by a rule with a right-hand side of length three, and all dependencies point to shorter strings. The total length of all codeword rules is
\begin{equation}\label{eq:codecost}
3\cdot2Q=6Q=108N.
\end{equation}
Although the codeword lengths depend on $b$, every rule has exactly three symbols on its right-hand side, so~\eqref{eq:codecost} is independent of $b$. This is why~\eqref{eq:gTGb} does not depend on $b$.

Next, we fix the order in which codewords occur. The codewords of each class are arranged cyclically in the order
\begin{equation}\label{eq:order}
(V_1^1,\ldots,V_1^8,V_1^0,
V_2^1,\ldots,V_2^8,V_2^0,\ldots),
\end{equation}
and likewise for the class $D$: for each index $i$, the eight codewords with distinct outer labels are listed first, followed by the inner codeword. With this order, any seven consecutive codewords in the cyclic list have pairwise distinct first symbols and, since codewords are palindromes, pairwise distinct last symbols. Section~\ref{sec:repeats} uses this property to rule out repetitions across codewords.

\subsection{Construction of the source string}

Two coloring steps prepare the construction of the source string. Both serve to distinguish adjacent positions by their labels, which restricts repeated factors to a few specific forms.

The first coloring distinguishes the neighbors of each vertex. Join two distinct vertices by an edge whenever they have a common neighbor; the resulting graph has maximum degree at most six and can be colored greedily with seven colors. Denote the color of vertex $i$ by $c_v(i)\in\{1,\ldots,7\}$. Then distinct neighbors of a vertex in the original graph receive distinct colors.

The second coloring distinguishes the junctions in the edge gadget. Order the edges of the original graph arbitrarily, and orient them one by one as $(u_j,v_j)$ so that $v_j\ne u_{j+1}$; since every edge has two distinct endpoints, such an orientation always exists. Build the bridge multigraph with edges $(v_j,u_{j+1})$. Each edge of the original graph contributes at most one bridge incidence to each of its two endpoints, so this graph has maximum degree at most three, and the orientation rules out self-loops. A greedy coloring therefore uses at most five colors; denote the color of the $j$-th bridge edge by $c_e(j)\in\{1,\ldots,5\}$. Labels $6,7,8$ are reserved for the vertex gadgets.

Introduce source symbols $v_i^c,d_i^c$ for the codewords $V_i^c,D_i^c$, and keep the symbols $\#$, $\mathsf c_1$, $\mathsf c_2$, $\$_1,\ldots,\$_6$. Order the source symbols of the two classes according to~\eqref{eq:order} as
\[
\boldsymbol v_1,\ldots,\boldsymbol v_Q,\qquad
\boldsymbol d_1,\ldots,\boldsymbol d_Q.
\]
Let $[j]_Q=1+((j-1)\bmod Q)$ and abbreviate $w_i=v_i^{c_v(i)}$. The source string begins with the prefix
\begin{equation}\label{eq:prefix}
P=\prod_{t=0}^{6}\prod_{j=1}^{Q}
\bigl(\boldsymbol d_j\boldsymbol v_{[j+t]_Q}\bigr).
\end{equation}
The prefix is designed with two goals. The source symbol of every codeword occurs exactly seven times in $P$, which is often enough to force an optimal grammar to create a rule for it; and because of the seven different cyclic shifts, these seven occurrences have pairwise different neighboring codewords, so no fragment spanning several codewords can be reused. Section~\ref{sec:coderules} relies on both properties.

The remainder of the source string encodes the graph. Every vertex occurs once in each of the following five gadgets:
\[
B_1=\prod_{i=1}^N(\#w_i\mathsf c_1d_i^6),\qquad
B_2=\prod_{i=1}^N(\#w_i\mathsf c_2d_i^7),
\]
\[
B_3=\prod_{i=1}^N(w_i\#d_i^7\mathsf c_1),\qquad
B_4=\prod_{i=1}^N(w_i\#d_i^6\mathsf c_2),
\]
\begin{equation}\label{eq:vertexgadgets}
B_5=\prod_{i=1}^N(\#w_i\#d_i^8),
\end{equation}
together with the edge gadget
\begin{equation}\label{eq:edgegadget}
E_G=\left(\prod_{j=1}^{M-1}
\#w_{u_j}\#w_{v_j}\#d_j^{c_e(j)}\right)
\#w_{u_M}\#w_{v_M}\#.
\end{equation}
The source string is
\begin{equation}\label{eq:WG}
W_G=P\$_1B_1\$_2B_2\$_3B_3\$_4B_4\$_5B_5\$_6E_G.
\end{equation}
Since $M\le3N/2<2N$, all indices lie within the range of the code table. The prefix has length $14Q$, the five vertex gadgets have total length $20N$, and the edge gadget has length $6M-1$; adding the six separators gives
\begin{equation}\label{eq:WGlen}
|W_G|=14Q+20N+6M+5=272N+6M+5=\Theta(N).
\end{equation}
Define $C_b(v_i^c)=V_i^c$ and $C_b(d_i^c)=D_i^c$, and leave all special symbols unchanged. The final instance is
\[
T_{G,b}=C_b(W_G).
\]
Note that $W_G$ does not depend on $b$: changing $b$ affects only the alphabet and the codeword lengths of the target string, whereas the source string and its optimum $g(W_G)$ remain unchanged. We can therefore compute $g(W_G)$ once and treat the influence of $b$ separately.

\subsection{Repeated factors and the optimum of the source string}\label{sec:repeats}

Owing to the colorings and the ordering, the repeated factors of the source string can be listed explicitly.

\begin{lemma}\label{lem:repeats}
Every factor of $W_G$ of length at least two that occurs more than once has one of the following three forms:
\begin{equation}\label{eq:threeforms}
\#w_i,\qquad w_i\#,\qquad \#w_i\#.
\end{equation}
\end{lemma}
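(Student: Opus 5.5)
It suffices to show that every repeated factor of length two is $\#w_i$ or $w_i\#$, and then to extend. Any factor of length at least two that occurs twice has all its length-two subfactors repeated. Each of these subfactors must then be $\#w_i$ or $w_i\#$, which forces the factor to alternate between $\#$ and vertex symbols. A length-three alternating factor is either $\#w_i\#$ or $w_i\#w_j$. I will rule out $w_i\#w_j$, and every longer alternating factor (such as $\#w_i\#w_j$) contains one. So the core of the proof is a case analysis of length-two factors, i.e.\ of adjacent pairs $xy$ in $W_G$, followed by a short check that $w_i\#w_j$ occurs at most once.

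\emph{Step 1: pairs that occur only once.} Each separator $\$_k$ occurs once, so any pair containing one is unique. Next consider pairs inside the prefix $P$. Each pair $\boldsymbol d_j\boldsymbol v_{[j+t]_Q}$ is determined by $(j,t)$, and the pairs for different shifts $t$ are distinct. Each pair $\boldsymbol v_{[j+t]_Q}\boldsymbol d_{j+1}$ likewise determines $t$; the wrap-around pairs across consecutive shifts have to be checked separately. None of these pairs reappears in the gadgets, because there $d$-symbols never follow $v$-symbols directly: a $d$-symbol is always preceded by $\#$, $\mathsf c_1$ or $\mathsf c_2$. Then consider the pairs in $B_1,\dots,B_5$. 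Pairs containing $\mathsf c_1$ or $\mathsf c_2$, such as $w_i\mathsf c_1$, $\mathsf c_1d_i^6$, $d_i^7\mathsf c_1$ and $\mathsf c_1w_{i+1}$, are pinned down by $i$ together with the distinct labels $6,7,8$ and the distinct letters $\mathsf c_1,\mathsf c_2$. For example, $\#d_i^7$ occurs only in $B_3$, $\#d_i^8$ only in $B_5$, and $d_i^6\#$ in $B_1$ is followed by $\#$ while in $B_4$ it is followed by $\mathsf c_2$. I will tabulate all these pairs and confirm that each occurs once. In the edge gadget, the symbols $d_j^{c_e(j)}$ have labels in $\{1,\dots,5\}$ and index $j$, so the pairs $\#d_j^{c}$ and $d_j^{c}\#$ are unique.

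\emph{Step 2: the remaining pairs, and $w_i\#w_j$.} After Step 1, every remaining repeated pair is of the form $\#w_i$ or $w_i\#$. It is left to show that $w_i\#w_j$ occurs at most once. This pattern arises only in the edge gadget, as $w_{u_k}\#w_{v_k}$ inside edge $k$, or as $w_{v_k}\#d_k\#w_{u_{k+1}}$ across a junction, which does not match since a $d$-symbol intervenes. It also arises in $B_5$, as $\dots\#d_i^8\#w_{i+1}$, and again a $d$-symbol prevents $w\#w$. So I only need to show that the factors $w_{u_k}\#w_{v_k}$ are pairwise distinct. This holds because $G$ is simple and the orientation is fixed, so an ordered pair $(u_k,v_k)$ determines the edge $k$. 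The vertex colors $c_v$ guarantee that the symbol $w_i$ determines $i$, since $w_i=v_i^{c_v(i)}$ carries the index $i$. For a general alternating factor, any factor containing two vertex symbols contains some $w\#w$, which I have just shown is unique.

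\emph{Main obstacle.} I expect the main obstacle to be the bookkeeping in Step 1 for the prefix. I have to verify that the pairs $\boldsymbol v_{[j+t]}\boldsymbol d_{j+1}$ (including the wrap-around pairs between consecutive shifts $t$) are pairwise distinct over all $(j,t)$. This requires $Q\ge 8$, so that the seven shifts $t=0,\dots,6$ give distinct offsets modulo $Q$; and indeed $Q=18N\ge 36$. Once the pairs are shown to be unique, longer repeats in the prefix are excluded automatically.
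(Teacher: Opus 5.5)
Your proof is correct and is essentially the paper's argument. You first show that the only repeated adjacent pairs are $\#w_i$ and $w_i\#$ (the prefix via the seven distinct shifts, the suffix via the neighbors of the separator and $\mathsf c$ symbols), then exclude longer repeats through the uniqueness of $w_{u_k}\#w_{v_k}$; your remark that every alternating factor with two vertex symbols contains some $w\#w$ makes the paper's separate check of extensions of $\#w_i\#$ unnecessary.

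Two small fixes. For the prefix pairs $\boldsymbol d_j\boldsymbol v_k$ you also need that no suffix $v$-symbol is immediately preceded by a $d$-symbol; this holds because codeword symbols in the suffix are always separated by $\#$, $\mathsf c_1$, $\mathsf c_2$ or some $\$_k$. Also, $w_i$ determines $i$ through its subscript, not through the coloring $c_v$.
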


\begin{proof}
We first consider factors of length two, that is, pairs of adjacent source symbols. In the prefix $P$, apart from the position adjacent to $\$_1$, adjacent pairs alternate between the forms $(\boldsymbol d_j,\boldsymbol v_{[j+t]_Q})$ and $(\boldsymbol v_{[j+t]_Q},\boldsymbol d_{j+1})$, with indices taken modulo $Q$. For fixed $j$, the seven values of $t$ give seven different vertex symbols. Hence no adjacent pair of the prefix repeats. In the suffix, two codeword symbols are always separated by $\#$, $\mathsf c_1$, or $\mathsf c_2$, so adjacent pairs of the prefix do not reappear in the suffix either.

In the suffix, each $\$_j$ occurs only once. The symbols $d_i^6$ occur in $B_1$ and $B_4$: their left neighbors are $\mathsf c_1$ and $\#$, and their right neighbors lie in $\{\#,\$_2\}$ and in $\{\mathsf c_2\}$, respectively. The symbols $d_i^7$ occur in $B_2$ and $B_3$: their left neighbors are $\mathsf c_2$ and $\#$, and their right neighbors lie in $\{\#,\$_3\}$ and in $\{\mathsf c_1\}$, respectively. So no adjacent pair on either side of these symbols repeats. The symbol $d_i^8$ occurs only once, in $B_5$, and each $d_j^{c_e(j)}$ in the edge gadget is determined by the index $j$. The vertex symbols next to $\mathsf c_1,\mathsf c_2$ are distinguished by the index $i$. The only adjacent pairs that can repeat are therefore $\#w_i$ and $w_i\#$.

Next, consider longer factors. Every adjacent pair in a repeated factor of length at least three must itself repeat, so the factor is built from the two kinds of pairs above; that is, it contains $\#w_i\#$ or $w_i\#w_j$. The latter form occurs only in the edge gadget, and since every edge is written only once, the same ordered pair $(i,j)$ does not repeat.

Finally, we show that no extension of $\#w_i\#$ repeats. In $B_5$, its left neighbor is $d_{i-1}^8$ or $\$_5$ and its right neighbor is $d_i^8$. In the edge gadget, if $w_i=w_{u_j}$, its left neighbor is $d_{j-1}^{c_e(j-1)}$ or $\$_6$ and its right neighbor is $w_{v_j}$; if $w_i=w_{v_j}$, its left neighbor is $w_{u_j}$ and its right neighbor is $d_j^{c_e(j)}$ or the end of the string. Separators of different edges are distinguished by their indices, and different edges incident to $i$ have different other endpoints. Hence no extension on either side repeats, and no factor of length at least four repeats either.
\end{proof}

By Lemma~\ref{lem:repeats}, every auxiliary expansion of an optimal grammar has one of the three forms in~\eqref{eq:threeforms}, so $g(W_G)$ can be computed directly. We first add all rules of length two:
\[
L_i\to\#w_i,\qquad R_i\to w_i\#.
\]
Adding these rules does not increase the optimum: defining $L_i$ costs two symbols, and $B_1$ and $B_2$ each contain an occurrence of $\#w_i$ that can be replaced by $L_i$, saving exactly two symbols; the same holds for $R_i$ in $B_3$ and $B_4$. The only other potentially useful rules are $A_i\to L_i\#$, with expansion $\#w_i\#$. Let $C$ be the set of vertices $i$ for which $A_i$ is used; as we will see, $C$ plays the role of a vertex cover.

We now compute shortest parsings gadget by gadget. The prefix contains no repeated factors and must be written with $14Q$ symbols. Defining all $L_i,R_i$ costs $4N$ symbols, and defining the $A_i$ another $2|C|$. In $B_1$ through $B_4$ each vertex needs three symbols; in $B_5$ a vertex in $C$ needs two and any other vertex three.

The dependence on $C$ arises in the edge gadget. The fragment of the edge $\{u_j,v_j\}$ is~$\#w_{u_j}\#w_{v_j}\#$. If at least one endpoint lies in $C$, it can be written with two symbols, using one $A$ rule and one $L$ or $R$ rule; if neither endpoint lies in $C$, three symbols are needed. In addition, each edge except the last requires one symbol $d_j^{c_e(j)}$. Since Lemma~\ref{lem:repeats} lists all repeated factors, the local costs are independent and can be added. Including the six symbols $\$_j$, the total cost is
\[
\begin{split}
&14Q+4N+2|C|+12N+(3N-|C|)
+(3M-1+|\{\{u,v\}\in E(G):u,v\notin C\}|)+6\\
&\quad=14Q+19N+3M+5+|C|
+|\{\{u,v\}\in E(G):u,v\notin C\}|.
\end{split}
\]
It remains to minimize the last two terms, $|C|+|\{\{u,v\}\in E(G):u,v\notin C\}|$, whose minimum is exactly $\tau(G)$: for every $C$, adding one endpoint of each uncovered edge yields a vertex cover of size at most this sum, and taking $C$ to be a minimum vertex cover makes the second term vanish. Therefore
\begin{equation}\label{eq:gWG}
\boxed{g(W_G)=14Q+19N+3M+5+\tau(G)
=271N+3M+5+\tau(G).}
\end{equation}

This completes the analysis of the source string. We now designate anchors in the target string $T_{G,b}$: the central $\star$ of every codeword carries the label of the corresponding source symbol, and every unencoded special symbol serves as its own anchor and label. Reading these labels in order recovers exactly $W_G$.

To apply the Projection Theorem, we must also analyze the repeated factors of the target string.

\begin{lemma}\label{lem:targetrepeats}
Every repeated factor of $T_{G,b}$ that contains $\star$ is contained in a single codeword or in a single $\#V_i^{c_v(i)}\#$. In particular, it cannot span two codeword centers.
\end{lemma}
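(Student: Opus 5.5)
We will take a factor $X$ of $T_{G,b}$ that contains a $\star$ and occurs at two distinct intervals $I\ne I'$. Every $\star$ in $T_{G,b}$ is the center of exactly one inner codeword $V_i^0$ or $D_i^0$. We fix one $\star$ in $X$ and compare the codewords around the two corresponding stars in $I$ and $I'$. If $X$ stays inside the outermost codeword containing that star, that is, inside $V_i^c$ or $D_i^c$ with its labels, we are done. So we assume that $X$ crosses the outer boundary of this codeword on at least one side, say on the right; the left side is symmetric because codewords are palindromes.

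\emph{Step 1: a complete half identifies the codeword.} To the right of the star, $X$ then contains the whole half $f_b(i)^R$ or $h_b(i)^R$, the outer label if one is present, and at least one further symbol. A maximal run of digits is ended by a non-digit, and bijective base-$b$ representations are unique. Hence the run after the star in $X$ determines the index $i$ and the class. The next symbol determines whether the codeword is labeled and, if so, its label $c$. Since the same symbols occur at the same offsets in $I'$, the star in $I'$ lies in the same codeword $V_i^c$ or $D_i^c$, that is, above the same source symbol. The right boundary of this codeword also lies at the same offset in both occurrences. By Lemma~\ref{lem:consistent}-style reasoning, the two occurrences are therefore aligned at this boundary.

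\emph{Step 2: reduce to repeated factors of $W_G$.} The symbol just beyond the boundary is either a special symbol, which is its own anchor and label, or the first symbol of the next codeword. We use as much of the neighbor's identity as $X$ reveals. A special symbol is revealed completely. For a codeword neighbor, only a prefix may be visible, which shows at least the label $a_c$ or $e_c$ (or the first digit for an inner codeword) and the class. We distinguish the following cases.
\begin{itemize}
\item \emph{Both neighbors in the prefix $P$.} The property of the cyclic order~\eqref{eq:order}, that any seven consecutive codewords have pairwise distinct first symbols, together with the seven shifts in~\eqref{eq:prefix}, shows that the pair (full codeword, first symbol of its neighbor) occurs only once. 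This uses the same counting as in the proof of Lemma~\ref{lem:repeats}.
\item \emph{Suffix, neighbor in $\{\mathsf c_1,\mathsf c_2,\$_j\}$.} Here the revealed adjacent source pair is one that Lemma~\ref{lem:repeats} shows does not repeat.
\item \emph{Suffix, codeword neighbor.} In the suffix, codeword neighbors never occur directly, because two codewords are always separated by $\#$, $\mathsf c_1$ or $\mathsf c_2$.
\end{itemize}
Hence a repeated $X$ can cross a codeword boundary only into a $\#$, and the central codeword is some $w_i=V_i^{c_v(i)}$.

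\emph{Step 3: no extension beyond $\#w_i\#$.} It remains to rule out that $X$ extends past a $\#$ into a neighbor of which only a prefix is visible. This is the main obstacle, since Lemma~\ref{lem:repeats} speaks about whole source symbols, whereas $X$ may expose only a first symbol (a label) of the next codeword. I would redo the final part of the proof of Lemma~\ref{lem:repeats} at the level of first symbols, going through each place where $\#w_i\#$ occurs.
\begin{itemize}
\item \emph{In $B_5$.} The neighbors start with $e_8$.
\item \emph{In the edge gadget.} The separators $d_j^{c_e(j)}$ start with $e_{c_e(j)}$, where $c_e(j)\le5$. The vertex neighbors start with $a_{c_v(\cdot)}$, and distinct neighbors of a vertex have distinct colors $c_v$. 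The bridge coloring makes the separators adjacent to a given vertex pairwise distinct.
\item \emph{In $B_1,\dots,B_4$.} The outer neighbors of $\#w_i$ and $w_i\#$ are $\mathsf c_1$ or $\mathsf c_2$, or labels $e_6,e_7$ placed on a side that never matches.
\end{itemize}
Checking these first symbols shows that one exposed symbol beyond the $\#$ already separates all occurrences. It follows that $X\subseteq\#V_i^{c_v(i)}\#$. Since this interval contains only one codeword center, $X$ cannot span two centers.
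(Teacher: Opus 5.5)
Your proof is correct and follows essentially the paper's route: you get rigid alignment once a full half-codeword plus one outside symbol is visible, then use the seven-shift order in the prefix and the first-symbol-beyond-$\#$ check (the paper's Table~\ref{tab:beyondhash}) in the suffix, and the paper's separate two-center step is absorbed into your one-center extension bound. Two small fixes are needed: in Step~1 the ``maximal run of digits'' must be taken within the star's own digit class, since in $P$ an inner $V_k^0$ is sometimes immediately followed by an inner $D_k^0$ beginning with a separator digit; and your special-neighbor bullet should also cover $\#$, because Lemma~\ref{lem:repeats} applied to pairs such as $d_i^6\#$ is what forces the central codeword to be some $w_i$.
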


\begin{proof}
The proof has three steps: we show that alignments across codeword boundaries are rigid, exclude factors with two centers, and bound how far a factor with one center can extend.

First, consider a fragment containing a codeword center and crossing a boundary of that codeword. We show that every other occurrence crosses the same side of the same codeword. The symbol sets are disjoint: apart from the central $\star$, vertex codewords use only vertex digits and vertex labels, and separator codewords use only separator digits and separator labels, and these two groups are disjoint. Every codeword has the form $u\star u^R$, and $u$ determines it uniquely.

Suppose a fragment crosses the right boundary of a codeword. Then after the $\star$ it contains the complete $u^R$ and also the first symbol outside the codeword. In the prefix, this outside symbol belongs to the other class of codewords; in the suffix, it is $\#$, $\mathsf c_1$, $\mathsf c_2$, or $\$_j$. In both cases it cannot coincide with a symbol inside the codeword. Hence, in any other occurrence the codeword must have the same length, since otherwise some position would match an outside symbol with an inside symbol. Equal right halves in turn determine the whole codeword. The left boundary is handled symmetrically. Thus, whenever a repeated fragment crosses a codeword boundary, both occurrences cross the same side of the same codeword.

Second, we exclude factors with two centers. If a repeated factor contains two occurrences of $\star$, take two adjacent ones. The fragment from the first center to the second crosses the right boundary of the first codeword and the left boundary of the second; by the first step, both codewords and all special symbols between them are uniquely determined. Hence the corresponding factor of the source string $W_G$ also repeats and contains two encoded source symbols, contradicting Lemma~\ref{lem:repeats}. So a repeated factor contains at most one $\star$.

Third, we bound how far a fragment with one center can extend. In what follows, the prefix and the gadgets refer to their encoded versions. The same codeword occurs seven times in the prefix, and~\eqref{eq:order} ensures that in every occurrence the first and last symbols of the neighboring codewords differ. Moreover, the neighbors of a codeword in the suffix are special symbols, while in the prefix they belong to the other class of codewords. Consequently, a repeated fragment in the prefix can neither cross a codeword boundary nor have a boundary-crossing occurrence in the suffix.

In the suffix, the separator codewords $D_i^6$ occur in $B_1$ and $B_4$ with left neighbors $\mathsf c_1$ and $\#$, and right neighbors in $\{\#,\$_2\}$ and in $\{\mathsf c_2\}$, respectively. The codewords $D_i^7$ occur in $B_2$ and $B_3$ with left neighbors $\mathsf c_2$ and $\#$, and right neighbors in $\{\#,\$_3\}$ and in $\{\mathsf c_1\}$, respectively. Both $D_i^8$ and each $D_j^{c_e(j)}$ occur only once in the suffix. Hence a repeated factor containing the center of a separator codeword cannot cross a codeword boundary.

Write $\widehat w_i=C_b(w_i)=V_i^{c_v(i)}$ for the codeword of the source symbol $w_i$. These are exactly the vertex codewords occurring in the suffix. In $B_1$ and $B_2$ its right neighbor is $\mathsf c_1$ and $\mathsf c_2$, respectively; in $B_3$ and $B_4$ its left neighbor is $\mathsf c_1$ and $\mathsf c_2$, respectively, or a gadget separator occurring only once. The only other outside symbol that repeats is $\#$. Table~\ref{tab:beyondhash} lists the first symbol seen beyond this $\#$; when the end of the string or a gadget separator is reached, the occurrence itself is unique.

\begin{table}[ht]
\centering
\small
\begin{tabular}{@{}lll@{}}
\toprule
Occurrence & Left of $\#\widehat w_i$ & Right of $\widehat w_i\#$ \\
\midrule
$B_1,B_2,B_5$ & $e_6,e_7,e_8$, respectively & $e_8$ (in $B_5$) \\
$B_3,B_4$ & --- & $e_7,e_6$, respectively \\
edge gadget, $i=u_j$ & $e_{c_e(j-1)}$ & $a_{c_v(v_j)}$ \\
edge gadget, $i=v_j$ & $a_{c_v(u_j)}$ & $e_{c_e(j)}$ \\
\bottomrule
\end{tabular}
\caption{The first symbol beyond the $\#$ adjacent to $\widehat w_i$.}
\label{tab:beyondhash}
\end{table}

The edge colors in the table occur only if the corresponding bridge edge exists. For fixed $i$, different neighbors have different colors $c_v$, different bridge edges incident to $i$ have different colors $c_e$, and the vertex labels $a_c$ and separator labels $e_c$ come from different groups of symbols. Hence the symbols to be compared on the same side of the table never coincide. Therefore, a repeated fragment can extend beyond $\widehat w_i$ by at most the adjacent $\#$, since one step further the symbols differ. This proves the lemma.
\end{proof}

\begin{corollary}\label{cor:projcond}
If some occurrence of a factor of $T_{G,b}$ contains at least two anchors, then all its occurrences have the same projection onto source labels.
\end{corollary}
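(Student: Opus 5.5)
The plan is to derive Corollary~\ref{cor:projcond} from Lemma~\ref{lem:targetrepeats} with a case analysis on which anchors an occurrence contains. Recall the two kinds of anchors in $T_{G,b}$: the central $\star$ of each codeword, and every unencoded special symbol ($\#$, $\mathsf c_1$, $\mathsf c_2$, $\$_1,\ldots,\$_6$), each of which is its own label. Fix a factor $X$, an occurrence $I$ containing at least two anchors, and another occurrence $I'$. If $X$ occurs only once there is nothing to show, so assume $X$ is repeated.

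\emph{Case 1: $X$ contains a $\star$.} By Lemma~\ref{lem:targetrepeats}, $X$ lies inside a single codeword or inside a single $\#V_i^{c_v(i)}\#$, and it contains exactly one $\star$.
\begin{itemize}
\item Suppose $X$ lies inside a single codeword. Then the central $\star$ is its only anchor, which contradicts the assumption of two anchors.
\item Suppose instead $X$ contains the center together with one or both surrounding $\#$ symbols. The $\#$ symbols are then outside the codeword, so $X$ crosses a codeword boundary. The rigidity argument in the first step of the proof of Lemma~\ref{lem:targetrepeats} shows that every other occurrence crosses the same side of the same codeword. Hence in $I'$ the $\star$ sits at the same offset in the same codeword $V_i^{c_v(i)}$ and carries the label $w_i$.
\item The other anchors of $X$ are literal $\#$ symbols. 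They occupy the same offsets in $I'$ and carry the same label $\#$.
\end{itemize}
So $\pi(I')=\pi(I)$, and the anchors keep the same relative positions.

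\emph{Case 2: $X$ contains no $\star$.} Every anchor of $X$ is then a special symbol. Since $X$ has at least two anchors, it contains at least two special symbols. I expect the main obstacle here: codeword fragments lie between consecutive special symbols, and a fragment without its $\star$ carries no label. My plan is to argue that the labels of $X$ are exactly its special symbols, read in order. These are literal characters of $X$ at fixed offsets, so they are identical in every occurrence.
\begin{itemize}
\item The one point to verify is that $I'$ contains no extra $\star$ anchor. This is immediate, because $T[I']=X$ contains no $\star$.
\item I also check that a special symbol in $X$ is always an anchor in $I'$. It is, because special symbols are never encoded and never appear inside a codeword.
\end{itemize}
Hence $\pi(I')$ equals the sequence of special symbols of $X$, which is $\pi(I)$.

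Combining the two cases gives the claim for every factor. This supplies exactly the hypothesis of the Projection Theorem for $T_{G,b}$, with the anchors and labels as designated above.
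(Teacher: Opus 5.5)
Your proof is correct. Your Case~1 is essentially the paper's argument: Lemma~\ref{lem:targetrepeats} confines a repeated factor containing a $\star$ to a single codeword or a single $\#\widehat w_i\#$, and the rigidity step from its proof fixes the codeword, and hence the label, in every occurrence.

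Your Case~2 takes a different route. The paper argues that a star-free factor with two anchors does not repeat at all. With no codeword center available, its special symbols must be consecutive in $W_G$, and the paper asserts from the shape of $W_G$ that any such run contains some $\$_j$, which occurs only once. That argument depends on how the separators $\$_j$ are placed relative to $\#,\mathsf c_1,\mathsf c_2$.

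You instead observe that a non-star position is an anchor exactly when its character is one of $\#,\mathsf c_1,\mathsf c_2,\$_j$, and that its label is that character. Hence the projection of a star-free occurrence is determined by the string alone. This is shorter and independent of the gadget layout. It gives up only the stronger, unneeded conclusion that such factors are unique.

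The same viewpoint would also let you trim Case~1. If an occurrence contains a $\star$ and some second anchor, that anchor lies outside the $\star$'s codeword, because a codeword contains exactly one anchor. So the rigidity step alone identifies that codeword in every occurrence. The ``at most one center'' part of Lemma~\ref{lem:targetrepeats} is therefore not needed for this corollary.
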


\begin{proof}
We distinguish cases according to the number of codeword centers in the factor. A factor containing two centers cannot repeat by Lemma~\ref{lem:targetrepeats}. A repeated factor containing exactly one center lies within a single codeword or within a single $\#\widehat w_i\#$; the proof of Lemma~\ref{lem:targetrepeats} also shows that codewords and boundaries have the same relative positions in all occurrences, so the labels of all anchors agree.

It remains to consider factors containing no codeword center but at least two anchors. These anchors can only be unencoded special symbols. If a codeword lies between two special symbols, its central $\star$ also falls into the factor, which is the previous case; if no codeword lies between them, the consecutive special symbols contain some $\$_j$ occurring only once, so the factor does not repeat.
\end{proof}

Thus, all conditions of the Projection Theorem are satisfied. When a target grammar is translated back to $W_G$, the translation is consistent and, by~\eqref{eq:projthm2}, we may subtract the rule lengths of all auxiliary nonterminals each of whose occurrences covers at most one anchor; codeword rules are of this kind.

\subsection{Codeword rules in optimal grammars}\label{sec:coderules}

\begin{lemma}\label{lem:allcodes}
Every reduced optimal grammar for $T_{G,b}$ contains auxiliary nonterminals for all $2Q$ complete codewords.
\end{lemma}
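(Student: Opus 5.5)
We argue by contradiction, using an exchange argument. Suppose a reduced optimal grammar $H$ for $T_{G,b}$ has no auxiliary nonterminal with expansion some complete codeword $X$. We will add a rule for $X$ and rewrite the occurrences of $X$ so that the grammar becomes strictly smaller.

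The key point is how an occurrence of $X$ sits inside a derivation. By Lemma~\ref{lem:targetrepeats}, no repeated factor spans two codeword centers. Consider any reference in $H$ to a nonterminal $B\ne S$ whose expansion contains the center $\star$ of some occurrence of $X$. Then $\operatorname{val}(B)$ occurs at least twice, since $B$ is referenced at least twice by Observation~\ref{obs:twice} and reduced grammars have no unreachable rules. So $\operatorname{val}(B)$ is a repeated factor, and it contains exactly one center. By Lemma~\ref{lem:targetrepeats} it lies inside one codeword or inside one $\#\widehat w_i\#$.

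Now descend from the start rule through the symbols whose expansions cover a given center. At the topmost level, the start rule $S$ contains some symbol sequence covering the occurrence of $X$. Below that level every nonterminal expansion containing this $\star$ is contained in $X$ padded by at most one $\#$ on each side. Hence, in the right-hand side that first cuts the occurrence of $X$ non-trivially, the occurrence of $X$ is tiled by at least two symbols: the pieces covering the left part, the one containing $\star$, and the right part. Here $\operatorname{val}\ne X$ because no nonterminal generates $X$, and the boundary pieces may overhang by a $\#$.

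In particular, each of the seven prefix occurrences of $X$ (the prefix $P$ contains every codeword exactly seven times) is expressed at the top level by at least two symbols. In the prefix, codeword neighbours are codewords of the other class, so nothing extends beyond the codeword there. If the grammar instead uses a nonterminal with expansion equal to a proper part of $X$ that contains $\star$, that part is a single symbol at that level but must itself be completed by further symbols.

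The exchange works as follows. Add a rule $P_X$ for $X$ of length three, built from the inner codeword or from shorter codewords; if those are absent, use the three-symbol nested definition via Lemma~\ref{lem:codecost}-style accounting. Then replace each of the seven prefix occurrences by the single symbol $P_X$. Each occurrence previously cost at least two symbols in its parent rule, so each replacement saves at least one symbol, giving a saving of at least $7$. We also charge the deletion of rules that become unreferenced.

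The delicate part is the cost of $P_X$ when its sub-codewords are not themselves present. I plan to handle this by induction on codeword length, processing codewords in increasing order of length. When handling $X$, all shorter codewords then already have nonterminals, so $P_X\to yZy'$ costs exactly $3$, which is less than $7$ and yields a contradiction. This requires proving the statement for a minimal missing codeword, for which the inner part $Z$ is defined.

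The main obstacle is making precise the claim that each prefix occurrence costs at least two symbols and that these savings are disjoint. Two prefix occurrences might both be generated through one shared nonterminal $B$ whose expansion is a proper part of $X$, such as $u\star$. Replacing the occurrences then saves symbols only in the parents of $B$ and may leave $B$ in use elsewhere.

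I would first try to avoid this by counting differently. Define the new grammar by computing shortest parses, via Lemma~\ref{lem:parse}, for the dictionary $\mathcal D(H)\cup\{X\}$, and compare the parses of the start rule and of the remaining rules. To show a strict decrease of $F_T$, it suffices to show that each prefix occurrence of $X$ is covered by at least two tokens in the shortest parse of $T$ under $\mathcal D(H)$. This holds because no dictionary string contains $X$ together with a neighbouring prefix symbol, and none equals $X$.

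The parse of $T$ then drops by at least $7$, the new rule costs $3$, and other parses do not increase. This gives $F_T(\mathcal D(H)\cup\{X\})<|H|=g(T)$, contradicting Lemma~\ref{lem:parse}.
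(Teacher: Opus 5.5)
\paragraph{The gap is on the savings side.} Your cost bound for the new rule is fine: for a shortest missing codeword $X$ the inner part is in $\mathcal D(H)$ (or is $\star$), so $p(X)\le3$, and enlarging the dictionary never lengthens the other parses. The failure is the step from ``each prefix occurrence of $X$ is covered by at least two tokens'' to ``the parse of $T$ drops by at least one per occurrence''. That inference needs the covering tokens to lie inside the occurrence. Lemma~\ref{lem:targetrepeats} guarantees this only for tokens containing $\star$. Center-free tokens can straddle the junction between adjacent prefix codewords, so your remark that in the prefix ``nothing extends beyond the codeword'' is false for them.

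For a concrete case, take $X=V_i^{c'}$ with $c'\ge1$, and suppose its left neighbour in some prefix occurrence is $D_j^c$, which ends in $e_c$.
\begin{itemize}
\item The factor $e_ca_{c'}$ occurs at many $D$--$V$ junctions of $C_b(P)$, so it is a repeated factor.
\item The factor $V_i^0a_{c'}$ occurs in every occurrence of $X$ and contains a single center inside $X$.
\end{itemize}
If $H$ has nonterminals for both, the occurrence may be parsed as $[e_ca_{c'}]\,[V_i^0a_{c'}]$. After adding $X$, the best local rewrite is $[e_c]\,[X]$, which is again two tokens. If all seven occurrences look like this, adding $X$ at cost $3$ saves nothing. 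Nothing in your argument excludes this, for instance when the neighbouring codewords are missing as well. So the one-codeword exchange, with or without the minimality induction, yields no contradiction. A symptom of the overcount is that your accounting would already close with only four copies of each codeword in the prefix.

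\paragraph{The missing idea: exchange all missing codewords at once.} This is what the paper does, with $s$ the number of missing codewords.
\begin{itemize}
\item Once every codeword has a nonterminal, each added rule has length three, for a total of $3s$; no induction is needed.
\item The part of the start rule before $\$_1$ can then be rewritten as exactly $14Q$ codeword references. This part is well defined because $\$_1$ occurs once and so must be written directly in the start rule.
\end{itemize}
The old cost of that part is counted globally.
\begin{itemize}
\item There are $14Q$ symbols containing a center; each contains exactly one center and stays inside its codeword.
\item There are also center-free symbols. Each of the $7s$ prefix occurrences of missing codewords is touched by at least one of them.
\item A center-free symbol touches at most two adjacent occurrences, since otherwise it would cover the center between them. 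Hence there are at least $\lceil7s/2\rceil$ center-free symbols.
\end{itemize}
The net change is $3s-\lceil7s/2\rceil<0$, which gives the contradiction. The factor $1/2$ lost to center-free symbols shared between neighbouring occurrences is exactly why seven copies, not four, are needed.
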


\begin{proof}
We argue by contradiction, charging the cost to the prefix. Take a reduced optimal grammar and suppose that $s>0$ complete codewords are missing.

The symbol $\$_1$ occurs only once in $T_{G,b}$, so an auxiliary rule containing $\$_1$ cannot be referenced twice, and $\$_1$ must be written directly by the start rule. Hence the symbols of the start rule before $\$_1$ generate exactly $C_b(P)$.

This prefix contains $14Q$ codeword centers. By Lemma~\ref{lem:targetrepeats}, each right-hand side symbol in the prefix that contains a center generates at most one center: if it generated two, its expansion would have to repeat, which Lemma~\ref{lem:targetrepeats} excludes. Hence the prefix needs at least $14Q$ right-hand side symbols containing centers.

We now show that missing codewords incur an extra cost. For each missing codeword, none of its seven designated occurrences in the prefix can be generated by exactly one right-hand side symbol, since that symbol would be an auxiliary nonterminal expanding to this complete codeword, contradicting that it is missing. Symbols containing a center cannot cross codeword boundaries of the prefix, so at least part of each designated occurrence must be generated by symbols containing no center. A symbol containing no center touches at most two adjacent designated codeword occurrences, since otherwise it would cover the center of the codeword in between. Hence, besides the $14Q$ symbols above, at least $\lceil7s/2\rceil$ symbols without a center are needed. In total, the start rule uses at least
\[
14Q+\left\lceil\frac{7s}{2}\right\rceil
\]
symbols on $C_b(P)$.

We now use an exchange argument and add rules for all missing codewords following the construction of Section~\ref{sec:codewords}. Every new rule has a right-hand side of length three and depends only on shorter codewords, and no existing rule depends on a new nonterminal, so no cycle is created. After this, the part of the start rule covering $C_b(P)$ can be rewritten using complete-codeword nonterminals with exactly $14Q$ references; all rules outside the prefix and the rest of the start rule remain unchanged. The net change of the grammar size is at most
\[
3s-\left\lceil\frac{7s}{2}\right\rceil<0,
\]
which is negative since $7/2>3$ (this is why the prefix contains seven occurrences of each codeword), contradicting optimality. Hence $s=0$.
\end{proof}

We can now combine these results. Let $H$ be a reduced optimal grammar for $T_{G,b}$. By Lemma~\ref{lem:allcodes}, it contains $r=2Q$ complete-codeword rules; by Lemma~\ref{lem:codecost}, these rules together with the related single-center auxiliary rules cost at least $3r=6Q$. The expansion of each of these rules contains exactly one $\star$ and no other anchor, so each occurrence covers at most one anchor, and their lengths may be subtracted in~\eqref{eq:projthm2}. Corollary~\ref{cor:projcond} and~\eqref{eq:projthm2} give
\[
g(W_G)\le |H^\downarrow|
\le |H|-6Q
=g(T_{G,b})-6Q.
\]
Conversely, take an optimal grammar for $W_G$, replace every encoded source symbol by the corresponding codeword nonterminal, and add all codeword rules of Section~\ref{sec:codewords}, of total length exactly $6Q$. The old rules keep their lengths, so
\[
g(T_{G,b})\le g(W_G)+6Q.
\]
The two bounds coincide. In contrast to Section~\ref{sec:npc}, the encoding cost is thus computed exactly rather than bounded by a constant. Therefore
\begin{equation}\label{eq:gTGbexact}
\boxed{g(T_{G,b})=g(W_G)+6Q=g(W_G)+108N.}
\end{equation}
Substituting~\eqref{eq:gWG} yields~\eqref{eq:gTGb}.

\subsection{Size of the reduction and quantitative consequences}

Finally, we bound the size of the reduction with constants uniform in the alphabet parameter. Let $\ell_b(i)$ be the length of the bijective base-$b$ representation of $i$. Inner codewords have length $2\ell_b(i)+1$, labeled codewords $2\ell_b(i)+3$, and the nine codewords of the same index have total length $18\ell_b(i)+25$. The prefix contains every codeword of both classes seven times, so
\[
|C_b(P)|=14\sum_{i=1}^{2N}(18\ell_b(i)+25).
\]
Uniformly for all admissible $b$,
\[
\sum_{i=1}^{2N}\ell_b(i)
=\Theta\!\bigl(N(1+\log_bN)\bigr).
\]
The suffix contains $O(N+M)=O(N)$ codewords, each of length $O(1+\log_bN)$, hence
\begin{equation}\label{eq:TGblen}
|T_{G,b}|=\Theta\!\bigl(N(1+\log_bN)\bigr).
\end{equation}
Since $b\le2N$, all digits, outer labels, and special symbols actually occur in the string, so the alphabet size is exactly $\sigma=2b+26$.

Let $n=|T_{G,b}|$. By~\eqref{eq:TGblen} and $2\le b\le2N$,
\[
\log n=\Theta(\log N),\qquad
\log\sigma=\Theta(\log b).
\]
Consequently,
\begin{equation}\label{eq:nR}
\frac{n\log\sigma}{\log n}
=\Theta\!\left(
N(1+\log_bN)\cdot\frac{\log b}{\log N}
\right)
=\Theta(N),
\end{equation}
where we used $\log b+\log N=\Theta(\log N)$. By~\eqref{eq:nR}, $n/R=\Theta(N)$ for every choice of $b$. Moreover, by~\eqref{eq:gTGb}, $\tau(G)$ can be read off directly from $g(T_{G,b})$. Thus any running-time bound for SGP at these parameters $(n,\sigma)$ yields a running-time bound for Vertex Cover. Section~\ref{sec:transfer} derives the resulting time bounds for SAT; no assumption such as ETH is involved.

Choosing $b$ as a constant, as $(\log N)^a$, as $2^{(\log N)^\beta}$, or as a fixed positive power of $N$ produces the alphabet regimes considered in Table~\ref{tab:regimes}. These constructions determine the reduction lengths; the running times in that table come from the algorithms. For smaller fixed alphabets, use Section~\ref{sec:npc}.

\section{Quantitative transfer from grammar algorithms to SAT}\label{sec:transfer}

The reductions of Sections~\ref{sec:npc} and~\ref{sec:eth} have quantitative
consequences that do not depend on ETH. They map SAT instances to SGP
instances, so they turn an improved algorithm for SGP into an improved
algorithm for SAT. They do not turn a SAT algorithm into an SGP algorithm;
that requires a separate construction, such as the one in
Section~\ref{sec:sparse-interval-tensor}.

Let $\mathcal T(x,s)$ be a nondecreasing upper bound on the worst-case running
time of a fixed SGP decision algorithm on all strings of length at most $x$
and alphabet size at most $s$. Any upper bound can be made nondecreasing by taking
its maximum over smaller parameter values. This convention lets us substitute
bounds of the form $\Theta(x)$ without assuming further regularity.

\begin{theorem}[Unconditional quantitative transfer]\label{thm:reverse-transfer}
Let $F$ be a $3$-CNF with $m\ge1$ clauses, each containing three distinct
variables. Given any integer $2\le b\le18m$, one can construct in time
polynomial in $|F|$ an SGP instance with
\[
 s=2b+26,\qquad n\le C m(1+\log_b(m+1)),
\]
where $C$ is absolute and independent of $b$. Hence $F$ is decidable in
\[
 \poly(|F|)+\mathcal T\!\left(
 \left\lceil C m(1+\log_b(m+1))\right\rceil,2b+26\right).
\]
There is also a binary reduction with $n\le C'm\log(m+2)$, giving time
\[
 \poly(|F|)+\mathcal T\!\left(
 \left\lceil C'm\log(m+2)\right\rceil,2\right).
\]
The constant $C'$ is absolute. Neither assertion assumes ETH.
\end{theorem}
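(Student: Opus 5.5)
The plan is to compose a standard reduction from 3-SAT to Vertex Cover on graphs of maximum degree at most three with the reductions of Theorem~\ref{thm:tunable} and Section~\ref{sec:npc}. Everything hinges on keeping the vertex count linear in $m$, so that $N=O(m)$ and the bounds $n=\Theta(N(1+\log_bN))$ and $|T_G|=\Theta(N\log N)$ become the stated bounds in $m$.

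\textbf{Step 1: a linear-size degree-three vertex cover instance.} Start from the textbook reduction from 3-SAT to Vertex Cover, which uses a triangle for each clause, an edge for each variable, and connector edges. A variable occurring $k$ times has unbounded degree there, so first split every variable into a cycle of copies. Concretely, replace each variable occurring $k$ times by $k$ fresh variables chained by implications $x_1\to x_2\to\cdots\to x_k\to x_1$, written as 2-clauses. Then pad every 2-clause to three distinct variables, or handle it directly with a single gadget edge. The result is a 3-CNF in which every variable occurs at most three times, with $O(m)$ clauses, since the total number of occurrences is $3m$.

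On such a formula the clause--variable reduction produces a graph of maximum degree at most three with $N\le 18m$ vertices, after checking the exact count. A variable edge plus at most three connections would give degree four, so I will use the known degree-three variant: each variable becomes an even cycle of length $2k$ whose alternate vertices carry the literal connections. The target threshold is $\gamma=(\text{cycle halves})+2m'$. The bound $b\le 18m$ in the statement suggests that exactly $N\le 9m$ and hence $2N\le 18m$. I will calibrate the constants so that $b\in[2,2N]$ is admissible, and if $b$ exceeds $2N$, pad with isolated edges.

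\textbf{Step 2: apply Theorem~\ref{thm:tunable} with $b(N)=b$.} This gives a string with $\sigma=2b+26$ and $n=\Theta(N(1+\log_bN))\le Cm(1+\log_b(m+1))$. We have $g(T)=379N+3M+5+\tau(G)$, so $F$ is satisfiable iff $g(T)\le 379N+3M+5+\gamma$. One call to the decision algorithm with this threshold therefore decides $F$. Monotonicity of $\mathcal T$ justifies substituting the upper bound on $n$. For the binary statement, use the reduction of Section~\ref{sec:npc} with $|T_G|=\Theta(N\log N)$, as in Corollary~\ref{cor:ethbinary}.

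\textbf{Main obstacle.} The hardest step is Step~1: getting a maximum-degree-three graph with $N=O(m)$, and with an edge when needed, while keeping the vertex cover threshold exact. The cycle gadgets must force a consistent assignment at minimum cover size, and the constant must fit the bound $b\le18m$. I would first try the cycle-of-implications gadget and verify by a direct counting argument that an optimal cover picks exactly one of the two alternating halves of each variable cycle.
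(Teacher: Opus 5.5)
Your final construction is the paper's: an even cycle of length $2d$ for each variable with $d$ occurrences (an edge when $d=1$), a triangle per clause, each literal vertex joined to the occurrence vertex of its falsifying value, and threshold $\tau(G)\le 5m$, followed by Theorem~\ref{thm:tunable} with $b\le 2N$ and, for the binary case, the $3073$-copy encoding of Section~\ref{sec:npc}. The implication-chain preprocessing is unnecessary (after padding, each copy actually occurs five times, not at most three, though the cycle gadget does not need bounded occurrences), and the counting you defer is immediate: $N=2\cdot3m+3m=9m$, so $b\le18m=2N$ needs no padding, and $\alpha(G)\le 3m+m$ with equality exactly when every cycle takes a full alternating class and every triangle a true literal, hence $\tau(G)=9m-\alpha(G)\le5m$ if and only if $F$ is satisfiable.
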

\begin{proof}
We follow the construction of Appendix~\ref{app:eth}. For a variable with $d\ge2$ occurrences,
make a cycle on $2d$ vertices whose alternating classes represent its two values;
for $d=1$, use an edge. Make one triangle per clause, and join each literal
vertex to the vertex representing the falsifying value at its own occurrence.
The resulting graph $G$ has maximum degree three and
\[
 v=9m,\qquad e=12m-u_1,
\]
where $u_1$ counts variables with one occurrence. Independent sets have size at
most $4m$: the variable gadgets contribute at most $3m$ and the triangles at
most $m$. Equality requires a complete alternating class in every variable
gadget and a true literal in every clause, and is possible exactly when $F$ is
satisfiable. Thus $F$ is satisfiable if and only if $\tau(G)\le5m$.

Section~\ref{sec:eth} gives, for $2\le b\le2v$, a string with alphabet
$2b+26$, length $\Theta(v(1+\log_b v))$ with uniform absolute constants, and
\[
 g(T_{G,b})=379v+3e+5+\tau(G).
\]
We use the threshold $379v+3e+5+5m$. Substituting $v=9m$ proves the first assertion.

For the binary version, Section~\ref{sec:npc} uses $D=3072$ and
$h=D+1=3073$ copies of $G$. The encoded binary string satisfies
\[
 B+h\tau(G)\le g(T)\le B+h\tau(G)+D,
 \qquad B=299hv+3he+5.
\]
Therefore $g(T)\le B+5hm+D$ if and only if $\tau(G)\le5m$: in a no-instance
the lower bound exceeds this threshold by at least $h-D=1$.
Appendix~\ref{app:fixed} gives length $\Theta(v\log(v+1))$, since the copy count and
codeword length are constant. The binary encoding preserves the decision gap,
not an exact formula for the target optimum.
\end{proof}

Clauses with fewer than three literals increase the size by only a constant factor. Replace $a\vee b$ by
$(a\vee b\vee z)\wedge(a\vee b\vee\neg z)$ with a fresh variable, and a
unit clause $a$ by the four clauses $a\vee z^{\epsilon}\vee w^{\eta}$ for all
choices of signs. First remove tautologies and repeated literals, and handle an
empty clause directly. These transformations preserve satisfiability, not
necessarily counts. Starting with $N$ variables and $m$ clauses, normalization
produces $O(N+m)$ variables and $O(m)$ clauses. Hence a fixed-density input
with $m=O(N)$ remains of linear size in the original variable bound $N$,
even if deleting unused variables reduces the number of occurring variables.

\begin{corollary}[A logarithmic speed scale]\label{cor:log-speed-scale}
Fix $\delta>0$ and $a\in\mathbb R$. If binary SGP has a uniform algorithm in
\begin{equation}\label{eq:binary-hypothesis}
 \poly(n)\,2^{O\left(
 n(\log\log n)^a/(\log n)^{1+\delta}\right)},
\end{equation}
then, for every fixed clause density, $3$-SAT on $N$ variables is decidable in
\begin{equation}\label{eq:sparse-transfer}
 \poly(N)\,2^{O\left(
 N(\log\log N)^a/(\log N)^\delta\right)}.
\end{equation}
Constants in the conclusion may depend on the fixed density.
\end{corollary}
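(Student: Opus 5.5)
The plan is to compose three steps. First, normalize the formula so that the clause count becomes linear in $N$. Second, apply the binary reduction of Theorem~\ref{thm:reverse-transfer}. Third, substitute the resulting length bound into the hypothesis~\eqref{eq:binary-hypothesis}.

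\textbf{Normalization.} Start from a $3$-SAT instance on $N$ variables with $m\le cN$ clauses, where $c$ is the fixed density. Apply the normalization described after Theorem~\ref{thm:reverse-transfer}: remove tautologies and repeated literals, handle empty clauses directly, pad short clauses with fresh variables, and discard unused variables. This yields a $3$-CNF $F'$ in which every clause has three distinct variables. Its clause count satisfies $m'=O(m)=O_c(N)$. If $m'=0$, we decide the instance trivially.

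\textbf{Reduction.} The binary part of Theorem~\ref{thm:reverse-transfer} produces, in time $\operatorname{poly}(N)$, a binary SGP instance with a threshold. Its length satisfies $n\le C'm'\log(m'+2)=O_c(N\log N)$, and its answer equals the satisfiability of $F'$. Since the algorithm's time bound may be taken nondecreasing in $n$, we may evaluate it at $n_0=\lceil C'm'\log(m'+2)\rceil$, which is at most $C''N\log(N+2)$ with $C''$ depending only on $c$.

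\textbf{Substitution.} We need to bound $f(n)=n(\log\log n)^a/(\log n)^{1+\delta}$ at $n=\Theta(N\log N)$. We have $\log n=\log N+O(\log\log N)=\Theta(\log N)$, and likewise $\log\log n=\Theta(\log\log N)$ for large $N$. Both facts hold for either sign of $a$, because the ratios are bounded above and below. Therefore
\[
f(n)=O\!\left(\frac{N\log N\,(\log\log N)^a}{(\log N)^{1+\delta}}\right)=O\!\left(\frac{N(\log\log N)^a}{(\log N)^{\delta}}\right).
\]
The polynomial factor $\operatorname{poly}(n)=\operatorname{poly}(N)$ is absorbed, which gives~\eqref{eq:sparse-transfer}. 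Small values of $N$ are handled by brute force within the constant.

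\textbf{Expected obstacle.} The only delicate point is monotonicity. The exponent in~\eqref{eq:binary-hypothesis} need not be monotone, so plugging in an upper bound on $n$ is not automatically valid. I would handle this with the convention of Section~\ref{sec:transfer}. Define $\mathcal T(x,2)$ as the maximum of the bound over all $n\le x$. For $x\ge x_0$ this maximum is still $\operatorname{poly}(x)2^{O(f(x))}$, because $f$ is eventually increasing and is bounded on $[2,x_0]$. Eventual increase follows since the derivative of $\log f$ is $(1+o(1))/n>0$. With this in hand, the substitution above is legitimate, and the constants depend only on $c$, $a$ and $\delta$.
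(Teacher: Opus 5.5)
Your proposal is correct and takes essentially the paper's route: the binary reduction of Theorem~\ref{thm:reverse-transfer} gives $n=O(N\log N)$, the exponent is eventually increasing, and $\log n=\Theta(\log N)$ and $\log\log n=\Theta(\log\log N)$ make the substitution go through. You also spell out the short-clause normalization and the nondecreasing-bound convention, which the paper's three-line proof leaves implicit. One harmless slip: for $a<0$ the function $(\log\log n)^a$ is unbounded near $n=2$, so take the bounded range as $[3,x_0]$, or absorb the finitely many small lengths into a constant.
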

\begin{proof}
The binary reduction gives $n=O(N\log N)$. The exponent in
\eqref{eq:binary-hypothesis} is eventually increasing, while
$\log(N\log N)=\Theta(\log N)$ and
$\log\log(N\log N)=\Theta(\log\log N)$. Substitution proves the claim.
\end{proof}

For the tunable instances, put $R=\log_s n$. Uniformly in $2\le b\le18m$,
the calculations of Section~\ref{sec:eth} give $n/R=\Theta(m)$ and
$\log n=\Theta(\log(m+1))$. Thus a uniform alphabet-sensitive bound
\[
 \poly(n)\,2^{O\left(
 \frac{n}{R}\frac{(\log\log n)^a}{(\log n)^\delta}\right)}
\]
also implies~\eqref{eq:sparse-transfer}. If an algorithm is guaranteed only on
an alphabet range, that range must contain the actual output pair $(n,2b+26)$;
a bound proved for one exact alphabet size does not automatically apply to a nearby size.

For example, $a=0$ and $\delta=1/2$ in the binary hypothesis transfer to the
$2^{O(N/\sqrt{\log N})}$ fixed-density SAT bound of~\cite{sat}.
Any fixed $\delta>1/2$ would improve the power of $\log N$ in that exponent.
This is an implication; it does not rule out such an algorithm.

\subsection{The loss from sparsifying a general formula}

The fixed-density conclusion does not extend to growing densities, since
its constants depend on the density. Instead, we use the quantitative sparsification
lemma recorded in~\cite[Lemma labelled \texttt{lem:sparsify}]{sat}. At width
three and supplied rational $0<\varepsilon\le1$, it produces a tree with at most
\[
 \poly(N,\varepsilon^{-1})2^{\varepsilon N}
\]
nodes, whose leaves have at most $B_\varepsilon N$ clauses, where
\[
 B_\varepsilon=O\!\left(
 \varepsilon^{-2}\log^2(2/\varepsilon)\right).
\]
Leaves with no clauses or with an empty clause are decided directly; apply
the binary reduction to every other leaf. For an input with $N$ variables and
$M$ clauses the resulting unconditional transfer bound is
\begin{equation}\label{eq:dense-transfer-function}
 \poly(N,M,\varepsilon^{-1})2^{\varepsilon N}
 \left[1+\mathcal T\!\left(
 \left\lceil C''B_\varepsilon N\log(B_\varepsilon N+2)\right\rceil,2
 \right)\right],
\end{equation}
when the parameter has polynomial-size encoding. The absolute constant $C''$
includes short-clause normalization. Here the SGP algorithm is applied
directly, and no constant depends on a fixed density.

\begin{corollary}[An explicit general-density consequence]\label{cor:dense-transfer}
For fixed rational $\delta>0$ and $a$, hypothesis~\eqref{eq:binary-hypothesis}
implies an algorithm for arbitrary $3$-SAT in
\[
 \poly(N,M)\,
 2^{O\left(N(\log\log N)^{(a+2)/3}/(\log N)^{\delta/3}\right)}.
\]
This bound is what the sparsification route yields; we do not claim that
the loss of a factor of three in the logarithmic exponent is necessary.
\end{corollary}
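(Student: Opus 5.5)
The plan is to feed the transfer bound~\eqref{eq:dense-transfer-function} with $\mathcal T$ taken from hypothesis~\eqref{eq:binary-hypothesis}, and then choose the sparsification parameter $\varepsilon$ as a function of $N$ so that the two exponential contributions balance. Write $x=\lceil C''B_\varepsilon N\log(B_\varepsilon N+2)\rceil$ for the length of each binary instance. The hypothesis gives $\mathcal T(x,2)\le\poly(x)\,2^{O(x(\log\log x)^a/(\log x)^{1+\delta})}$. This bound is eventually nondecreasing, and for small $x$ we simply take the maximum over smaller values, as in the convention of Section~\ref{sec:transfer}.

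The first step is to estimate $x$. We will take $\varepsilon\ge(\log N)^{-1}$, so that $B_\varepsilon=O(\varepsilon^{-2}\log^2(2/\varepsilon))=O(\varepsilon^{-2}(\log\log N)^2)$ is polylogarithmic in $N$. It follows that $\log x=\Theta(\log N)$ and $\log\log x=\Theta(\log\log N)$. Consequently $x=O(B_\varepsilon N\log N)$, and the exponent contributed by the SGP call is
\[
O\!\left(\frac{B_\varepsilon N\log N(\log\log N)^a}{(\log N)^{1+\delta}}\right)
=O\!\left(\frac{N(\log\log N)^{a+2}}{\varepsilon^2(\log N)^{\delta}}\right).
\]
The polynomial factors $\poly(x)$ and $\poly(N,M,\varepsilon^{-1})$ are absorbed into $\poly(N,M)$.

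The second step is the balancing. The total exponent is $\varepsilon N+N(\log\log N)^{a+2}/(\varepsilon^2(\log N)^\delta)$. Setting $\varepsilon^3=(\log\log N)^{a+2}/(\log N)^\delta$ equalizes the two terms. Each term then becomes $N(\log\log N)^{(a+2)/3}/(\log N)^{\delta/3}$, which is exactly the claimed bound.

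Finally, a technical point: $\varepsilon$ must be a rational number in $(0,1]$ with a polynomial-size encoding. I would round $\varepsilon$ up to a value of the form $1/\lfloor(\text{expression})^{-1}\rfloor$, or to the nearest power of two above it. This changes $\varepsilon$ by at most a constant factor, so neither term in the exponent changes by more than a constant factor. The chosen $\varepsilon$ is at least $(\log N)^{-\delta/3}$ up to loglog factors. Hence $\varepsilon\ge(\log N)^{-1}$ holds for large $N$ whenever $\delta\le 3$. For larger $\delta$ I would cap $\varepsilon$ below by $(\log N)^{-1}$, or simply note that the logs remain $\Theta(\log N)$ since $B_\varepsilon$ stays polynomial in $\log N$ for any fixed $\delta$. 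Bounded $N$ is handled by brute force.

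The main obstacle I expect is this bookkeeping: checking that $\log x$ and $\log\log x$ stay within constant factors of $\log N$ and $\log\log N$ uniformly in the chosen $\varepsilon$, that the monotonicity convention permits substituting the $O(\cdot)$ bound on $x$, and that the number of tree leaves, $2^{\varepsilon N}$, multiplies rather than composes with the per-leaf cost. Once these are verified, the corollary follows by direct substitution.
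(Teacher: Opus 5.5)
Your proposal is correct and follows the paper's proof: you substitute the hypothesis into~\eqref{eq:dense-transfer-function}, use that $1/\varepsilon$ is polylogarithmic so that $\log x=\Theta(\log N)$ and $B_\varepsilon=O(\varepsilon^{-2}(\log\log N)^2)$, and balance the two exponents with $\varepsilon=\Theta\bigl((\log\log N)^{(a+2)/3}/(\log N)^{\delta/3}\bigr)$, which is exactly the paper's choice. One correction: for $\delta>3$ you should not cap $\varepsilon$ below by $(\log N)^{-1}$, because that raises the $\varepsilon N$ term to $N/\log N$, which exceeds the claimed bound; your other option, that $1/\varepsilon$ is polylogarithmic for every fixed $\delta$, is the right one and is what the paper uses.
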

\begin{proof}
Write $t=\log N$, $u=\log\log N$, and choose a positive rational
$\varepsilon=\Theta(t^{-\delta/3}u^{(a+2)/3})$ with polynomial-size encoding.
For large $N$ it lies in $(0,1]$ and its reciprocal is polylogarithmic.
Since $\log(2/\varepsilon)=\Theta(u)$,
\[
 B_\varepsilon=O\!\left(t^{2\delta/3}u^{(2-2a)/3}\right).
\]
The SGP exponent in~\eqref{eq:dense-transfer-function} is at most
\[
 O(B_\varepsilon N u^a/t^\delta)
 =O(Nt^{-\delta/3}u^{(a+2)/3}),
\]
using $\log(B_\varepsilon N\log(B_\varepsilon N))=\Theta(t)$.
The sparsification exponent $\varepsilon N$ has the same order, and all
remaining factors are polynomial.
\end{proof}

\section{Discussion}\label{sec:discussion}

Theorem~\ref{thm:subexp-intro} improves on part~2 of Theorem~\ref{thm:main} by a factor of $\Theta\bigl(\log(R+1)\sqrt{\log n}/\log\log n\bigr)$ in the exponent; for fixed alphabets this factor is $\Theta(\sqrt{\log n})$. This improvement does not come from a better choice of the length threshold of Section~\ref{sec:alphalg}: it comes from the grammar-size budget and the sparsity of the selected interval choices.

The parameterized bound in Theorem~\ref{thm:sparse-budget} suggests several
directions for improvement. The current subset indices treat the active
wires as an arbitrary sparse subset of $O(nL)$ candidates. Exploiting the
nonoverlap and coverage structure of the active wires may shorten these
indices. Any such improvement must also account for the precision needed to
distinguish a positive certificate count from zero. Two further questions are
whether the running time can be expressed in terms of the actual optimum
$g(T)$, and whether comparable times can be achieved in linear space.

The reductions of Section~\ref{sec:transfer} serve as quantitative benchmarks.
On the tunable instances produced from a formula with $m$ clauses,
$n/R=\Theta(m)$ and $\log n=\Theta(\log(m+1))$. For $m=O(N)$,
the present algorithm therefore transfers to sparse
$3$-SAT in $2^{O(N\log\log N/\sqrt{\log N})}$ time up to polynomial factors.
This does not improve the fixed-density SAT bound of~\cite{sat}. A binary
SGP algorithm with exponent
$O(n(\log\log n)^a/(\log n)^{1+\delta})$ would transfer to sparse SAT
with exponent $O(N(\log\log N)^a/(\log N)^\delta)$; for fixed $a$ and
$\delta>1/2$, it would improve that logarithmic power. General formulas
require the separately quantified sparsification loss. None of these
implications assumes ETH.

\section{Conclusion}\label{sec:conclusion}

Every string has an optimal grammar whose dictionary is given by pairwise disjoint occurrences in the input. This fact yields self-contained linear-space algorithms, including one in $O^*(1.872258^n)$ time, and a sparse interval certificate. Combined with the Boolean tensor contraction
theorem of~\cite{sat}, the certificate gives a deterministic polynomial-space exact
algorithm with exponent
$O(n\log\sigma\,\log\log n/(\log n)^{3/2})$, uniformly in the alphabet size.
Anchor-based projections establish binary NP-completeness and give
reductions with tunable alphabet size. Without any assumption such as ETH,
these reductions quantify how further improvements for SGP would accelerate
SAT, separately for fixed-density and general formulas.

\appendix

\section{Reduction lengths and algorithm transfers}\label{app:eth}

\subsection{General alphabets}

\begin{proposition}\label{prop:general}
There is a polynomial-time reduction from $3$-SAT with $m$ clauses to general-alphabet SGP with output length $O(m)$. In particular, a general-alphabet SGP running-time bound transfers to sparse $3$-SAT under a linear change of the length parameter.
\end{proposition}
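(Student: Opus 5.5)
We compose two reductions already in the paper. The first is the reduction in the proof of Theorem~\ref{thm:reverse-transfer} from a $3$-CNF $F$ with $m$ clauses to Vertex Cover. It yields a graph $G$ of maximum degree three with $v=9m$ vertices and $e\le12m$ edges, such that $F$ is satisfiable iff $\tau(G)\le5m$. Before it we normalize short clauses as described after that theorem. This keeps the clause count $O(m)$ and handles empty clauses and the degenerate case of no clauses directly.

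The second step is the reduction of Section~\ref{sec:eth}, applied with the largest allowed digit base, $b=2v$. Then every index $i\le2v$ has bijective base-$b$ representation of length $\ell_b(i)=1$. Each codeword $V_i^c,D_i^c$ therefore has constant length, at most five. The string $T_{G,b}$ thus has length $O(|W_G|)=O(v)=O(m)$ by~\eqref{eq:WGlen}, which is also the case $b=\Theta(N)$ of~\eqref{eq:TGblen}. The alphabet has size $2b+26=O(m)$, so symbols fit in machine words. By~\eqref{eq:gTGb}, $g(T_{G,b})=379v+3e+5+\tau(G)$. We therefore use the threshold $k=379v+3e+5+5m$, and $F$ is satisfiable iff $g(T_{G,b})\le k$. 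All steps are polynomial-time, and the output length is at most $Cm$ for an absolute constant $C$.

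An alternative that avoids codewords entirely is to apply the general-alphabet reduction of Casel et al.\ directly. Over an unbounded alphabet, however, that reduction's length is not obviously linear, whereas the tunable construction gives linearity explicitly. I would therefore rely on Section~\ref{sec:eth}, and I regard the choice $b=2v$ as the point requiring the most care. One must check that $b=2v$ lies in the admissible range $[2,2N]$ with $N=v$. One must also check that all the repeated-factor arguments (Lemmas~\ref{lem:repeats}, \ref{lem:targetrepeats}, \ref{lem:allcodes}) remain valid when codewords are as short as $x_r\star x_r$ or $a_c x_r\star x_r a_c$. Those proofs do not depend on $b$, so they should go through.

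For the transfer, if general-alphabet SGP runs in time $\mathcal T(n)$ with $\mathcal T$ nondecreasing, then $3$-SAT with $m$ clauses runs in $\poly(|F|)+\mathcal T(Cm)$. For sparse formulas with $m=O(N)$ this is a linear change of the length parameter.
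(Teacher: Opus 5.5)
Your proof is correct, but its second step differs from the paper's. The paper uses the same $3$-SAT $\to$ degree-three Independent Set/Vertex Cover gadget, with $|V|=9m$ and $|E|=12m-u_1$. It then applies the general-alphabet reduction of Charikar et al.~\cite{CLL05}. That reduction already produces a string of length $O(|V|+|E|)$ with $g(T_H)=15|V|+3|E|+\tau(H)$, and the paper uses the threshold $15|V|+3|E|+5m$. No codewords, anchors, or projection arguments enter. Your linear-length worry applies to Casel et al.'s fixed-alphabet construction, which has length $\Theta(N\log N)$ (Lemma~\ref{lem:caselLen}). It does not apply to Charikar et al.'s reduction, which is already linear.

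You instead instantiate the tunable construction of Section~\ref{sec:eth} at the extreme base $b=2v=18m$. There $\ell_b(i)=1$ for every index and all codewords have length at most five. This is exactly the first assertion of Theorem~\ref{thm:reverse-transfer} with $b=18m$, since $1+\log_{18m}(m+1)\le 2$. You do not need to re-check Lemmas~\ref{lem:repeats}, \ref{lem:targetrepeats} and~\ref{lem:allcodes} for short codewords. Section~\ref{sec:eth} and Theorem~\ref{thm:tunable} are stated for every $2\le b\le 2N$, and $b=2N$ is in range.

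The trade-off is as follows. The paper's route is lighter and rests on an established external reduction with a simple exact formula. Yours is self-contained within the paper. However, it inherits all of Section~\ref{sec:eth}'s machinery: the forced codeword rules, the exact codeword-cost accounting of Lemma~\ref{lem:codecost}, and the Projection Theorem.
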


\begin{proof}
First assume that every clause contains literals of exactly three distinct variables. The normalizations in Section~\ref{sec:transfer} reduce other $3$-CNFs to this case with at most a constant-factor increase in the clause count; empty clauses and clause-free formulas are decided directly. Thus it suffices to prove a length bound linear in the normalized clause count $m$. For a fixed-density input with $N$ declared variables, this count is still $O(N)$, whether or not some variables are unused.

We first construct an Independent Set instance of maximum degree at most three. For a variable occurring $d\ge2$ times, build an even cycle of length $2d$ whose vertices alternately represent the values true and false; a variable occurring only once is represented by a single edge. For every clause, build a triangle whose three vertices represent its literals, and connect each of them to the occurrence vertex of the corresponding variable gadget that represents this literal being false. Since different literal occurrences use different vertices, the maximum degree is at most three.

We next bound the size of independent sets. Variable gadgets contribute at most $3m$ vertices and clause triangles at most $m$. The total reaches $4m$ if and only if every variable gadget selects a complete alternating class and every clause selects a true literal, which is exactly equivalent to satisfiability of the formula.

The resulting graph $H=(V,E)$ has $|V|=9m$; if $u_1$ variables occur only once, then $|E|=12m-u_1\ge9m=|V|$. Finally, we apply the reduction of Charikar et al.~\cite[Section~V-A]{CLL05}, which turns this graph into a string $T_H$ of length $O(|V|+|E|)$ with
\[
g(T_H)=15|V|+3|E|+\tau(H).
\]
Since $\tau(H)=|V|-\alpha(H)$, the formula is satisfiable if and only if
\[
g(T_H)\le15|V|+3|E|+5m.
\]
The whole reduction has output length $O(m)$, and hence $O(N)$ for fixed-density inputs. Composing it with any SGP decision algorithm proves the asserted transfer.
\end{proof}

\subsection{Fixed alphabets}\label{app:fixed}

\begin{observation}\label{obs:vc}
A normalized $3$-SAT formula with $m$ clauses, each containing three distinct variables, reduces to Vertex Cover on a graph of maximum degree three with $9m$ vertices. Normalizing shorter clauses increases the clause count by at most a constant factor.
\end{observation}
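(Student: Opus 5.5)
The statement is Observation~\ref{obs:vc}. It follows from the Independent Set construction already used in the proofs of Theorem~\ref{thm:reverse-transfer} and Proposition~\ref{prop:general}, combined with complementation, $\tau(H)=|V|-\alpha(H)$. So I would recall that gadget and check the vertex count and the degree bound.

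Start with a normalized formula: $m$ clauses, each on three distinct variables. Let variable $x$ occur $d_x$ times, so $\sum_x d_x=3m$.

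\textbf{Variable gadget.} If $d_x\ge2$, build an even cycle on $2d_x$ vertices whose two alternating classes stand for the values true and false. Assign to each occurrence of $x$ one vertex of the class for the value that falsifies that literal. The $d_x$ occurrences get pairwise distinct such vertices, because each class has exactly $d_x$ vertices. If $d_x=1$, use a single edge, which has $2=2d_x$ vertices.

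\textbf{Clause gadget.} Each clause becomes a triangle with one vertex per literal. Each literal vertex is joined to the vertex assigned to its occurrence.

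\textbf{Counting.} The vertex count is $\sum_x 2d_x+3m=6m+3m=9m$.

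\textbf{Degree bound.} A triangle vertex has two triangle neighbours plus one link, so degree three. A cycle vertex has two cycle neighbours plus at most one link, and an edge-gadget vertex has one neighbour plus at most one link. Since distinct occurrences use distinct vertices, the maximum degree is three.

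\textbf{Correctness.} As in Proposition~\ref{prop:general}, each gadget caps the independent set size: a cycle on $2d_x$ vertices contributes at most $d_x$, and a triangle at most one. Hence $\alpha\le 3m+m=4m$. Equality forces each variable gadget to take a full alternating class, which defines a truth assignment. It also forces each triangle to contain a selected literal vertex whose linked falsifying vertex is unselected, so that literal is true. Conversely, a satisfying assignment gives an independent set of size $4m$ by taking the class of each variable's true value and one true literal per clause. Therefore the formula is satisfiable if and only if $\tau\le 9m-4m=5m$.

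\textbf{Short clauses.} For the second sentence, use the normalization of Section~\ref{sec:transfer}:
\begin{itemize}
\item a $2$-clause becomes two $3$-clauses;
\item a unit clause becomes four $3$-clauses;
\item tautologies and repeated literals are removed first.
\end{itemize}
So the clause count grows by at most a factor of four.

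The only delicate point is the degree bound: each occurrence needs its own attachment vertex. The choice of $2d_x$-cycles guarantees this, since each class has exactly $d_x$ vertices. The case $d_x=1$ must be handled separately, because a $2$-cycle is not simple, and the single edge covers it.
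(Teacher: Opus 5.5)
Your proposal is correct and follows the paper's route. The paper's proof simply cites the Independent Set construction from Proposition~\ref{prop:general} and passes to the complement; you make explicit the vertex count $6m+3m=9m$, the degree check via distinct attachment vertices, and the threshold $\tau\le 5m$, and you take the factor-four clause bound from the normalization in Section~\ref{sec:transfer}.
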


\begin{proof}
The construction in the proof of Proposition~\ref{prop:general} reduces sparse 3-SAT to Independent Set on graphs of linear size and maximum degree at most three. Since the complement of an independent set is a vertex cover, the threshold carries over directly, and the graph has the stated size.
\end{proof}

The $24$-letter reduction of Casel et al.~\cite{CFGGS21} satisfies~\eqref{eq:casel}, but Corollary~\ref{cor:ethbinary} also requires its output length, which is not stated explicitly. We derive it from the construction in the conference version~\cite[Section~3]{CFGGS16}.

\begin{lemma}\label{lem:caselLen}
This reduction satisfies $|w_G|=\Theta(N\log N)$.
\end{lemma}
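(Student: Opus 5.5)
The plan is to read the length directly off the construction of Casel et al.\ in the conference version~\cite[Section~3]{CFGGS16}, since the journal version only states the optimum value~\eqref{eq:casel} and not the output length. That reduction writes a fixed number of gadgets for each vertex and each edge of the degree-three graph $G$. Each gadget is built from a constant number of special symbols over the $24$-letter alphabet together with an encoding of the vertex (or edge) index. Because the alphabet is fixed, these identifiers cannot be single symbols. They have to be strings over a constant-size digit alphabet, and distinguishing $N$ vertices forces identifier length $\Theta(\log N)$. I will therefore first list the gadgets of~\cite{CFGGS16} and record, for each one, how many symbols it contributes beyond the identifiers and how many identifiers it contains.

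The upper bound follows from summing these contributions. There are $N$ vertex gadgets and $M\le 3N/2$ edge gadgets. Each contains $O(1)$ separator symbols and $O(1)$ identifiers, and each identifier has length $O(\log N)$ because it is a fixed-base representation of an index at most $N$ (or $2N$). Any code-table or definition prefix listing all identifiers once contributes $O(N\log N)$ as well. The total is therefore $O(N+M)\cdot O(\log N)=O(N\log N)$.

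For the lower bound I would not depend on the exact shape of the encoding. Over an alphabet of size $24$, at most $24^\ell$ distinct strings have length $\ell$. If the $N$ vertex gadgets used identifiers of average length $o(\log N)$, then a constant fraction of vertices would have to share an identifier. The reduction, however, requires pairwise distinct vertex identifiers for~\eqref{eq:casel} to hold, and each identifier occurs at least once in $w_G$. A counting argument then gives $\sum_i|\mathrm{id}(i)|\ge \sum_{i=1}^N\lfloor\log_{24}i\rfloor=\Omega(N\log N)$, and the same bound applies to $|w_G|$. If the construction writes identifiers explicitly in a fixed base, this step is just a direct computation.

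I expect the main obstacle to be the bookkeeping against~\cite{CFGGS16}. I need to confirm that every vertex or edge appears in only $O(1)$ gadgets, so that no gadget repeats an identifier $\omega(1)$ times and inflates the length, and that the identifiers really have length $\Theta(\log N)$ rather than some longer self-delimiting form. The first thing to check is the exact form of the vertex-index codewords in that section. If they are padded or mirrored, as the codewords $u\star u^R$ of Section~\ref{sec:eth} are, their length changes only by a constant factor, and the $\Theta(N\log N)$ bound is unaffected.
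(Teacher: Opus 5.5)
Your plan is correct and is essentially the paper's argument. The paper reads off that the construction uses codewords $\varphi(i)\star\varphi(i)^R$, with $\varphi$ the bijective base-$7$ representation. The upper bound then comes from $\Theta(N)$ codeword occurrences of length $O(\log N)$, and the lower bound comes directly from the $\Theta(N)$ prefix codewords of index $\Omega(N)$, each of length $\Omega(\log N)$.

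Your counting lower bound via pairwise distinct identifiers also works, but it needs the identifier occurrences in $w_G$ to be disjoint. Knowing only that each occurs at least once is not enough, because overlapping occurrences could pack many distinct identifiers into a short string. Your gadget decomposition supplies this disjointness, and the paper's direct length computation avoids the issue altogether.
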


\begin{proof}
The construction uses centrally palindromic codewords
\[
\langle i\rangle=\varphi(i)\star\varphi(i)^R,
\]
where $\varphi(i)$ is the bijective base-$7$ representation, so $|\varphi(i)|=\Theta(\log(i+1))$. For the upper bound, all indices are $O(N)$ and the total number of codeword occurrences is $\Theta(N)$, so $|w_G|=O(N\log N)$.

For the lower bound, the prefix of the construction contains $\Theta(N)$ codewords with indices $\Omega(N)$, each of length $\Omega(\log N)$; multiplying gives the matching lower bound.
\end{proof}

Section~\ref{sec:npc} takes only a constant number of copies of the source graph and replaces symbols by codewords of constant length; neither step changes the order of magnitude. The final binary instance therefore satisfies
\[
n=\Theta(N\log N),\qquad n/\log n=\Theta(N),
\]
and together with Observation~\ref{obs:vc} this gives Corollary~\ref{cor:ethbinary}.

\section{An explicit binary code for 24 letters}\label{app:code}

Lemma~\ref{lem:code} proves the existence of synchronizing codes by the probabilistic method; for a deterministic reduction over a fixed alphabet, we list $24$ explicit codewords of length $128$ and verify boundary synchronization (Section~\ref{sec:anchors}) by a program. For an ordered pair of symbols $(a,b)$, the boundary fragment is
\[
F(a,b)=C(a)[97..128]\,C(b)[1..32].
\]
The program exhaustively checks all $24^2=576$ boundary fragments against all $128$ starting phases in the concatenation of every pair of codewords. The output $(\mathrm{True},576)$ certifies that all boundary fragments are pairwise distinct and that no fragment occurs at a wrong phase or for a wrong pair of symbols, which is exactly property (SI). The listing is also provided as the ancillary file \texttt{check\_sync.py}.

\begin{lstlisting}
CODES = [
    "1101111110010010100110111011100010110100000100110110101101101000"
    "0110000001100111110101100010010110000000000101000000100010100011",
    "1001001011100000110100111111001010100111101110001000111011001010"
    "1100111101000011100000000110011001101010010000111011110011000111",
    "1010010011101010010011111001111010110110010000001001111010101010"
    "1000100111000000111001011111100011100101110000000101100111000101",
    "1110000111011111100001100110100110111001000010101100110101100011"
    "1011000110000110011001001001110111010010110001000000011000010111",
    "0001111110001010001000111000100011101110011011011101110001001101"
    "1001101101010110010101101011001001111011100001110000111000111111",
    "0000111011110111101101000111000101000110010011110111010100011111"
    "1001001000010001011110111110110000101000101010100111110001001001",
    "1111110010011101111000000101011000100111101011111000001001010011"
    "0101000110111000110010000010000110100111100011101010111000101101",
    "1010100101000000100110011011011101110011000001000000010100111001"
    "0110101011111011101011011100101101110010100100011111111000010101",
    "1001100000111110111110100011101001001100010101010001011111110101"
    "0101111011011011000110000101011011001100001001010001011111010000",
    "1001010101001000111111001000010010110010101111110011010100010010"
    "1111001100011010010011000000101100011001001110100010000100000010",
    "1110010100110110110101101110101011000101110110011101011011001011"
    "1011011101000100111001000000001111011011011011111110000000101000",
    "0111011110000001011100101010110101011111001010100111011001010000"
    "0001101110000111001110011011001011110011101111111011010100110011",
    "1101110100111010010101000100001000001001000011101010101111001001"
    "0001010100101101101110101011101011010111001111000100010101101100",
    "1010101010100110000001001111001111111000001101001001100011100100"
    "1110000101101101110100011011110010000100111011100100101101010100",
    "0100100001011011001101001110001001011011001101011000101110001011"
    "1011110110001100010001011001111110010000100100010000010100101101",
    "1000000011111110000000101010011111001011111011111111001010011111"
    "1001100001000011000010111101000011010100101000100011100101111110",
    "1001111010000110111010000001110100100010010111010000011110100001"
    "0000101010010111110011111100011000001001111111100111000101100001",
    "0101111100110010111001011110110111001011010100100110010010111011"
    "1000001011111100010111111000001000001010011000000111010001000110",
    "1010000010010100111010011111011001000101100011110111000111010100"
    "1010000001100001000001111011000100000110111110011000100110100110",
    "1001111001010011001100001100110100100001111000010000000001011110"
    "0010111110100000011011011010110110101011011100010110111101100011",
    "0011101100011101111100100011010101011010000111001110110000101010"
    "0100001001101010001010110110101011111111000110111000100110011111",
    "0000100110010110011110001111011101110111111110101111001010100100"
    "0000011011110010001111101110010110101000101100000010000000000110",
    "0110001100000111110001111100001000011011000110010110001101110101"
    "0110101011010011001001101000101000011111001101010011100111010001",
    "0001011111110111111010111001010011111100101100001011101111001110"
    "1000010001100010101101101000111001110011111101010100111111001010",
]

def check_sync(codes):
    m, half, window = 128, 32, 64
    assert len(codes) == 24 == len(set(codes))
    assert all(len(c) == m and set(c) <= {"0", "1"} for c in codes)

    fingerprints = {}
    for a, left in enumerate(codes):
        for b, right in enumerate(codes):
            f = left[-half:] + right[:half]
            if f in fingerprints:
                return False, ("fingerprint collision", (a, b))
            fingerprints[f] = (a, b)

    for c, left in enumerate(codes):
        for d, right in enumerate(codes):
            joined = left + right
            for r in range(m):
                w = joined[r:r + window]
                if w in fingerprints:
                    if r != m - half or fingerprints[w] != (c, d):
                        return False, ("spurious occurrence", (c, d, r))
    return True, len(fingerprints)

print(check_sync(CODES))
\end{lstlisting}

\section{Finite verification of the interval encoding}\label{app:verification}

The accompanying Python script, which uses only the standard library,
checks the local interval network and its subset indices. It does not implement the asymptotic
tensor-contraction theorem. As an independent reference, an
oracle enumerates dictionaries of repeated proper substrings and computes
shortest parsings of right-hand sides by dynamic programming; it uses
neither tensor factors nor sparse-wire indices.

For 332 strings---every binary string of lengths three through seven and
every ternary string of lengths three and four, without duplicates---the
oracle supplies an optimum grammar. The script finds disjoint occurrences
of its dictionary, transfers the parsings to them, and splits the certificate
at several short-interval thresholds. With the resulting valid long patterns
fixed, it checks the network factors on optimum certificates and on mutated
short choices against a separate global certificate predicate. It verifies
costs, the $8K$ activity bound, and round trips through the grouped subset
encoding.

\begin{center}
\begin{tabular}{@{}lr@{}}
\toprule
Check & Count\\
\midrule
Strings with an independent optimum & 332\\
Local configurations compared with the global predicate & 5,004\\
Valid configurations among those checks & 561\\
Corrupted deterministic summary bits rejected & 553\\
Grouped subset encoding round trips & 562\\
Exhaustively enumerated selector assignments & 163,840\\
Valid certificates in the exhaustive enumeration & 60\\
Individual subset ranks and inverse ranks checked & 8,191\\
\bottomrule
\end{tabular}
\end{center}

The exhaustive selector tests use every binary string of length three and
the string $aaaa$, with threshold two. They compare every selector choice
with the independent certificate predicate and recover the oracle optimum.
They also check positive counting at every budget up to the input length
with the stated modulus bound. Subset ranking is checked for every subset
of universes of sizes zero through twelve.

Two additional cases exercise structural conditions directly. One has
matching coverage counts and available dictionary contents but an $R_1$
token crossing adjacent dictionary boundaries; the position check rejects
it. A nested example uses $A\to ab$, $B\to AAA$, and $S\to BBBB$, generating
$(ab)^{12}$ with size nine. At threshold four, its dictionary and start
rule contain long intervals while its nested definitions supply short
references. The network accepts it with the correct cost and activity bound.

These finite checks test the implementation of the reduction; they are
not part of the proof. The
general correctness and complexity statements rest on
Section~\ref{sec:sparse-interval-tensor} and its stated external theorem.
No Lean formalization of the new reduction is claimed. The script and
recorded results are in \texttt{verification/}; the original binary-code
check remains in \texttt{anc/check\_sync.py}.

\clearpage
\section*{Acknowledgements}
\paragraph{Use of generative AI.}
The author delegated the development of the binary-alphabet NP-completeness proof and of the tunable-alphabet reductions of Section~\ref{sec:eth} to OpenAI GPT-6 Pro, which produced the corresponding derivations and proofs. Generative AI tools were also used for language editing and manuscript preparation. The uniform sparse-interval tensor algorithm was developed through discussions with OpenAI GPT-5.6 Pro and GPT-6 Pro. The quantitative transfer analysis, independent finite verification, and revision integration were developed with assistance from OpenAI Codex. The author takes full responsibility for the mathematical content and the final manuscript.

\end{document}